\documentclass[a4paper, UKenglish, cleveref, numberwithinsect, thm-restate]{lipics-v2021}
\pdfoutput=1 %uncomment to ensure pdflatex processing (mandatatory e.g. to submit to arXiv)
\hideLIPIcs  %uncomment to remove references to LIPIcs series (logo, DOI, ...), e.g. when preparing a pre-final version to be uploaded to arXiv or another public repository

\title{Continuous Algebras with Hypotheses}

\author{Lukas Mulder}{Radboud University, The Netherlands}{}{}{} % \and My second affiliation, Country \and \url{http://www.myhomepage.edu} }{johnqpublic@dummyuni.org}{https://orcid.org/0000-0002-1825-0097}{(Optional) author-specific funding acknowledgements}%TODO mandatory, please use full name; only 1 author per \author macro; first two parameters are mandatory, other parameters can be empty. Please provide at least the name of the affiliation and the country. The full address is optional. Use additional curly braces to indicate the correct name splitting when the last name consists of multiple name parts.

\author{Damien Pous}{CNRS, LIP, ENS de Lyon, France}{}{https://orcid.org/0000-0002-1220-4399}{} % {joanrpublic@dummycollege.org}{[orcid]}{[funding]}

\author{Jana Wagemaker}{Radboud University, The Netherlands}{}{https://orcid.org/0000-0002-8616-3905}{}

\authorrunning{L. Mulder, D. Pous, J. Wagemaker} %TODO mandatory. First: Use abbreviated first/middle names. Second (only in severe cases): Use first author plus 'et al.'

\Copyright{Lukas Mulder, Damien Pous and Jana Wagemaker} %TODO mandatory, please use full first names. LIPIcs license is "CC-BY";  http://creativecommons.org/licenses/by/3.0/

\ccsdesc[500]{Theory of computation~Algebraic language theory}% TODO mandatory: Please choose ACM 2012 classifications from https://dl.acm.org/ccs/ccs_flat.cfm 

\keywords{Kleene algebra, complete lattices, languages, completeness}

\category{} %optional, e.g. invited paper

\relatedversiondetails{This article appeared in Proc. CONCUR 2026 and the present version contains the appendices with all proofs}{https://doi.org/10.4230/LIPIcs.CONCUR.2026.28} 

\funding{This work was partially supported by the Dutch research council (NWO) under grant no.\ VI.Veni.242.134 (VerHyp) and by the LABEX MILYON (ANR-10-LABX-0070)}%optional, to capture a funding statement, which applies to all authors. Please enter author specific funding statements as fifth argument of the \author macro.

\acknowledgements{We thank Paul Brunet and Amina Doumane for early discussions on this work.}

\nolinenumbers %uncomment to disable line numbering

\EventEditors{Ana Sokolova and Patrick Totzke}
\EventNoEds{2}
\EventLongTitle{37th International Conference on Concurrency Theory (CONCUR 2026)}
\EventShortTitle{CONCUR 2026}
\EventAcronym{CONCUR}
\EventYear{2026}
\EventDate{September 1--4, 2026}
\EventLocation{Liverpool, UK}
\EventLogo{}
\SeriesVolume{391}
\ArticleNo{28}
\usepackage{macros}

\begin{document}

\maketitle

\begin{abstract}
In the literature on Kleene algebra (KA), a number of variants have been proposed such as Kleene algebra with tests, commutative KA, bi-KA, and concurrent KA. 
The equational theories of some of these structures have then been studied in the presence of additional assumptions, called \emph{hypotheses}.
We propose a unifying framework encompassing all the previous structures, as well as regular tree languages. This is done by considering algebras ordered by complete lattices, where least fixpoints can be computed. We provide a canonical model consisting of closed languages, which we prove sound and complete with respect to all \emph{continuous} models. Then we study quasi-equational axiomatisations. It is illusory to hope for a generic axiomatisation which would be sound and complete for all instances. Instead, we provide a generic axiomatisation which we prove sound and we setup tools that make it possible to get complete ones in a modular way, building on previous works from the literature. We showcase these tools by proving new completeness results for commutative KA, bi-KA, and regular tree languages, in each case extended with various hypotheses.
\end{abstract}

\section{Introduction}
\label{sec:intro}

Kleene algebras (KA)~\cite{kleene1956representation,conway1971regular} are 
% algebraic 
structures with sequential composition, iteration and choice. They were first proposed as the algebra of regular expressions, and their axioms are sound and complete w.r.t. relational models and language models~\cite{boffa1990remarque, Kozen91, Krob91a}. Their equational theory is decidable via automata algorithms, which makes it possible to design automation tactics in proof assistants~\cite{bp:itp10:kacoq,krauss2012pearl,pous2013coq}. In such applications, we often want to reason under assumptions, or hypotheses. 
% For instance, to deduce that a given equation on relations holds provided some of them are transitive, or can be permuted. 
Unfortunately, the Horn theory of Kleene algebras is undecidable~\cite{Kozen02,Kuznetsov23} and we may only automate reasoning under specific forms of hypotheses~\cite{cohen94:ka:hypotheses,KozenS96,hardin2002elimination,kozen2014equations}.

In each case, proving completeness and extending decidability is a delicate issue. This has motivated the development of a general theory of \emph{Kleene algebra with hypotheses}~\cite{kozen2014equations,dkpp:fossacs19:kah,pous2024tools}. There, models of \emph{closed languages} are shown to characterise the Horn theory of \emph{star-continuous} Kleene algebras, and a notion of \emph{reduction} makes it possible to approach completeness and decidability proofs in a principled and modular way. 

Beside hypotheses, Kleene algebras have also been extended to deal with common programming or logical constructs: conditionals in Kleene algebra with tests (KAT)~\cite{kozen1997tests}, mutable tests~\cite{GKM14a:KATB}, concurrency in bi-KA~\cite{struth2014bikleene}, concurrent KA~\cite{hoare2011concurrent} and synchronous KA (SKA)~\cite{Prisacariu10,WagemakerBKR019},  intersection~\cite{dp:concur18:kl}, or converse~\cite{EB95,bloom1995notes}. %, residuals~\cite{Pratt90}.
Here again, completeness proofs are notoriously hard, and while some of the above extensions do fit---with some effort---into the framework of KA with hypotheses (e.g., KAT, SKA and KA with converse), some of them do not, typically when operations such as parallel composition or intersection are added, or when the required closures do not preserve regularity.
An extended framework was developed~\cite{KappeB0WZ20, kappethesis, wagemakerthesis} to deal with hypotheses in concurrent versions of KA, but a unifying theory is still lacking.

We propose such a theory in the present paper.
Before introducing this abstract theory, observe that the operations from KA can be organised into three categories:
\begin{enumerate}
\item dot $(\cdot)$ and one $(1)$, which form a monoid, % and yield words in the semanics;
\item plus $(+)$ and zero $(0)$, which form a semilattice, % and make it possible to obtain (finite) sets of words;
\item star, which is the least fixpoint of a function expressible from ${\cdot}, {+}$ and $1$ $(e^*=\mu x.1+e\cdot x)$.
\end{enumerate}
When moving from one variant of KA to another, we keep the semilattice operations, but we change the first layer and the set of allowed fixpoints. For instance, to get bi-KA, we use bimonoids and two fixpoint operators; to get commutative KA~\cite{conway1971regular, brunet2019commutative}, we simply require the monoid to be commutative. Moreover, the standard language models from the literature consist of sets of elements of the free algebraic structure from the first layer: words for KA, series-parallel pomsets for bi-KA, finite multisets for commutative KA.

Accordingly, our theory is parameterised by an arbitrary signature $\Sigma$ and a set of equations $E$ for the first layer. We define our syntax by adding plus and zero to the operations from $\Sigma$, as well as arbitrary fixpoints. We interpret the resulting expressions in \emph{continuous models}, which are models of $E$ over complete lattices and where all operations from $\Sigma$ are continuous. In particular, we get such a model by considering languages of \emph{atoms}: terms built only from $\Sigma$ and quotiented by $E$. 
The main requirement here is that the equations in $E$ should be linear and regular: each variable must appear exactly once on each side of an equation, so that the algebra of languages we obtain remains a model of $E$~\cite[Theorem~3]{shafaat1974varieties}. 
We show that every interpretation factors through this language model, witnessing its canonicity~(\Cref{thm:interpretation factors}).

At this point we have captured all the variants discussed above in a uniform way (and more, such as  regular tree languages~\cite{esik2010tree} by taking the empty set for $E$).
The next step consists of dealing with hypotheses. Following and generalising~\cite{dkpp:fossacs19:kah,KappeB0WZ20,pous2024tools}, we associate to every set $H$ of hypotheses a closure operator $H^*$ on languages. This operator makes it possible to refine the former language model to take hypotheses into account, and we prove the following characterisation (\Cref{thm:sound:complete}):
\begin{align*}
  \CL\models H\rightarrow e\leq f \iff \cinterp e _E \subseteq \cinterp f _E\enspace.
\end{align*}
In words, an inequation $e\leq f$ between expressions follows from a set $H$ of hypotheses in all continuous models of $E$ if and only if the $H$-closed language of $e$ is contained in that of $f$.
This characterisation is powerful as it reduces the validity of a Horn sentence in all continuous models to a language-theoretic problem. For instance, the right-hand side of the above equivalence often turns out to be decidable via automata algorithms.

It remains to study potential axiomatisations. It is easy to come up with quasi-equational axiomatisations $Q$ which are \emph{sound} for all continuous models:
\begin{align*}
  Q\proves H\rightarrow e\leq f \implies \CL\models H\rightarrow e\leq f\enspace.
\end{align*}
In particular, the ``naive'' axiomatisation consisting of the equations in $E$, the fact that symbols from $\Sigma$ distribute over finite joins, and the fact that fixpoint expressions are indeed least fixpoints, is always sound (\Cref{prop:rules:sound,lem:naive:sound}).

Obtaining the converse implication for all Horn sentences (i.e., \emph{Horn completeness}) is difficult, if not impossible~\cite{Kozen02,kozen2014equations}. Thus we focus on specific sets $H$ of hypotheses: we say that an axiomatisation $Q$ is \emph{complete for $H$ (on $E$)} if for all expressions $e,f$, we have
\begin{align*}
  Q\proves H\rightarrow e\leq f \implied \CL\models H\rightarrow e\leq f\enspace.
\end{align*}
Actually, completeness is difficult to achieve even when $H$ is empty.
Ésik has proved that the aforementioned naive axiomatisation is complete when $E$ is also empty~\cite{esik2010tree}.
Unfortunately, this axiomatisation is incomplete on monoids: there we need the axioms of Conway algebras~\cite{conway1971regular,Krob91a}, (left-handed) Kleene algebras~\cite{Kozen91,KozenS12,ddp:lpar18:lefthanded,Kappe23}, or Boffa algebras~\cite{Boffa95}.
Still, we can import all these completeness results for the empty set of hypotheses in our framework, as well as the ones for 
bi-KA on series-parallel pomsets~\cite{struth2014bikleene}, commutative KA on finite multisets~\cite{pilling1970algebra,conway1971regular}, KA with converse on involutive words~\cite{EB95}, and identity-free distributive Kleene lattices on series-parallel graphs~\cite{dp:concur18:kl}.

There are also completeness results for specific non-empty sets of hypotheses, which fit our framework:
on monoids, KA is complete for $\set{e\leq 0}$~\cite{cohen94:ka:hypotheses,kozen1997tests}, for 
$\set{a\leq w}$~\cite{kozen2014equations,dkpp:fossacs19:kah} (with $a$ one or a letter, and $w$ a word), for $\set{S=1}$~\cite{dkpp:fossacs19:kah} (with $S$ a sum of letters), and for $\set{e\leq \top\mid \forall e}$ and $\set{e\leq \top,~e\leq e\top e\mid \forall e}$~\cite{pous2022top,PousW24};
on involutive monoids, KA with converse is complete for $\set{e\leq ee^\circ e\mid \forall e}$~\cite{EB95};
on pomsets, bi-KA is complete for $\set{a\leq w\mid |w|\geq 1}$ (with $a$ a letter, $w$ a word) and for the \emph{exchange law} $\set{(e\parallel f)\cdot (g\parallel h)\leq (e\cdot g)\parallel (f\cdot h)\mid \forall e,f,g,h}$~\cite{kappethesis}, which yields completeness of concurrent KA~\cite{kappe2018concurrent}.
% + easy ones ($a\leq 1$, $aa\leq a$, $a^\circ\leq a$) ? last one probably not in the literature
%
To deal with the diversity of these completeness results and be able to combine some of them together, we follow the literature and develop an appropriate notion of \emph{reduction}~\cite{KappeB0WZ20,prw:ramics21:mkah,pous2024tools,brunet26:representations}.

\subparagraph*{Contributions} 

We propose a framework for studying least fixpoints in algebras on complete lattices. We propose a model of languages which is canonical for continuous models of sets of linear and regular equations (\Cref{sec:framework}). We show how add hypotheses in such structures, by defining a notion of closed languages that capture their Horn theory (\Cref{sec:closure}). We finally study axiomatisations (\Cref{sec:axiomatisation}): we provide a naive axiomatisation which is always sound, as well as tools to establish completeness, encompassing all results from the literature (\Cref{sec:reductions}). We illustrate our framework by proving new completeness results for commutative KA, bi-KA,
and regular tree languages in the presence of specific hypotheses (\Cref{sec:examples}). 

\subparagraph*{On the distinction between equations and hypotheses}
In the proposed framework, equations $(E)$ and hypotheses $(H)$ are quite close from the end-user perspective. The main differences are the following:
\begin{itemize}
\item equations are rather constrained (universally quantified, only between terms, linear regular, not inequations), but they can be integrated into the language semantics in a direct manner;
\item in contrast, hypotheses are arbitrary (possibly about specific variables, between expressions, not necessarily linear/regular, can be inequations), but they have to be dealt with using the machinery of closed languages.
\end{itemize}

% 
%% DAMIEN: not yet
% Besides the generality of the framework, an important contribution is that it makes it possible to study hypotheses at the right abstraction level, and to obtain reductions from one abstraction level to another.
% TODO: makes this clear with examples (various formulations of KAT atoms and guarded strings ; Kleene algebra with converse ; need E,E' reductions?)
%

\section{Preliminaries}
\label{sec:preliminaries}

Given a set $X$, we write $\pset X$ for the set of its subsets (its \emph{powerset}), and $X^n$ for its $n$-fold Cartesian product.
Given a partial map $f$, an element $x$ and a value $y$, we write $\override f x y$ for the partial map whose value is $y$ at $x$, and the same as $f$ on the other elements. 
We also may write $\vec x = x_1,\dots, x_n$ for a sequence of elements of a set $X$.

An \emph{algebra} for a signature $\Sigma$ is a set $A$ together with a function $s_A\colon A^n\to A$ for each symbol of arity $n$ in $\Sigma$.
Symbols of arity zero are \emph{constants}; they are just elements in the algebras.
When it is clear from the context, we omit the subscript in $s_A$.
The \emph{powerset lifting} $\pset A$ of such an algebra $A$ is the algebra obtained by lifting its operations to sets in a pointwise manner: for a symbol $s$ of arity $n$ and for all $L_i \in \pset A$ we set
\begin{align*}
  s_{\pset A}(L_1, \dots, L_n) = \{ s_A(a_1, \dots, a_n) \mid \forall i, a_i \in L_i \}\enspace.
\end{align*}
Given two algebras $A$ and $B$, a \emph{homomorphism from $A$ to $B$} is a function $f \colon A \to B$ such that $f(s_A(a_1,\dots, a_n)) = s_B(f(a_1), \dots, f(a_n))$ for all symbols $s$ of arity $n$ and all elements $a_1,\dots,a_n$ from $A$.

A \emph{partial order} is a set $S$ with a binary relation $\leq$ which is reflexive, transitive, and antisymmetric.
% We write $\vec x=x_1,\dots,x_n$ for an ordered sequence, we write $\vec x\leq\vec y$ when $n=m$ and $x_i\leq y_i$ for all indices $i$.
The functions from a set into a partial order $S$, are partially ordered pointwise ($f\leq g\eqdef \forall x,~f(x)\leq g(x)$). 
A function $f\colon S \to S'$ between two partial orders is \emph{monotone} if $f(x)\leq f(y)$ in $S'$ whenever $x\leq y$ in $S$.
% The subset of such monotone functions is written $\hom{S}{S'}$.
A \emph{partially ordered algebra} is an algebra whose carrier is a partial order and whose  operations are monotone.

A \emph{semilattice} is a partial order $S$ with a least element $(\perp)$ and all binary joins $(\vee)$.
If every subset $S'$ of $S$ has a supremum $\bigvee S'$, then the semilattice is called a \emph{complete lattice}, and we denote it by $(S, \bigvee)$. Every monotone function $f$ on a complete lattice admits a least fixpoint, written $\mu f$, which is also its least pre-fixpoint~\cite{tarski1955lattice}.
The powerset $\pset X$ of a set $X$, ordered by inclusion, is the free complete lattice on $X$ in the sense that every function $f\colon X\to S$ from $X$ to a complete lattice $S$ extends to a (unique) complete lattice homomorphism $\bigvee f\colon\pset X\to S$ by setting $\bigvee f(Y) \eqdef \bigvee_{y\in Y} f(y)$.
A function $f$ between complete lattices is \emph{continuous} if it preserves all suprema: $f\bigvee Y=\bigvee_{x\in Y}f(x)$ for all $Y$.

\section{Framework}
\label{sec:framework}
\pratendSetLocal{category=framework}

\subsection{Terms and expressions}
\label{ssec:terms:expressions}
We fix throughout the paper a signature $\Sigma$ and a set $X$ of \emph{variables}.
\begin{definition}\label{def:term}
  We write $\term{X}$ for the set of \emph{terms} with variables in $X$, which are defined by the following grammar.
  \begin{align*}
    t ::= x \mid s(t_1, \dots, t_n) \tag{$x \in X,~s^{(n)} \in\Sigma$}
  \end{align*}
\end{definition}
Note that terms are trees whose leaves are either constants from $\Sigma$ or variables from $X$;
they form the free algebra over $X$, in the sense that every \emph{valuation} $\sigma\colon X\to A$ into an algebra $A$ for $\Sigma$ extends to a unique homomorphism $\ext\sigma\colon\term{X}\to A$.

We extend terms to \emph{expressions} by adding a constant $0$, a binary symbol $+$, and a least fixpoint operator $\mu$. For the latter, we assume a second set $R$ of \emph{recursion variables}, countably infinite and disjoint from $X$.
\begin{definition}\label{def:expression}
  \emph{Expressions} are defined by the following grammar
  \begin{align*}
    e,f ::= x \mid s(e_1,\dots,e_n) \mid 0 \mid e+f \mid \mu y.e \tag{$x \in X\cup R$, $y\in R$, $s^{(n)}\in\Sigma$}
  \end{align*}
  The recursion variable $y$ in the expression $\mu y.e$ is \emph{bound} in $e$.
  An expression is \emph{closed} if all its recursion variables are bound.
  An \emph{$x$-expression} is an expression containing at most one unbound recursion variable, $x$.
  We denote the set of closed expressions by $\expr{X}$.
\end{definition}
Every term can be seen as a closed expression: we have a canonical embedding $\term{X} \hookrightarrow \expr{X}$.
Also note that closed expressions may contain variables in $X$, which are never bound, and that $x$-expressions over $X$ can also be seen as closed expressions over $X \cup \set{x}$.
This change of perspective is necessary for some proofs found in the appendix.
However, throughout the main text we keep $X$ fixed and omit it from notation for the sake of clarity.

A \emph{(closed) substitution} is a partial map from variables in $X\cup R$ to closed expressions; given an expression $e$ and a substitution $\rho$, we write $e\rho$ for the \emph{application} of $\rho$ to $e$:
\phantomsection\label{page:subst}
\begin{align*}
  \begin{aligned}
    0\rho &\eqdef 0\\
    (e+f)\rho &\eqdef e\rho + f \rho\\
    (s(e_1, \dots, e_n))\rho &\eqdef s(e_1\rho, \dots, e_n\rho)\\
  \end{aligned}&&
  \begin{aligned}
    x\rho &\eqdef
              \begin{cases}
                \rho(x)&\text{if $\rho(x)$ is defined,}\\
                x&\text{otherwise}\\
              \end{cases}
    \\
    (\mu x.e)\rho &\eqdef \mu x.e(\rho\backslash x)\enspace.
  \end{aligned}
\end{align*}
(In the last case, $\rho\backslash x$ is the substitution obtained from $\rho$ by removing its value at $x$, if any.)
The above definition captures at once the two main kinds of substitutions we need:
\begin{itemize}
\item we let $\theta$ range over substitutions of variables from $X$ only; 
\item we write $\sub{e}{x}{f}$ for substitutions of an $x$-expression $e$ where the recursion variable $x\in R$ is substituted for $f$, resulting in a closed expression.
\end{itemize}
\medskip
In order to interpret expressions, we use algebras ordered by complete lattices:
\begin{definition}
  A \emph{complete algebra} is an algebra partially ordered on a complete lattice.
\end{definition}

\begin{definition}\label{def:interpretation}
  Let $A$ be a complete algebra.  To every expression $e$, we
  associate a monotone function
  $\evalb{\_}{e}\colon ((X\cup R) \to A)\to A$, by structural
  induction on $e$:
  \begin{align*}
    % [\_]_{\_} \colon \expr{X} &\to \hom{(X \to \L)}{\L} \\
    \begin{aligned}
      \evalb\rho{0} &\eqdef {\perp}\\
      \evalb\rho{e+f}&\eqdef\evalb\rho{e} \vee \evalb\rho{f}\\
      \evalb\rho{s(e_1, \dots, e_n)} &\eqdef s_A(\evalb\rho{e_1}, \dots, \evalb\rho{e_n})\\
    \end{aligned}&&
    \begin{aligned}
      \evalb\rho{x} &\eqdef \rho(x)\\
      \evalb\rho{\mu x.e} &\eqdef \mu a.\evalb{\override\rho x a} e\enspace.
    \end{aligned}
  \end{align*}
  Given a valuation $\sigma\colon X\to A$ and a closed expression $e$, we write $\eval\sigma e$ for $\evalb{\sigma,R\mapsto\perp} e$, which we call the \emph{interpretation of $e$ under $\sigma$}.
\end{definition}
In $\evalb\rho{\mu x.e}$, $a \mapsto \evalb{\override\rho x a} e$ is a monotone function by induction; hence we can take its least fixed point in $A$.
Moreover, since taking least fixed points is a monotone operation, $\rho\mapsto \mu a.\evalb{\override\rho x a} e$ is also monotone, as required.
We overload $\ext\sigma$: when $\sigma$ is a valuation into a complete algebra, $\ext\sigma$ can be applied to terms and closed expressions; this is safe thanks to the following lemma:

\begin{lemmaE}
  \label{lem:interpretation:term:expr}
  \prooflink
  For all valuations $\sigma \colon X \to A$ into a complete algebra $A$, the following diagram commutes.
  \[\begin{tikzcd}[ampersand replacement=\&]
      {\term{X}} \& {\expr{X}} \\
      \& A
      \arrow[hook, from=1-1, to=1-2]
      \arrow["\ext\sigma", from=1-2, to=2-2]
      \arrow["\ext\sigma"', from=1-1, to=2-2]
    \end{tikzcd}\]
\end{lemmaE}
\begin{proofE}
  Both functions from $\term{X}$ to $A$ are homomorphisms and they agree on variables.
  The statement follows by unicity of homomorphic extensions.
\end{proofE}
In the sequel we just call \emph{expressions} the closed expressions. Conversely, we systematically call \emph{open expressions} the expressions which are not assumed to be closed.

\subsection{Models}
\label{ssec:models}

We distinguish terms and expressions because we will interpret expressions as languages of terms quotiented by a set $E$ of equations.
% For instance, to get Kleene algebras and standard word languages, we will use the signature $\set{\cdot^{(2)},1^{(0)}}$ together with the monoid equations.
%
A \emph{(term) equation} is a pair of terms written $u=v$. Variables in equations are universally quantified: $u=v$ is \emph{satisfied} in an algebra $A$ if for all valuations $\sigma\colon X\to A$ in that algebra, we have $\eval\sigma{u}=\eval\sigma{v}$.
A \emph{model} of a set of equations is an algebra satisfying these equations.
% \begin{gray}
%   The class of models of a given set of equations is also known as a \emph{variety}.
% \end{gray}
\begin{example}
  \label{ex:models}
  \emph{Monoids} are the class of models given by the signature
  $\set{\cdot^{(2)},1^{(0)}}$ and the equations $\set{x \cdot 1 = x, 1 \cdot x = x, x \cdot (y \cdot z) = (x \cdot y) \cdot z}$.
  If we add commutativity $(x \cdot y = y \cdot x)$ to the previous equations, we obtain \emph{commutative monoids}. \emph{Bimonoids} \cite{bloom1996shuffle,struth2014bikleene} are given by the signature $\set{\cdot^{(2)},\parallel^{(2)},1^{(0)}}$ together with monoid equations for $(\cdot, 1)$ and the commutative monoid equations for $(\parallel, 1)$.
  When $E$ is empty, models are arbitrary algebras. 
  % Finally, we remark that for an arbitrary signature $\Sigma$ without any equations, the resulting class of models are precisely the \emph{algebras}.
\end{example}

An equation $u=v$ is \emph{linear} if every variable appears at most once on either side;
it is \emph{regular} if every variable appearing in $u$ also appears in $v$ and vice versa.
\begin{theorem}[{\cite[Theorem 3]{shafaat1974varieties},\cite{gautam1957validity, bleicher1973permanence}}]\label{thm:linreg:pset}
  The class of models of a set $E$ of equations is closed
  under powerset lifting if and only if all equations in $E$ are
  linear and regular.
\end{theorem}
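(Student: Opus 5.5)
We prove the two implications separately. Throughout, we use that a singleton $\set{a}$ in $\pset A$ behaves like $a$: the lifted operations satisfy $s_{\pset A}(\set{a_1},\dots,\set{a_n})=\set{s_A(a_1,\dots,a_n)}$. For a term $t$ and a valuation $\tau\colon X\to\pset A$, a straightforward induction on $t$ gives the key description
\begin{align*}
  \eval\tau t=\set{\eval\sigma t \mid \sigma\colon X\to A,~\forall x,~\sigma(x)\in\tau(x)}\enspace.
\end{align*}
This holds exactly when no variable occurs twice in $t$, so that choices at different leaves are independent. When a variable occurs several times, we only get a superset, since different occurrences may select different elements.

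\emph{If direction.} Let $u=v$ be linear and regular and hold in $A$, and let $\tau\colon X\to\pset A$. Since the variables of $u$ and $v$ coincide, we may assume both $\eval\tau u$ and $\eval\tau v$ range over the same valuations $\sigma$. One caveat: if some variable outside the equation is assigned $\emptyset$ this does not matter, but if $\tau(x)=\emptyset$ for a variable occurring in $u$, then both sides are empty, by regularity and the description above. Both sides then equal the set of values of $\eval\sigma u=\eval\sigma v$ over the same family of $\sigma$, so they agree. Note that linearity is needed for the description to hold, and regularity ensures both sides use the same variables, including in the empty-set case.

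\emph{Only if direction.} Suppose the class of models of $E$ is closed under powerset lifting. We build suitable models of $E$ and lift them, which is the main obstacle. Take the free model $F$ of $E$ on $X$, that is, terms modulo the equational theory of $E$. Then $\pset F$ must satisfy $E$.

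\emph{Non-regular equations.} Suppose $x$ occurs in $u$ but not in $v$. Interpret $x$ as $\emptyset$ and every other variable as a nonempty singleton. Then $\eval\tau u=\emptyset$ while $\eval\tau v$ is a singleton, which is nonempty. This is a contradiction. The only way out is the degenerate case where $E$ implies $x=y$; then every model is trivial, and that case must be handled separately. Presumably the intended statement excludes trivial theories, or counts such $E$ as equivalent to some linear regular set.

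\emph{Non-linear equations.} Suppose $x$ occurs twice in $u$. Set $\tau(x)=\set{y,z}$ for fresh variables $y,z$ and let the other variables be singletons of themselves. Then $\eval\tau u$ contains a term mixing $y$ and $z$ across the occurrences of $x$. Equality with $\eval\tau v$ then forces the mixed term to equal an instance of $v$ in which $x$ is substituted uniformly. Showing that this is impossible in $F$, unless $E$ is equivalent to a linear regular set, requires care. Strictly speaking, the theorem must be read up to equivalence of equational theories; for example, $x\cdot x=x\cdot x$ is harmless. So we would phrase the result as: the variety is closed under lifting iff it is axiomatised by linear regular equations. Then we show that every consequence of $E$ holding in $\pset F$ for $F$ free yields only linear regular identities, via this mixing argument.

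Since this is a cited classical result, in the paper we would simply refer to \cite{shafaat1974varieties} for the converse direction. The if direction, which is the only one used later, is the short induction sketched above.
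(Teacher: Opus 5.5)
The paper gives no proof of this theorem. It imports it from the cited works, uses only the ``if'' direction (to make $\lang X_E$ a model of $E$), and illustrates why both conditions are needed with the two counterexamples in the appendix.

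Your ``if'' direction is the standard argument and is correct, with one fix: state the key description with $\sigma$ ranging over valuations of the variables of $t$ only. As written, with $\sigma\colon X\to A$, a variable outside $t$ sent to $\emptyset$ makes the right-hand side empty, while $\eval\tau t$ need not be. With the restricted description, empty values inside $t$ are handled uniformly and no case split is needed.

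The ``only if'' direction has a genuine gap, and one claim in it is wrong.

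\emph{Regularity: there is no exception for trivial theories.} The one-element algebra $1$ is a model of every $E$. In $\pset{1}=\set{\emptyset,\set{*}}$, suppose $x$ occurs in $u$ but not in $v$. Sending $x$ to $\emptyset$ and every other variable to $\set{*}$ evaluates $u$ to $\emptyset$ and $v$ to $\set{*}$. This is precisely the paper's counterexample for $x=y$. Hence closure forces every equation of $E$, and indeed every consequence of $E$, to be regular, literally.

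\emph{Linearity: your diagnosis is right, but the argument is missing.} You are right that linearity can only hold up to equivalence. The example $E=\set{x\cdot x=x\cdot x}$ refutes the literal reading, and so does adding the non-linear consequence $xxy=xyx$ to the commutative semigroup axioms. The correct statement is ``closed iff $E$ is equivalent to a set of linear regular equations''. However, your mixing argument is not carried out, and as sketched it would not work:
\begin{itemize}
\item two fresh values cannot separate three or more occurrences of $x$;
\item $v$ may itself contain $x$ several times, so its instances mix too;
\item the claim that consequences of $E$ ``yield only linear regular identities'' is false as stated (see $xxy=xyx$).
\end{itemize}

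The missing step is as follows.
\begin{enumerate}
\item Let $u=v$ be a consequence of $E$, and let $n_x,m_x$ count the occurrences of $x$ in $u$ and $v$.
\item Pick $k\geq\max(n_x,m_x)$ fresh generators $y_{x,1},\dots,y_{x,k}$ for each $x$, and set $\tau(x)=\set{y_{x,1},\dots,y_{x,k}}$ in $\pset{F}$, where $F$ is the free model on these generators.
\item Then $\eval\tau u$ contains the class of a linear term $t_0$, obtained by placing distinct generators at distinct occurrences, with $t_0[y_{x,i}\mapsto x]=u$.
\item Since $\eval\tau u=\eval\tau v$, we get $t_0\equiv_E t_1$ for some $t_1$ with $t_1[y_{x,i}\mapsto x]=v$.
\item The equation $t_0=t_1$ is a consequence of $E$, hence regular. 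So $t_1$ contains all $n_x$ distinct generators that $t_0$ uses for $x$, whence $m_x\geq n_x$.
\item Symmetrically $n_x\geq m_x$, so $t_1$ is linear too.
\end{enumerate}
Every consequence of $E$ is therefore a substitution instance of a linear regular consequence, and the latter axiomatise the same class. Without this step, your proposal establishes only the direction the paper actually uses.
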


\begin{example}
  Monoids, commutative monoids and bimonoids are defined by linear and regular equations, hence these classes of models are closed under powerset lifting.
  % This is also trivially true for trees, since they have no equations. We provide counter-examples illustrating the need for both linearity and regularity in Appendix~\ref{app:framework}.
\end{example}

\begin{textAtEnd}
  To illustrate \Cref{thm:linreg:pset}, we provide two counter-examples showing that linearity and regularity of $E$ are required to obtain that the powerset lifting of a model of $E$ remains a model of $E$.
  \begin{counterexample}
    Consider a binary symbol $\vee$ and the idempotency equation $x \vee x = x$. This equation is not linear. Consider any idempotent algebra with two distinct elements $a$ and $b$, and such that $a\vee b\not\in\set{a,b}$ (e.g., sets of natural numbers with union, and any two distinct singletons). 
    Its powerset lifting is not idempotent: we have $\set{a,b}\vee\set{a,b}=\set{a\vee a,a\vee b,b\vee a,b\vee b}=\set{a,a\vee b,b\vee a,b}\neq\set{a,b}$. 
  \end{counterexample}

  \begin{counterexample}
    Consider the empty signature together with the non-regular equation $x = y$, stating every algebra has at most one distinct element.
    An algebra for this equation is the one element set $1 = \set{*}$.
    Its powerset lifting $\pow(1) = \set{\emptyset, 1}$ has two distinct elements, hence does not satisfy $x = y$.
  \end{counterexample}
\end{textAtEnd}

In addition to the signature $\Sigma$ and the variables $X$, we fix in the sequel a set $E$ of linear and regular term equations. When used alone, \emph{model} implicitly means model of $E$.
% Sometimes we use \emph{models} to refer implicitly to models of $E$.

The equations in $E$ generate an equational theory $\equiv_E$ on terms: the least congruence relation containing the closure of $E$ under arbitrary term substitutions; it coincides with the set of equations satisfied in all models of $E$.
We write $\atom X_E$ for  the set of terms quotiented by $\equiv_E$, called \emph{atoms}.
Atoms form the free model of $E$ over $X$: $\atom X_E$ is a model of $E$ and every valuation $\sigma\colon X\to M$ into a model $M$ of $E$ extends to a unique homomorphism $\ext\sigma\colon\atom X_E\to M$.

\begin{example}
    In the case of monoids, atoms are words with letters in $X$.
    For commutative monoids, atoms are finite multisets over $X$.
    For bimonoids, atoms are \emph{series-parallel pomsets (partially ordered multiset)}~\cite[Lemma~8]{struth2014bikleene}.
    Finally, when $E$ is empty, atoms are just terms.
\end{example}

Note that the $\ext\sigma$ notation is overloaded again: when $\sigma$ is a valuation into a model, $\ext\sigma$ can be applied to both terms and atoms, which is safe thanks to the following lemma.
\begin{lemmaE}
    \label{lem:interpretation:term:atom}
    \prooflink
    For all valuations $\sigma \colon X \to M$ into a model $M$ of $E$, the following diagram commutes:
    \[\begin{tikzcd}[ampersand replacement=\&]
        {\term{X}} \& \\
        {\atom X_E} \& M
        \arrow[two heads, from=1-1, to=2-1]
        \arrow["{\ext\sigma}", from=1-1, to=2-2]
        \arrow["{\ext\sigma}"', from=2-1, to=2-2]
    \end{tikzcd}\]
\end{lemmaE}
\begin{proofE}
  Both functions $\term X \to M$ in the diagram above are homomorphisms which agree on $X$, hence they must coincide by unicity of homomorphic extensions.
\end{proofE}
Given an atom $a$, we write $\underline a$ for some term in its equivalence class. We use this only in contexts where the specific choice of term is irrelevant, typically thanks to the above lemma.

Since $E$ is linear and regular, $\pset{\atom X_E}$ is a model of $E$ by \Cref{thm:linreg:pset}. Its elements are sets of atoms, which we call \emph{languages}. The notion of language thus depends on $E$: e.g. for monoids, where atoms are words, languages are sets of words.
We abbreviate $\pset{\atom X _E}$ as $\lang X _E$.

\begin{definition}
  A \emph{continuous model of $E$} is a complete algebra which is a model of $E$ and whose operations are continuous in each argument: for all symbols $s$ of arity $n+1$, for all elements $a_1,\dots, a_n$, and for all subsets $L$ of elements, we have
%  \damien{remove since continuous is defined now?}
  \begin{align*}
    s(a_1,\dots,\bigvee L,\dots,a_n) = \bigvee_{a\in L} s(a_1,\dots,a,\dots,a_n)\enspace.
  \end{align*}
\end{definition}
\vspace{-.8\baselineskip}%DAMIEN: why do we get a spurious line??
\begin{propositionE}
  \prooflink
  For all models $M$ of $E$, % the powerset lifting
  $\pset M$ is a continuous model of $E$.
\end{propositionE}
%% APPENDIX {powersetiscontinuousalgebra}
\begin{proofE}
  $\pow(M)$ is a model of $E$ by~\Cref{thm:linreg:pset} since we have assumed that $E$ is linear and regular. Moreover, by definition of the algebra structure on $\pow(M)$, all operations preserve arbitrary unions in each argument.
\end{proofE}
In particular, $\lang X_E$ is a continuous model of $E$, which we call the \emph{standard language model of $E$}. Expressions can be interpreted in this model by mapping variables to singleton languages.

\begin{definition}
    \label{def:standard language interpretation}
    The \emph{standard language interpretation} of an expression $e$ is the language defined by
    $\interp{e}_E \eqdef \eval\eta{e}$ for $\eta\colon x \mapsto \set x$, where $x$ in $\{x\}$ is the equivalence class of $x$ in $\atom X_E$.
\end{definition}
In the case of monoids, when restricting fixpoint expressions to be of the shape $\mu x.1+e\cdot x$, the function $\interp{\_}_E$ coincides with the standard language interpretation of regular expressions.

Given a valuation $\sigma \colon X \to M$ into a complete model $M$, we have defined its extension $\ext\sigma$ to both expressions and atoms; we also define its extension to languages, as follows:
\begin{align*}
  \lext\sigma\colon \lang X_E&\to M\\
  L&\mapsto \bigvee_{a\in L}\ext\sigma(a)
\end{align*}
We establish a few properties of such extensions to languages in Appendix~\ref{app:framework}; in particular:
\begin{textAtEnd}
  Before proving \Cref{thm:interpretation factors}, we establish a few properties about extensions $\lext\sigma$ of valuations $\sigma$ to languages.
\end{textAtEnd}
\begin{propositionE}
  \label{lem:sigma:is:an:algebra:morphism}
  \prooflink
  For all valuations $\sigma \colon X \to M$ into a continuous model $M$, the function $\lext\sigma$ is a continuous homomorphism from $\lang X_E$ to $M$.
\end{propositionE}
\begin{proofE}
  Continuity is straightforward, and for all symbols $s$ of arity $n$ we have
  \begin{align*}
    \lext\sigma(s_{\lang X}(L_1, \dots, L_n))
    &= \bigvee_{a \in s_{\lang X}(L_1, \dots, L_n)} \ext{\sigma}(a)\\
    &= \bigvee_{a_i \in L_i} \ext\sigma(s_{\atom X}(a_1, \dots, a_n))\\
    &= \bigvee_{a_i \in L_i} s_M(\ext\sigma(a_1), \dots, \ext\sigma(a_n))\\
    \tag{continuity of $M$}
    &= s_M(\bigvee_{a_1 \in L_1} \ext\sigma(a_1), \dots, \bigvee_{a_n \in L_n} \ext\sigma(a_n))\\
    \tag*\qedhere
    &= s_M(\lext\sigma(L_1), \dots, \lext\sigma(L_n))\enspace.
  \end{align*}
\end{proofE}

\begin{textAtEnd}
  \begin{lemma}
      \label{lem:language substitution commutes with join}
      For all sets of variables $Y$, for all valuations $\sigma' \colon X \to \lang Y$ into languages and for all valuations $\sigma \colon Y \to M$ into a continuous model $M$, we have:
      \begin{align*}
        \lext\sigma \circ \lext{\sigma'} = \lext{\sigma''}
        \quad\text{where}\quad \sigma''=\lext\sigma\circ\sigma'\enspace.
      \end{align*}
  \end{lemma}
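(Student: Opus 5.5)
The plan is to show that both sides agree on singleton languages of atoms, and that both sides are continuous. Since every language is the union of its singletons, this suffices.

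First, the right-hand side. By definition, $\lext{\sigma''}(L)=\bigvee_{a\in L}\ext{\sigma''}(a)$, so it preserves arbitrary unions. For the left-hand side, $\lext{\sigma'}\colon\lang X\to\lang Y$ preserves unions, as its definition is a union indexed by $L$. Moreover, $\lext\sigma$ is continuous by \Cref{lem:sigma:is:an:algebra:morphism}. Hence the composite sends $\bigcup_{a\in L}\set a$ to $\bigvee_{a\in L}\lext\sigma(\lext{\sigma'}(\set a))$. It thus remains to prove $\lext\sigma(\ext{\sigma'}(a))=\ext{\sigma''}(a)$ for every atom $a\in\atom X_E$, using that $\lext{\sigma'}(\set a)=\ext{\sigma'}(a)$.

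For the singleton case I would use unicity of homomorphic extensions from the free model $\atom X_E$. The map $\ext{\sigma'}\colon\atom X_E\to\lang Y$ is a homomorphism into the model $\lang Y_E$, which is a model by \Cref{thm:linreg:pset}. By \Cref{lem:sigma:is:an:algebra:morphism}, $\lext\sigma\colon\lang Y_E\to M$ is a homomorphism. So the composite $\lext\sigma\circ\ext{\sigma'}\colon\atom X_E\to M$ is a homomorphism into the model $M$. On a variable $x$ it gives $\lext\sigma(\sigma'(x))=\sigma''(x)$. The map $\ext{\sigma''}$ is the unique homomorphism extending $\sigma''$, so the two agree on all atoms.

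I expect no real obstacle. The only care needed is the bookkeeping of overloaded notation: checking that $\ext{\sigma'}$ on atoms into $\lang Y$ coincides with $\lext{\sigma'}$ on singletons, and that $M$ is a model of $E$ so that freeness of $\atom X_E$ applies. If preferred, the atom case can instead be phrased on terms via \Cref{lem:interpretation:term:atom}, by induction on a representative term $\underline a$. In the inductive case one uses the homomorphism property of $\lext\sigma$ on $s_{\lang Y}$.
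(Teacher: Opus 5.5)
Your proposal is correct and follows essentially the same route as the paper. Both reduce, by unfolding the join-based definitions (i.e.\ continuity of $\lext\sigma$ and of $\lext{\sigma'}$), to the equality $\lext\sigma\circ\ext{\sigma'}=\ext{\sigma''}$ on atoms. Both then conclude by unicity of homomorphic extensions from the free model $\atom X_E$, using \Cref{lem:sigma:is:an:algebra:morphism} for the homomorphism property of $\lext\sigma$.
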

  \begin{proof}
    Let $L$ be a language in $\lang X$. Unfolding definitions, we have
    \begin{align*}
      \lext{\sigma} (\lext{\sigma'}(L))
      &= \lext{\sigma} (\bigcup_{w \in L}  \ext{\sigma'}(w)) 
        = \bigvee_{w \in L} \bigvee_{a \in \ext{\sigma'}(w)}\ext\sigma(a)
        = \bigvee_{w \in L} \lext\sigma(\ext{\sigma'}(w))
      \\
      \lext{\sigma''}(L) &= \bigvee_{w \in L}\ext{\sigma''}(w)
    \end{align*}
    Hence, it suffices to verify that $\lext\sigma\circ\ext{\sigma'}=\ext{\sigma''}$.
    These functions are homomorphisms by definition and \Cref{lem:sigma:is:an:algebra:morphism}, and they agree on $X$ by definition, thus they are equal. 
  \end{proof}

  \longnotations

  Below we write $\eta_X$ for the function $x\mapsto \set x$ creating singleton sets from elements of $X$. This function has type $X\to \pset X$, we also freely use it as a function of type $X\to \pow (\expr X)$, or $X\to \lang X$, implicitly using the injections from variables into expressions and atoms, respectively. With this convention, we have $\interp e=\eval{\eta_X}e$ by definition.

  Given two languages $L\in \lang{X\uplus\set x}$ and $L'\in\lang Y$, we write $\sub{L}{x}{L'}$ for  $\lext{(\eta_X, x \mapsto L')}(L)$, which is a language in $\lang{X\cup Y}$. We call this operation language substitution. As an instance of the previous lemma, we obtain:
  \begin{corollary}
    \label{cor:language substitution commutes with join}
    For all sets of variables $Y$, for all languages $L\in \lang {X \uplus \set{x}}$, for all valuations $\sigma \colon X\cup Y \to M$ into a continuous model $M$ and for all languages $L'\in \lang Y$, we have
    \begin{align*}
      \lext\sigma(\sub{L}{x}{L'}) = \lext{\sigma'}(L)
      \quad\text{where}\quad\sigma'=\override\sigma x {\lext\sigma(L')}\enspace.
    \end{align*}    
  \end{corollary}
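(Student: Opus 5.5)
This corollary is a direct instance of \Cref{lem:language substitution commutes with join}. I would apply that lemma with the variable set $X\uplus\set x$ in place of $X$, and with the language-valued valuation $\sigma' \eqdef (\eta_X, x\mapsto L')\colon X\uplus\set x\to\lang{X\cup Y}$. Here $\eta_X$ is read as landing in $\lang{X\cup Y}$ via the injection of atoms over $X$ into atoms over $X\cup Y$. The target-side valuation is $\sigma\colon X\cup Y\to M$. By definition of language substitution, $\sub{L}{x}{L'}=\lext{\sigma'}(L)$. The lemma then gives
\begin{align*}
  \lext\sigma(\sub{L}{x}{L'}) = \lext\sigma(\lext{\sigma'}(L)) = \lext{\sigma''}(L)
  \quad\text{with}\quad \sigma''=\lext\sigma\circ\sigma'\enspace.
\end{align*}

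It remains to identify $\sigma''$ with $\override\sigma x{\lext\sigma(L')}$ pointwise on $X\uplus\set x$.
\begin{itemize}
\item At $x$, we get $\sigma''(x)=\lext\sigma(\sigma'(x))=\lext\sigma(L')$, which matches the definition.
\item At a variable $y\in X$, we get $\sigma''(y)=\lext\sigma(\set y)=\bigvee_{a\in\set y}\ext\sigma(a)=\ext\sigma(y)=\sigma(y)$.
\end{itemize}
The last step uses that $\ext\sigma$ extends $\sigma$ on variables, which holds via \Cref{lem:interpretation:term:atom}. Hence the two valuations coincide and the equation follows.

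The only point needing care is bookkeeping rather than mathematics. One must check that the variable sets line up: $L$ lives over $X\uplus\set x$, $L'$ over $Y$, and the result over $X\cup Y$. One must also check that the implicit injection of $\atom X_E$ into $\atom{X\cup Y}_E$ commutes with $\ext\sigma$. The latter follows by the usual uniqueness-of-homomorphic-extension argument, since both sides are homomorphisms agreeing on generators. This is the step I would state explicitly; I expect no real obstacle.
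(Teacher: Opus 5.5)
Your proposal is correct and matches the paper, which obtains this corollary directly as an instance of \Cref{lem:language substitution commutes with join}, applied to the valuation $(\eta_X, x\mapsto L')$ that defines language substitution. Your pointwise identification of $\lext\sigma\circ(\eta_X, x\mapsto L')$ with $\override\sigma x{\lext\sigma(L')}$ on $X\uplus\set x$ spells out the step the paper leaves implicit; note that $\ext\sigma(y)=\sigma(y)$ holds by definition of the homomorphic extension, so you do not need \Cref{lem:interpretation:term:atom} there.
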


  We first prove the right-to-left inequation from \Cref{thm:interpretation factors}.
  Observe that this part of the statement is equivalent to $``\forall a\in\interp e,~\ext\sigma a\leq \ext\sigma e$'', which we get as an instance of the following lemma, by choosing $Y=X$ and $\rho=\eta_X$.
  \begin{lemma}
    \label{thm:key1}
    Let $\sigma\colon Y\to M$ be a valuation from a set $Y$ of variables into a continuous model $M$.
    For all sets $X$ of variables, for all expressions $e\in \expr X$, and for all valuations $\rho\colon X\to\lang Y$, we have
    \begin{align*}
      \forall a\in\eval\rho e,~ \ext\sigma(a)\leq \eval\theta e
      \quad\text{where}\quad\theta=\lext\sigma\circ\rho\enspace.
    \end{align*}    
  \end{lemma}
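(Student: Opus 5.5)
The plan is to prove the equivalent statement $\lext\sigma(\eval\rho e)\leq\eval\theta e$. Since $\lext\sigma(L)=\bigvee_{a\in L}\ext\sigma(a)$, this is the same as the pointwise bound in the statement. I will prove it by structural induction on $e$. Because of the $\mu$ case, the induction has to be over open expressions with arbitrary environments. So I strengthen the claim: for every open expression $e$ and every $\rho\colon X\cup R\to\lang Y$, we have $\lext\sigma(\evalb\rho e)\leq\evalb{\theta}e$ with $\theta=\lext\sigma\circ\rho\colon X\cup R\to M$. The statement is then the instance where $\rho$ sends $R$ to $\emptyset$, since $\lext\sigma(\emptyset)=\perp$ makes $\theta$ send $R$ to $\perp$ as required.

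The non-binding cases are routine and follow from \Cref{lem:sigma:is:an:algebra:morphism}.
\begin{itemize}
\item For a variable $x$, both sides equal $\lext\sigma(\rho(x))=\theta(x)$.
\item For $0$, we have $\lext\sigma(\emptyset)=\perp$.
\item For $e+f$, $\lext\sigma$ preserves binary unions, and we conclude by the induction hypothesis and monotonicity of $\vee$.
\item For $s(e_1,\dots,e_n)$, $\lext\sigma$ is a homomorphism, so $\lext\sigma(s(\evalb\rho{e_1},\dots))=s_M(\lext\sigma(\evalb\rho{e_1}),\dots)$. We conclude by the induction hypotheses and monotonicity of $s_M$.
\end{itemize}

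The main step is the case $\mu x.e$. Let $F(L)=\evalb{\override\rho xL}e$ on languages and $G(m)=\evalb{\override\theta xm}e$ on $M$, and put $m_0=\mu G$. We need $\lext\sigma(\mu F)\leq m_0$. Rather than using a transfinite approximation argument, I will use the characterisation of $\mu F$ as the least pre-fixpoint. Define the language $K=\set{a\mid \ext\sigma(a)\leq m_0}$; it is the largest language with $\lext\sigma(K)\leq m_0$. I then show that $K$ is a pre-fixpoint of $F$.
\begin{itemize}
\item By the induction hypothesis applied to $e$ with environment $\override\rho xK$, whose composite with $\lext\sigma$ is $\override\theta x{\lext\sigma(K)}$, we get $\lext\sigma(F(K))\leq G(\lext\sigma(K))$.
\item Since $\lext\sigma(K)\leq m_0$ and $G$ is monotone, $G(\lext\sigma(K))\leq G(m_0)=m_0$.
\item Hence every atom $a\in F(K)$ satisfies $\ext\sigma(a)\leq m_0$, that is, $F(K)\subseteq K$.
\end{itemize}
Therefore $\mu F\subseteq K$, and monotonicity of $\lext\sigma$ gives $\lext\sigma(\mu F)\leq\lext\sigma(K)\leq m_0$.

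The only delicate point is setting up the induction hypothesis correctly. It must quantify over all environments, including the recursion variables, so that it can be applied with $x\mapsto K$. This is why I phrase the lemma for open expressions with environments into $\lang Y$, instead of closed expressions with valuations of $X$ only.
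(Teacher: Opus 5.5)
Your proof is correct and follows the paper's argument. Both use the same structural induction, and in the $\mu$ case the very same language $K$ of atoms whose $\sigma$-image lies below the least fixpoint, shown to be a pre-fixpoint via the induction hypothesis at $x\mapsto K$ and monotonicity. The differences are cosmetic: you carry environments over $X\cup R$ instead of letting $X$ vary, and in the $s$ case you use that $\lext\sigma$ is a homomorphism (which relies on continuity of $M$), whereas the paper's pointwise computation needs only monotonicity.
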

  \begin{proof}
    Fix $Y$, $M$, and $\sigma\colon Y\to M$.
    We proceed by induction and case analysis on $e$.
    \begin{itemize}
    \item if $e$ is a variable $x$ then $\eval\theta e=\lext\sigma(\rho(x))=\bigvee_{a\in \rho(x)}\ext\sigma(a)$, whence $\ext\sigma(a)\leq \eval\theta e$ for all $a\in\eval\rho e=\rho(x)$;
    \item if $e=0$ then the statement is vacuously true since $\eval\rho e=\emptyset$;
    \item if $e=e_1+e_2$ then an atom in $\eval\rho e$ belongs either to $\eval\rho {e_1}$ or $\eval\rho {e_2}$, and we use the induction hypothesis accordingly;
    \item if $e=s(e_1,\dots,e_n)$, then let $a\in\eval\rho e$; by definition, we have $a=s(a_1,\dots,a_n)$ with $a_i\in\eval\rho {e_i}$ for all $i$; then we just compute:
      \begin{align*}
        \tag{$\ext\sigma$ is a homomorphism}
        \ext\sigma(a) &= s(\ext\sigma(a_1),\dots,\ext\sigma(a_n))\\
        \tag{by induction hypothesis}
                      &\leq s(\eval\theta {e_1},\dots,\eval\theta {e_n})\\
        \tag{$\ext\theta$ is a homomorphism}
                      &= \eval\theta {s({e_1},\dots,{e_n})} = \eval\theta e
      \end{align*}      
    \item finally, if $e=\mu x.e'$ for an $x$-expression $e'$, then let $L=\{a\mid \ext\sigma(a)\leq \eval\theta e\}$ so that we have to show $\eval\rho{\mu x.e'}\subseteq L$.
      Since $\eval\rho{\mu x.e'}$ is a least fixpoint, it suffices to show $\eval{\rho'}{e'}\subseteq L$, where $\rho'=\rho,x\mapsto L$.
      By induction hypothesis on $e'$ (seen as an expression on $X\cup\set x$), we get $\eval{\rho'}{e'}\subseteq L'$ where $L'=\{a\mid \ext\sigma(a)\leq \eval{\theta'} {e'}\}$ with $\theta'=\lext\sigma\circ\rho'$.

      Thus it suffices to show $L'\subseteq L$, which in turn follows from $\eval{\theta'} {e'}\leq \eval\theta e$.
      
      To prove this latter inequality, set $r\eqdef\eval\theta e$ and observe that we have $\eval\theta e=\eval{\theta_r}{e'}$ with $\theta_r=\theta,x\mapsto r$ by unfolding the definition and the fixpoint once.
      By monotonicity of the interpretation of $e'$ (\Cref{def:interpretation}), it suffices to prove $\theta'\leq\theta_r$. These two valuations coincide with $\theta$ on $X$, and at $x$ we have
      \begin{align*}
        \theta'(x) =\lext\sigma(L)=\bigvee_{a\in L}\ext\sigma a\leq r=\theta_r(x)\enspace
      \end{align*}
      where the inequality holds by definition of $L$.\qedhere
    \end{itemize}
  \end{proof}
  \shortnotations%
  The above lemma is a cornerstone of our development: it readily yields half of \Cref{thm:interpretation factors} as stated below, but also an unfolding rule for the standard language interpretation of fixpoints, which we need to prove the second half of \Cref{thm:interpretation factors}.
  \begin{corollary}
    \label{cor:key1}
    For all expressions $e\in\expr X$, for all valuations $\sigma \colon X \to M$ into a continuous model $M$, we have $\lext\sigma\interp e\leq\eval\sigma e$.    
  \end{corollary}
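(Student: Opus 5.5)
This is a direct instance of \Cref{thm:key1}, with $Y \eqdef X$ and $\rho \eqdef \eta_X$. We then unfold the definitions on both sides.

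First, with this choice the left-hand evaluation $\eval\rho e$ is, by definition, $\eval{\eta_X} e = \interp e$ (\Cref{def:standard language interpretation}).

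Second, the derived valuation $\theta = \lext\sigma\circ\eta_X$ coincides with $\sigma$ on $X$. Indeed, for every variable $x$ we have $\lext\sigma(\set x)=\bigvee_{a\in\set x}\ext\sigma(a)=\ext\sigma(x)=\sigma(x)$, the last step because $\ext\sigma$ extends $\sigma$ on atoms. Hence $\eval\theta e=\eval\sigma e$.

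\Cref{thm:key1} therefore gives $\ext\sigma(a)\leq\eval\sigma e$ for every atom $a\in\interp e$. So $\eval\sigma e$ is an upper bound of $\set{\ext\sigma(a)\mid a\in\interp e}$, and since $\lext\sigma\interp e$ is by definition the supremum of exactly this set, we conclude $\lext\sigma\interp e\leq\eval\sigma e$.

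There is no real obstacle here, since all the work was done in the induction of \Cref{thm:key1}. The only point to check is that the implicit injections of variables into atoms, used in writing $\eta_X\colon X\to\lang X$, are compatible with $\ext\sigma$. This compatibility is exactly what \Cref{lem:interpretation:term:atom} provides.
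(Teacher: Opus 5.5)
Your proposal is correct and matches the paper's proof, which is exactly the instantiation of \Cref{thm:key1} with $Y=X$ and $\rho=\eta_X$. Your check that $\lext\sigma\circ\eta_X=\sigma$ and the final supremum step just spell out what the paper leaves implicit.
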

  \begin{proof}
    Choose $\rho=\eta_X$ in \Cref{thm:key1}.
  \end{proof}
  \begin{corollary}
    \label{cor:pfp}
    For all $x$-expressions $e$, we have $\sub{\interp e}x{\interp{\mu x.e}}\subseteq {\interp{\mu x.e}}$.    
  \end{corollary}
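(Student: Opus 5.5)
We derive the corollary directly from \Cref{thm:key1}, instantiated in the standard language model itself: as the continuous model we take $M=\lang{X}_E$, with the valuation $\sigma=\eta_X$ (extended by $x\mapsto\set x$ where needed). The $x$-expression $e$ is viewed as a closed expression over $X\uplus\set x$, and we take $\rho=(\eta_X,\,x\mapsto\interp{\mu x.e})$, a valuation into $\lang X$.

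First we check that $\eval\rho e=\sub{\interp e}x{\interp{\mu x.e}}$. Since $\interp e=\eval{\eta_{X\uplus\set x}}e$, \Cref{lem:language substitution commutes with join} (equivalently \Cref{cor:language substitution commutes with join}) rewrites the right-hand side as $\lext{(\eta_X,x\mapsto\interp{\mu x.e})}(\interp e)$. By \Cref{cor:key1} this is at most $\eval\rho e$, which already gives one inclusion. This is the only inclusion we actually need.

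Next, \Cref{thm:key1} with this $\rho$ and with $\sigma$ the identity-like valuation $\eta_X$ into $M=\lang X$ gives, for every atom $a\in\eval\rho e$, the bound $\ext\sigma(a)\le\eval\theta e$ with $\theta=\lext\sigma\circ\rho$. Here $\ext\sigma(a)=\set a$, and $\lext\sigma$ is the identity on languages. Hence $\theta=\rho$, and the bound is trivial; this route is circular and useless. Instead we unfold the fixpoint directly: by \Cref{def:interpretation}, $\interp{\mu x.e}$ is the least fixpoint of $L\mapsto\eval{(\eta_X,x\mapsto L)}e$. Therefore $\eval\rho e=\interp{\mu x.e}$ exactly.

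Combining the two steps,
\[
\sub{\interp e}x{\interp{\mu x.e}}=\lext{(\eta_X,x\mapsto\interp{\mu x.e})}(\interp e)\subseteq\eval\rho e=\interp{\mu x.e}.
\]
The first equality is the definition of language substitution, the inclusion is \Cref{cor:key1} applied in the continuous model $\lang X$ with the valuation $(\eta_X,x\mapsto\interp{\mu x.e})$, and the final equality is the fixpoint unfolding.

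The main obstacle is bookkeeping. We must treat $e$ as an expression over $X\uplus\set x$, so that \Cref{cor:key1} applies. We must also check that $\interp{e}$ there means $\eval{\eta_{X\uplus\set x}}e$, and that setting the recursion variables to $\perp$ is harmless because $e$ has no other free recursion variable.
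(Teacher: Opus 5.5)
Your final chain is correct and is exactly the paper's proof: $\sub{\interp e}x{\interp{\mu x.e}}$ is by definition $\lext\sigma\interp e$ with $\sigma=\override{\eta_X}x{\interp{\mu x.e}}$, \Cref{cor:key1} in the continuous model $\lang X$ bounds this by $\eval\sigma e$, and unfolding the fixpoint gives $\eval\sigma e=\interp{\mu x.e}$. The abandoned detour through \Cref{thm:key1} with $\sigma=\eta_X$ can be deleted, and \Cref{lem:language substitution commutes with join} is not needed, since the first rewriting is just the definition of language substitution.
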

  \begin{proof}
    Unfold the fixpoint in the right-hand side and choose $\sigma=\override{\eta_X}x{\interp{\mu x.e}}$ in \Cref{cor:key1}.
  \end{proof}
  We can finally proceed with the proof of the main theorem from this section:
\end{textAtEnd}

The main result of this section is that all interpretations factor through the standard language interpretation:
\begin{theoremE}%
  \label{thm:interpretation factors}
  \prooflink
  For all valuations $\sigma \colon X \to M$ into a continuous model $M$ of $E$, and for all expressions $e$, we have 
  $\eval\sigma{e} = \lext\sigma\interp e\depE$. In other words, the following diagram commutes:
  \[\begin{tikzcd}[ampersand replacement=\&]
    \& {\expr{X}} \& \\
    {\lang X\depE} \&\& M
    \arrow["{\interp\_\depE}"', from=1-2, to=2-1]
    \arrow["{\ext\sigma}", from=1-2, to=2-3]
    \arrow["{\lext\sigma}"', from=2-1, to=2-3]
  \end{tikzcd}\]
  % In other words, for all expressions $e$, we have
  % $
  % \eval\sigma{e} = \bigvee_{a \in \interp{e}_E}\eval\sigma{a}
  % $.
\end{theoremE}%
\begin{proofE}%
  \longnotations%
  Let $M$ be a continuous model and recall that $\lext\sigma{\interp e}=\bigvee_{a \in \interp{e}}\eval\sigma{a}$ by definition.
  By \Cref{cor:key1}, it suffices to prove that for all sets $X$ of variables, for all expressions $e\in\expr X$, and for all valuations $\sigma \colon X \to M$, we have $\eval\sigma e\leq\lext\sigma\interp e$.
  We proceed by induction on $e$:
  \begin{itemize}    
    \item if $e$ is a variable $x$, then $\eval\sigma e=\sigma(x)=\lext\sigma(\set x)=\lext\sigma\interp e$;
    \item if $e=0$, then $\eval\sigma e={\perp}=\lext\sigma(\emptyset)=\lext\sigma\interp e$;
    \item if $e=e_1+e_2$, then
      \begin{align*}
        \eval\sigma {e}
        &=   \eval\sigma {e_1}+\eval\sigma {e_2}\\
        \tag{by induction hypothesis}
        &\leq \lext\sigma\interp{e_1}+\lext\sigma\interp{e_2}\\ 
        &= \lext\sigma(\interp{e_1}+\interp{e_2})
        = \lext\sigma\interp e 
      \end{align*}
    \item if $e=s(e_1,\dots,e_n)$, then
      \begin{align*}
        \eval\sigma {e}
        &= s(\eval\sigma {e_1},\dots,\eval\sigma {e_n}\\
        \tag{monotonicity and induction hypothesis}
        &\leq s(\lext\sigma\interp{e_1},\dots,\lext\sigma\interp{e_n})\\
        \tag{\Cref{lem:sigma:is:an:algebra:morphism}}        
        &= \lext\sigma (s(\interp{e_1},\dots,\interp{e_n})
        = \lext\sigma\interp e 
      \end{align*}
    \item finally, if $e=\mu x.e'$ for an $x$-expression $e'$, then let $r=\lext\sigma\interp e$.
      Since $\eval\sigma e$ is a least fixpoint, it suffices to show $\eval{\sigma_r}{e'}\leq r$ for $\sigma_r=\sigma,x\mapsto r$ to obtain $\eval\sigma e\leq \lext\sigma\interp e$ as required. We proceed as follows: 
      \begin{align*}
        \eval{\sigma_r}{e'}
        \tag{induction hypothesis on $e'$ with $\sigma_r$}
        &\leq\lext{\sigma_r}\interp{e'}\\
        \tag{\Cref{cor:language substitution commutes with join}}
        &=\lext{\sigma}\sub{\interp{e'}}x{\interp{\mu x.e'}}\\
        \tag{\Cref{cor:pfp}}
        &\leq\lext{\sigma}{\interp{\mu x.e'}}
        =r
      \end{align*}
      (Note again that the $x$-expression $e'$ is seen as an expression over $x\cup\set x$ when we apply the induction hypothesis.)
      \qedhere
  \end{itemize}
\end{proofE}%
\begin{textAtEnd}%
  As a corollary of the previous theorem, we can characterise the standard language interpretation of fixpoint expressions as least fixpoints, using language substitution:
  \begin{corollary}
    \label{cor:hip}
    For all $x$-expressions $e$, for all languages $L$, we have 
    $\sub{\interp{e}}{x}{L}=\eval{\eta_L}e$
    where $\eta_L=\override{\eta_X}x L$.
  \end{corollary}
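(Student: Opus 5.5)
The statement follows directly from \Cref{thm:interpretation factors}, applied in the standard language model itself. I will take the continuous model $M$ to be $\lang{X\cup\set x}$. That this is a continuous model of $E$ is the earlier proposition that $\pset M$ is a continuous model for every model $M$, instantiated with atoms over $X\cup\set x$. I will also view the $x$-expression $e$ as a closed expression over $X\cup\set x$, as the paper explicitly allows.

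I will choose the valuation $\sigma\eqdef\eta_L=\override{\eta_X}x L$ from $X\cup\set x$ into this model, where $\eta_X$ sends each variable to its singleton language. \Cref{thm:interpretation factors} then gives
\[
\eval{\eta_L}e=\lext{\eta_L}\interp e ,
\]
where $\interp e$ is the standard language interpretation of $e$ over $X\cup\set x$. This is exactly the variable assignment used in $\interp e$, since there $x$ is mapped to $\set x$ and treated like an ordinary variable.

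Next I will unfold the left-hand side of the statement. By definition of language substitution, $\sub{\interp e}x L=\lext{(\eta_X,\,x\mapsto L)}(\interp e)$, and the valuation $(\eta_X,\,x\mapsto L)$ is precisely $\eta_L$. The two sides therefore coincide, which concludes the argument.

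The only real care point is bookkeeping. Language substitution as defined produces a language in $\lang{X\cup Y}$ with $L\in\lang Y$. I will take $Y=X\cup\set x$, or simply $Y\supseteq X$, so that everything lives in one language model and $\eta_X$ injects into it. I must also check that treating $x$ as a free variable of $X\cup\set x$ agrees with the interpretation of $e$ as an open expression with recursion variable $x$ assigned $L$. This holds because \Cref{def:interpretation} treats variables in $X$ and unbound recursion variables identically, via $\evalb\rho x=\rho(x)$. I expect no genuine obstacle beyond these identifications of carriers.
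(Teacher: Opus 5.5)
Your proof is correct and matches the paper's intended argument: the paper states this as an immediate corollary of \Cref{thm:interpretation factors}, applied in a standard language model with the valuation $\eta_L$, together with the observation that $\sub{\interp e}x L$ is by definition $\lext{\eta_L}\interp e$. You were right to flag the bookkeeping (viewing $e$ as an expression over $X\cup\set x$ and working in a common ambient language model containing $L$), and it causes no difficulty.
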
%
  \begin{corollary}
    \label{cor:hop}
    For all $x$-expressions $e$, we have $\interp {\mu x.e} = \mu L.\sub{\interp e}x L$.
  \end{corollary}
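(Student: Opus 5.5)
The plan is to reduce this corollary to \Cref{cor:hip} by unfolding the definition of the standard language interpretation. By \Cref{def:standard language interpretation} and \Cref{def:interpretation}, we have $\interp{\mu x.e} = \eval{\eta_X}{\mu x.e} = \mu L.\eval{\override{\eta_X}x L}{e} = \mu L.\eval{\eta_L}{e}$, where $\eta_L=\override{\eta_X}x L$ is exactly the valuation appearing in \Cref{cor:hip}. Here we use the convention that the $x$-expression $e$ is viewed as a closed expression over $X\cup\set x$, and that the environment for the remaining recursion variables is irrelevant because $e$ has no other free recursion variable.

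The main step is then to show that the two monotone maps $L\mapsto\eval{\eta_L}{e}$ and $L\mapsto\sub{\interp e}xL$ on $\lang X$ agree pointwise. This is precisely \Cref{cor:hip}. Once they agree as functions, they have the same least fixpoint, and the statement follows.

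To make the argument self-contained, I would also justify \Cref{cor:hip}. It is an instance of \Cref{thm:interpretation factors}, applied with the variable set $X\uplus\set x$, the continuous model $M=\lang X$ (a continuous model of $E$ by the proposition on powerset liftings), and the valuation $\sigma=\eta_L\colon X\uplus\set x\to\lang X$. The theorem gives $\eval{\eta_L}{e}=\lext{\eta_L}\interp e$. The right-hand side is by definition the language substitution $\sub{\interp e}xL=\lext{(\eta_X,x\mapsto L)}(\interp e)$.

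The only delicate point is bookkeeping about variable sets. The theorem was stated for a fixed $X$, but its proof is uniform in the set of variables, so it may be instantiated at $X\uplus\set x$ with a valuation landing in languages over $X$. I expect no genuine obstacle beyond checking that $\interp e$ here denotes the interpretation over $X\uplus\set x$, in which $x$ is mapped to the singleton $\set x$, as required by the definition of language substitution.
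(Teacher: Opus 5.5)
Your proposal is correct and follows the paper's proof: unfold $\interp{\mu x.e}$ as $\mu L.\eval{\eta_L}e$ and conclude with \Cref{cor:hip}. Your justification of \Cref{cor:hip} is also the intended one, namely \Cref{thm:interpretation factors} instantiated at variable set $X\uplus\set x$, in the continuous model $\lang X$, with valuation $\eta_L$.
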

  \begin{proof}
    By definition, we have $\interp {\mu x.e}=\mu L.\eval{\eta_L} e$ for $\eta_L=\override{\eta_X}x L$. We may thus conclude with \Cref{cor:hip}.
  \end{proof}

  We can characterise the language of an expression as a set of equivalence classes under $E$, where we use $[t]_E$ to denote the equivalence class of term $t$ under $E$:
  \begin{corollary}\label{cor:equivclass}
    For all expressions $e\in\expr X$, $\interp e _E =\set{[t]_E\mid t\in \interp e_\emptyset}$.
  \end{corollary}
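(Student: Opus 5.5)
The idea is to apply \Cref{thm:interpretation factors} to the quotient map. Let $q\colon \atom X_\emptyset\to\atom X_E$ be the map sending a term to its class $[t]_E$. Since atoms for the empty set of equations are just terms, $q$ is the homomorphic extension $\ext\eta$ of $\eta\colon x\mapsto x$ into the model $\atom X_E$. Now define the valuation $\sigma\colon X\to\lang X_E$ by $\sigma(x)=\set{[x]_E}$, i.e.\ $\sigma=\eta_X$ seen as a valuation into the standard language model of $E$.

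We then regard $\lang X_E$ as a continuous model of the empty set of equations. It is a complete algebra for $\Sigma$ whose operations are continuous in each argument, and every algebra satisfies $\emptyset$, so this is legitimate. Applying \Cref{thm:interpretation factors} with $E:=\emptyset$, $M:=\lang X_E$ and valuation $\sigma$ gives
\begin{align*}
  \eval\sigma e = \lext\sigma\interp e_\emptyset = \bigcup_{t\in\interp e_\emptyset}\ext\sigma(t)\enspace,
\end{align*}
where $\ext\sigma\colon\term X\to\lang X_E$ is the unique homomorphism extending $\sigma$.

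The left-hand side is exactly $\interp e_E$ by \Cref{def:standard language interpretation}, because $\sigma$ is the singleton valuation $\eta$ used there. It therefore remains to show that $\ext\sigma(t)=\set{[t]_E}$ for every term $t$. Both $t\mapsto\ext\sigma(t)$ and $t\mapsto\set{[t]_E}$ are homomorphisms $\term X\to\lang X_E$: the second is the composite of the quotient homomorphism with the singleton map, and singletons are preserved by the lifted operations because $s_{\pset A}(\set{a_1},\dots,\set{a_n})=\set{s_A(a_1,\dots,a_n)}$. The two homomorphisms agree on variables, so they coincide by uniqueness of homomorphic extensions from the free algebra $\term X$. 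Substituting this into the display yields $\interp e_E=\bigcup_{t\in\interp e_\emptyset}\set{[t]_E}=\set{[t]_E\mid t\in\interp e_\emptyset}$.

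The only delicate point is the bookkeeping of the two roles played by \Cref{thm:interpretation factors}: the theorem is instantiated at the empty equational theory while the target model is the $E$-language model. This requires checking that the theorem's hypotheses are met for $E=\emptyset$, which they trivially are since $\emptyset$ is linear and regular. Everything else is a routine uniqueness-of-homomorphisms argument.
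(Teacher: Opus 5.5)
Your proof is correct and follows the paper's argument essentially step for step. You instantiate \Cref{thm:interpretation factors} at the empty theory, with the $E$-language model as target and the singleton valuation $x\mapsto\set{[x]_E}$, and then identify $\ext\sigma(t)$ with $\set{[t]_E}$ by uniqueness of homomorphisms out of $\term X$; your explicit checks that $\lang X_E$ is a continuous model of $\emptyset$ and that singletons are preserved by the lifted operations spell out what the paper leaves implicit.
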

  \begin{proof}
  Define $\eta_E(x)\eqdef\{[x]_E\}$ and observe that $\eval{\eta_E} {\_}=\interp{\_}_E$~(\Cref{def:standard language interpretation}). 
  Applying~\Cref{thm:interpretation factors}, we obtain that $\interp e _E=\eval{\eta_E}e=\lext{\eta_E}\interp e _\emptyset=\bigcup_{t\in\interp e _\emptyset} \eval {\eta_E} t$. Thus it suffices to show $\eval{\eta_E}t =\set{[t]_E}$ for all terms $t$. This follows immediately as both $\eval {\eta_E}{\_}$ and $\set{[\_]}$ are homomorphisms from $\term{X}$ to $\lang X_E$, and they agree on $X$.
  \end{proof}
\end{textAtEnd}
\Cref{thm:interpretation factors} allows us to decompose $\eval\sigma{e}$ using the language $\interp{e}_E$.
This makes later proofs easier, since we can focus on atoms rather than expressions. 
We also obtain as a corollary the following characterisation of the inequational theory of continuous models:
\begin{corollaryE}
  \label{cor:sound:complete}
  \prooflink
  The standard language interpretation is sound and complete with respect to continuous models of $E$:
  for all expressions $e,f$, 
  \begin{equation*}
    \CL \models e \leq f \iff \interp{e}\depE \subseteq \interp{f}\depE\enspace.
  \end{equation*}
\end{corollaryE}
\begin{proofE}
  The left-to-right implication holds by definition, since $\interp{\_}$ is an interpretation in a continuous model. For the converse implication, if $\interp{e} \subseteq \interp{f}$ then for all interpretations $\sigma$ into a continuous model we have $\eval\sigma e\leq \eval\sigma f$ by \Cref{thm:interpretation factors}, whence $\CL \models e \leq f$.
\end{proofE}

\subsection{Inequations and Horn sentences}
\label{ssec:inequations}

An \emph{inequation} is a pair of expressions, written $e\leq f$.
A \emph{Horn sentence} $H\rightarrow e\leq f$ is a set of inequations $H$ called the \emph{hypotheses} together with an inequation $e\leq f$ called the \emph{conclusion}. Note that we allow Horn sentences with infinitely many hypotheses.

\begin{remark}
  In the present work it is more convenient to put the emphasis on inequations rather than on equations, but both are interderivable: an equation $e=f$ can be encoded as the conjunction of $e\leq f$ and $f\leq e$, and thanks to binary joins, an inequation $e\leq f$ can be encoded as the equation $e+f=f$.
\end{remark}

\begin{definition}\label{def:lattice models}
  Let $H$ be a set of hypotheses, and let $e,f$ be two expressions.
  A valuation $\sigma \colon X \to M$ into a complete algebra \emph{satisfies} the inequation $e\leq f$ if $\eval\sigma{e}\leq\eval\sigma{f}$; it satisfies $H$ if it satisfies all inequations in $H$.
  We write $M \models H \rightarrow e \leq f$
  if all valuations $\sigma \colon X \to M$ satisfying $H$ also satisfy $e \leq f$.
  We write 
  $\CL \models H \rightarrow e \leq f$
  if the above holds for all continuous models $M$ of $E$.
  When $H$ is empty, we write $\models e \leq f$ instead of $\models \emptyset \rightarrow e \leq f$.
\end{definition}
% The above definition immediately generalises to classes of complete algebras, but we are only interested in continuous models in the present work.

Note how the universal quantification on all valuations $\sigma$ is shared by the hypotheses $H$ and the conclusion $e\leq f$: we are interested in the universal Horn theory and when the signature contains a binary operation $\cdot$, we typically have
$\CL\models x\cdot x\leq x \rightarrow x\cdot(x\cdot x)\leq x$, but not
$\CL\models x\cdot x\leq x \rightarrow y\cdot(y\cdot y)\leq y$.

By \Cref{thm:interpretation factors} we can readily reduce the analysis of the Horn theory to the case where the left-hand side of the conclusion is a term rather than an arbitrary expression.
\begin{corollaryE}
    \label{cor:compare:expr:term1}
    \prooflink
    For all continuous models $M$ of $E$, for all hypotheses $H$, and for all expressions $e,f$, we have:
    $
    M \models H \rightarrow e \leq f \iff \forall a \in \interp{e}\depE,~ M \models H \rightarrow \underline a \leq f
    $.
\end{corollaryE}
\begin{proofE}
  By definition, $M \models H \rightarrow e \leq f$ if and only if for all valuations $\sigma \colon X \to M$ satisfying $H$, we have $\eval\sigma{e} \leq \eval\sigma{f}$.
  For all $\sigma$ we have $\eval\sigma{e} = \bigvee_{a \in \interp{e}} \eval\sigma{a}$ by \Cref{thm:interpretation factors}, which is equivalent to $\forall a \in \interp{e}$, $\eval\sigma{a} \leq \eval\sigma{f}$,
  which in turn is equivalent to $\forall a \in \interp{e}$, $\eval\sigma{\underline a} \leq \eval\sigma{f}$ by \Cref{lem:interpretation:term:atom}.
\end{proofE}

\section{Closed language interpretation}
\label{sec:closure}
\pratendSetLocal{category=closure}

We fix in the next two sections a set $H$ of hypotheses. We define an operator $H^* \colon \lang X_E\to \lang X_E$ which we use to alter the standard language interpretation to take $H$ into account. We then prove soundness and completeness of this new interpretation, relatively to $H$:
\begin{theorem}
  \label{thm:sound:complete}
  For all expressions $e,f$, we have:
  \begin{equation*}
    \CL \models H \rightarrow e \leq f \iff \cinterp{e}_E \subseteq \cinterp{f}_E\enspace.
  \end{equation*}
\end{theorem}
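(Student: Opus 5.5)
The statement does not yet say what the closure $H^*$ is, so I first fix a definition and then prove both directions through it. Call a \emph{context} an atom $c$ over $X\uplus\set{\Box}$ in which the hole $\Box$ occurs exactly once, and write $c[L]$ for the language substitution $\sub{\set c}{\Box}{L}$. A language $L$ is \emph{$H$-closed} if, for every hypothesis $e\leq f$ in $H$, every context $c$ and every valuation $\rho\colon X\to\lang X_E$,
\[
  c[\eval\rho f]\subseteq L \ \text{ implies }\ c[\eval\rho e]\subseteq L .
\]
I would take $H^*(L)$ to be the least $H$-closed language containing $L$. It exists because $H$-closed languages are closed under arbitrary intersections. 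I then set $\cinterp e_E\eqdef H^*(\interp e_E)$. Note that $H^*$ is a monotone, extensive and idempotent closure operator.

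\textbf{Right-to-left direction.} Fix a continuous model $M$ and a valuation $\sigma$ satisfying $H$.

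\begin{enumerate}
\item The key claim is $\lext\sigma(H^*L)=\lext\sigma(L)$ for every language $L$. To prove it, set $K=\set{a\mid \ext\sigma(a)\leq \lext\sigma(L)}$. We have $L\subseteq K$, so it suffices to show that $K$ is $H$-closed.
\item Suppose $c[\eval\rho f]\subseteq K$. Write $\theta=\lext\sigma\circ\rho$ and $\hat c$ for the unary map $m\mapsto \ext{(\sigma,\Box\mapsto m)}(c)$ on $M$.
\item Using \Cref{lem:sigma:is:an:algebra:morphism}, \Cref{lem:language substitution commutes with join} and \Cref{thm:interpretation factors}, we get $\lext\sigma(c[\eval\rho f])=\hat c(\eval\theta f)$. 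Hence $\hat c(\eval\theta f)\leq \lext\sigma(L)$.
\item The hypothesis $e\leq f$ must be used under the valuation $\theta$, not only under $\sigma$. So I would either define the closure using only substitutions $\rho$ obtained from $X$-valuations compatible with $\sigma$, or restrict to $\rho=\eta_X$. The latter suffices because hypotheses are about the fixed variables, as the paper stresses with the example $x\cdot x\leq x$.
\item With $\rho=\eta_X$, we have $\theta=\sigma$. Then $\eval\sigma e\leq\eval\sigma f$ and monotonicity of $\hat c$ give $c[\interp e]\subseteq K$.
\end{enumerate}

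Accordingly, I would refine the definition of closure to use $\rho=\eta_X$, that is, the closure steps are $c[\interp f]\subseteq L\Rightarrow c[\interp e]\subseteq L$. With this claim in hand, $\eval\sigma e=\lext\sigma(\cinterp e)\leq\lext\sigma(\cinterp f)=\eval\sigma f$.

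\textbf{Left-to-right direction.} I would build the closed language model $\mathcal C$ whose elements are the $H$-closed languages.

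\begin{enumerate}
\item Its joins are $H^*(\bigcup L_i)$, so it is a complete lattice. Its operations are $s_{\mathcal C}(L_1,\dots,L_n)\eqdef H^*(s(L_1,\dots,L_n))$.
\item The central compatibility lemma is
\[
  H^*(s(L_1,\dots,H^*L_i,\dots,L_n))\subseteq H^*(s(L_1,\dots,L_n)).
\]
I would prove it by showing that $\set{a\mid s(L_1,\dots,\set a,\dots,L_n)\subseteq H^*(s(\vec L))}$ is $H$-closed. The point is that plugging a context into the $i$-th argument of $s$, next to arbitrary atoms from the other $L_j$, is again a context.
\item From the compatibility lemma, $H^*$ becomes a continuous homomorphism from $\lang X_E$ onto $\mathcal C$.
\item It follows that $\mathcal C$ satisfies $E$, since it is a quotient of the model $\lang X_E$ by a congruence-like closure. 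It is also continuous, since $H^*$ preserves joins into $\mathcal C$.
\item For the valuation $x\mapsto H^*\set x$, the argument of \Cref{thm:interpretation factors} then gives interpretation $=\cinterp e$. Unions, operations and least fixpoints all commute with the homomorphism $H^*$.
\item This valuation satisfies $H$: for $e\leq f$ in $H$, take the trivial context $\Box$. Then $\interp f\subseteq \cinterp f$ yields $\interp e\subseteq\cinterp f$, hence $\cinterp e\subseteq\cinterp f$.
\item Instantiating the hypothesis $\CL\models H\rightarrow e\leq f$ in $\mathcal C$ concludes.
\end{enumerate}

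The main obstacles are the compatibility lemma and the commutation of $H^*$ with least fixpoints. The latter follows once $H^*$ is known to be a surjective continuous homomorphism: the images of Kleene-style pre-fixpoints are handled as in the proof of \Cref{thm:interpretation factors}.
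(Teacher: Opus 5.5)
Your proof is correct. Once you restrict to $\rho=\eta_X$, your closure coincides with the paper's $H^*$. That restriction is forced rather than optional: with arbitrary $\rho$, the hypothesis $x\cdot x\leq x$ would put $y\cdot y$ into $\cinterp{y}$, although $\CL\not\models x\cdot x\leq x\rightarrow y\cdot y\leq y$. Your other option, a closure depending on $\sigma$, is not available either, since $H^*$ cannot depend on the model.

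Your left-to-right direction follows the paper's route: the closed-language model $\clang X$, $H^*$ as a surjective homomorphism, $\eta_H$ interpreting each $e$ as $\cinterp{e}$, and the identity context to show that $\eta_H$ satisfies $H$. The one difference is how you prove compatibility. You show directly that the atoms $a$ with $s(L_1,\dots,\set{a},\dots,L_n)\subseteq H^*s(\vec L)$ form an $H$-closed set. The paper instead proves the one-step inclusion $s(L_1,\dots,HL_i,\dots,L_n)\subseteq Hs(L_1,\dots,L_n)$ and lifts it to $H^*$ with the generic lattice lemma \Cref{lem:contextual implies fixed point contextual}. Your argument is more self-contained; the paper's reuses the machinery of \cite{pous2024tools}.

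The fixpoint ``obstacle'' you flag is not one. Once $\clang X$ is a continuous model, \Cref{thm:interpretation factors} applies to $\eta_H$ directly and gives $\eval{\eta_H}{e}=H^*\bigcup_{a\in\interp{e}}H^*\set{a}=\cinterp{e}$, which is exactly the paper's argument.

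For right-to-left, you fix a valuation $\sigma$ satisfying $H$ and prove $\lext\sigma\circ H^*=\lext\sigma$ via $K=\set{a\mid\ext\sigma(a)\leq\lext\sigma(L)}$. The paper fixes $f$ instead, shows that $F=\set{a\mid M\models H\rightarrow\underline a\leq f}$ is $H$-closed, and goes through \Cref{cor:compare:expr:term1}. The core step is the same in both: a hypothesis plugged into a context is respected, by \Cref{lem:ctxt:std}, \Cref{thm:interpretation factors} and monotonicity. Your formulation gives a slightly stronger, reusable fact: every valuation satisfying $H$ factors through the closed language interpretation, $\eval\sigma{e}=\lext\sigma(\cinterp{e})$.
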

When $H$ is empty, in which case $H^*$ is the identity, we recover precisely \Cref{cor:sound:complete}.

\subsection{Contexts}

We need a notion of \emph{context} to define the operator $H^*$.
To this end, we fix a designated variable $\hole\in X$, which we call the \emph{hole}. A \emph{term context} is a term $c$ with exactly one occurrence of the hole.
Given such a term $c$, we write $cb$ for the term obtained from $c$ by replacing the hole with a term $b$; similarly, we write $ce$ for the expression obtained from $c$ by replacing the hole with an expression $e$. A \emph{context} is an atom $C$ such that $\underline C$ is a term context. This is well-defined: since $E$ is linear and regular, all terms in the equivalence class of $C$ have the same number of occurrences of a given variable. 
Given a context $C$ and an atom $a$, we write $Ca$ for the equivalence class of $\underline C\, \underline a$ (which is an atom). We extend this notation to languages: given a context $C$ and a language $L$, we write $CL$ for $\set{Ca\mid a\in L}$.

\begin{example}\label{ex:contexts}
  In monoids, where atoms are words, a context is just a pair $(l,r)$ of words; the application of such a context to a word $u$ is the composite word $lur$; its application to the language $\set{a,b}$ is the language $\set{lar,lbr}$.
  In commutative monoids, where atoms are multisets, a context is just a multiset $m$, and the application of such a context to a multiset $n$ is the multiset union $m\cup n$.
  In bimonoids, the hole of a context may appear under arbitrary alternations of sequential $(\cdot)$ and parallel $(\parallel)$ compositions. 
\end{example}

\begin{lemmaE}
  \label{lem:ctxt:std}
  \prooflink
  For all contexts $C$ and expressions $e$, we have
  $C\interp e_E = \interp{\underline C e}_E$.
\end{lemmaE}
%% APPENDIX {contextcommutes}
\begin{proofE}
We proceed by induction on $\underline C$:
  \begin{itemize}
  \item if $\underline{C} = \hole$, then $Ca = a$ for all atoms $a$, and we have
    \begin{align*}
      C \interp{e} = \set{ C a \mid a \in \interp{e} } = \set{a \mid a \in \interp{e}} = \interp{e} = \interp{\hole e} = \interp{\underline{C}e}\enspace.
    \end{align*}    
  \item if $\underline{C} = s(t_1, \dots, \underline{C}', \dots, t_n)$ for some context $C'$ and terms $t_i$, we first note that for all atoms $a$, $\interp{\underline{C}\, \underline{a}} = \set{Ca}$ (regardless of the choice of $\underline{C}$), and
  for all atoms $a$ we have,
    \begin{align*}
      \interp{\underline{C}\, \underline{a}} 
      = \interp{s(t_1, \dots, \underline{C}', \dots, t_n) \underline{a}}
      = \interp{s(t_1, \dots, \underline{C}' \underline{a}, \dots, t_n)}\enspace,
    \end{align*}
    from which we compute:
    \begin{align*}
      C \interp{e} 
      % &= \set{ C a \mid a \in \interp{e}  }\\
      = \bigcup_{a \in \interp{e}} \set{Ca}
      &= \bigcup_{a \in \interp{e}} \interp{\underline{C}\: \underline{a}}\\
      &= \bigcup_{a \in \interp{e}} \interp{s(t_1, \dots, \underline{C}' \underline{a}, \dots, t_n)}\\
      &= \bigcup_{a \in \interp{e}} s(\interp{t_1}, \dots, \interp{\underline{C}' \underline{a}}, \dots, \interp{t_n})\\
      % &= \bigcup_{a \in \interp{e}} s(\interp{t_1}, \dots, C' \interp{ \underline{a}}, \dots, \interp{t_n})\\
      &= \bigcup_{a \in \interp{e}} s(\interp{t_1}, \dots,  \set{C'a}, \dots, \interp{t_n})\\
      &= s(\interp{t_1}, \dots, C' \interp{e}, \dots, \interp{t_n})\\    
      &\tag{by induction}= s(\interp{t_1}, \dots, \interp{\underline{C}' e}, \dots, \interp{t_n})\\
      &= \interp{s(t_1, \dots, \underline{C}' e, \dots, t_n)}
      % &= \interp{s(t_1, \dots, \underline{C}', \dots, t_n)e}\\
      = \interp{\underline{C} e}\enspace.\tag*\qedhere
    \end{align*}
  \end{itemize}
\end{proofE}

\subsection{Closure under hypotheses}\label{sec:properties of closure}
In the sequel we often omit the subscript $E$ from $\interp e$ and $\lang X$ to alleviate notation. 
Recall the set $H$ of hypotheses;
we turn it into a monotone function $H\colon\lang X\to\lang X$ on languages by setting
\begin{equation*}
  % H\colon &\lang X\to \lang X\\
  % &L\mapsto
  H(L) \eqdef
  \bigcup \set{C\interp e \mid
    C \text{ a context, }
    (e\leq f)\in H,~
    C\interp f \subseteq L
  }\enspace.
\end{equation*}
\vspace{-\baselineskip}%DAMIEN: why do we get a spurious line??
\begin{definition}
  \label{def:hypothesis closure}
  A language $L$ is \emph{$H$-closed} if $H(L)\subseteq L$.
  The \emph{$H$-closure} of a language $L$ is the least $H$-closed language $H^*(L)$ containing $L$.
\end{definition}
Equivalently, $H^*$ is the least closure above the function $H$ (cf. Appendix~\ref{app:lattice}). Also note that $L$ is $H$-closed if and only if $H^*(L)=L$, and that $H^*$ is the identity function when $H$ is empty.
Below we sometimes write $H^*L$ for $H^*(L)$ and $HL$ for $H(L)$.
%  to reduce notational clutter.
\begin{example}
  \label{ex:closure}
 For monoids, if $H = \set{x \leq y+z}$ and $L = \set{y, ayb, azb}$,
  then $HL = \set{axb}$ and $H^*L=L\cup \set{axb}$; if $H = \set{xx \leq x}$ and $L = \set{xx, xyx}$,
  then $HL = \set{xxx,xxyx,xyxx}$ and $H^*L = \set{x^i, x^j y x^k \mid i \geq 2, j,k \geq 1}$.
  For commutative monoids, if $H = \set{c \leq xy, a\leq cd}$ and $L = \set{ax,by,dxy}$\footnote{We write finite multisets as words where the ordering of the letters is irrelevant.}, then $HL = \set{dc}=\set{cd}$, and $H^*L=L\cup\set{a,cd}$.
  Finally, in bimonoids, if $H=\set{b\leq a\parallel c}$ and $L=\set{x \cdot (a \parallel b\parallel c)}$, then $H^*L=L\cup\set{x\cdot(b\parallel b)}$.
\end{example}
The main point of $H$-closed languages is that they validate the inequations in $H$:
\begin{lemmaE}
  \label{lem:cinterp:hyps}
  \prooflink
  For all $(e\leq f)\in H$, we have $\cinterp e\subseteq \cinterp f$.
\end{lemmaE}
\begin{proofE}
  This follows from $\interp e\subseteq H(\interp f)$, which is immediate from the definition of the function associated to $H$ by choosing the identity context for $C$.
\end{proofE}

\subsection{Soundness and completeness}
\label{ssec:sound:complete}

To prove \Cref{thm:sound:complete}, we first show that $\cinterp{\_}$ is an interpretation into a continuous model. We write $\clang X$ for the set of $H$-closed languages, ordered by inclusion. In general, the function $H^*$ does \emph{not} preserve unions and is \emph{not} a homomorphism from $\lang X$ to $\lang X$. Therefore its image $\clang X$ is not a model \emph{as is}.
To get a model of closed languages, we need to adapt the operations and close them explicitly, as follows.

For all collections $(L_i)_{i\in I}$ of $H$-closed languages, and for all symbols $s$ of arity $n$, set:
\phantomsection\label{page:bigvee}
\begin{align*}
  \bigvee_{i\in I} L_i \eqdef H^*\paren{ \bigcup_{i\in I} L_i}
  &&
     \begin{aligned}
       s_{\clang X} \colon &(\clang X)^n \to \clang X \\
       &(L_1, \dots, L_n) \mapsto H^*(s_{\lang X}(L_1, \dots, L_n))
     \end{aligned}
     \enspace
\end{align*}
\begin{textAtEnd}%
  We establish a series of lemmas that lead to \Cref{lem:closed:continuous:model}: the set $\clang X$ of $H$-closed languages, equipped with the explicitly closed operations, is a continuous model.

  By general properties of closure operators, we have:
  \begin{lemma}
    \label{lem:closure commutes with join}
    Given a collection $(L_i)_{i\in I}$ of languages, we have
    \begin{align*}
      H^*\bigcup_{i\in I}L_i = H^*\bigcup_{i\in I}H^*L_i\enspace.
    \end{align*}
  \end{lemma}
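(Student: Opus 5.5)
We prove the two inclusions separately, using only that $H^*$ is a closure operator: it is extensive ($L\subseteq H^*L$), monotone, and idempotent ($H^*H^*L=H^*L$). These follow directly from \Cref{def:hypothesis closure}. $H^*L$ is the least $H$-closed language containing $L$, so it is in particular the intersection of all $H$-closed languages containing $L$. Monotonicity holds because any $H$-closed language containing the larger argument also contains the smaller one. Idempotence holds because $H^*L$ is itself $H$-closed, hence its own closure.

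For the inclusion $H^*\bigcup_{i}L_i\subseteq H^*\bigcup_i H^*L_i$, extensivity gives $L_i\subseteq H^*L_i$ for each $i$. Hence $\bigcup_i L_i\subseteq\bigcup_i H^*L_i$, and monotonicity of $H^*$ yields the claim.

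For the converse inclusion, set $K\eqdef H^*\bigcup_i L_i$. Since $K$ is $H$-closed, it suffices to show $\bigcup_i H^*L_i\subseteq K$; then $H^*\bigcup_iH^*L_i\subseteq H^*K=K$ by minimality of the closure (equivalently, by monotonicity and idempotence). For each $i$ we have $L_i\subseteq\bigcup_jL_j\subseteq K$. Because $K$ is $H$-closed and contains $L_i$, minimality of $H^*L_i$ gives $H^*L_i\subseteq K$. Taking the union over $i$ concludes.

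No step is a real obstacle. The only point requiring care is to justify from the definition that the least $H$-closed superset exists, which makes $H^*$ well defined. This follows because $H$-closed languages are stable under arbitrary intersections: $H$ is monotone, so if each $L_j$ is $H$-closed then $H(\bigcap_j L_j)\subseteq\bigcap_j H(L_j)\subseteq\bigcap_j L_j$.
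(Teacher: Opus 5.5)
Your proof is correct and matches the paper's argument: one inclusion follows from extensivity plus monotonicity of $H^*$. For the other, you show $\bigcup_i H^*L_i\subseteq H^*\bigcup_i L_i$ and then apply $H^*$ to both sides using idempotence, which you phrase via minimality and the $H$-closedness of $K$.
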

  \begin{proof}
    The left-in-right inclusion follows from monotonicity of $H^*$ and the fact that $L\subseteq H^*L$.
    For the converse inclusion, we have $\bigcup_{i\in I}H^*L_i\subseteq H^*\bigcup_{i\in I}L_i$, again by monotonicity on $H^*$, hence the expected inclusion by applying $H^*$ on both sides.
  \end{proof}
  Moreover, the operation $\bigvee$ as defined on page~\pageref{page:bigvee}, is indeed a supremum operation:
  \begin{lemma}
    $(\clang X,\bigvee)$ is a complete lattice.
  \end{lemma}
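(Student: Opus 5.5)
We need to show that $(\clang X,\bigvee)$, with $\bigvee_{i\in I} L_i \eqdef H^*(\bigcup_i L_i)$, is a complete lattice. The plan is to check directly that this operation is the least upper bound in the poset of $H$-closed languages ordered by inclusion. That poset is partially ordered, being a subset of $\lang X$.

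First, $\bigvee_i L_i$ lies in $\clang X$: by definition of $H^*$ (\Cref{def:hypothesis closure}), $H^*(L)$ is the least $H$-closed language containing $L$, so it is $H$-closed. Its existence should itself be justified. The function $H$ is monotone on the complete lattice $\lang X$, so $M\mapsto L\cup H(M)$ is monotone and has a least (pre-)fixpoint by Tarski. A language $M$ is a pre-fixpoint of this map exactly when $L\subseteq M$ and $H(M)\subseteq M$, so the least pre-fixpoint is $H^*(L)$.

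Second, it is an upper bound: for each $j$ we have $L_j\subseteq\bigcup_i L_i\subseteq H^*(\bigcup_i L_i)$, since $H^*$ is extensive.

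Third, it is the least one. Suppose $K$ is $H$-closed with $L_i\subseteq K$ for all $i$. Then $\bigcup_i L_i\subseteq K$, and $K$ is an $H$-closed language containing $\bigcup_i L_i$. By minimality of the closure, $H^*(\bigcup_i L_i)\subseteq K$.

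This covers every subset, including the empty family, whose join is $H^*(\emptyset)$ and serves as bottom. So every subset of $\clang X$ has a supremum, and $\clang X$ is a complete lattice. Meets then exist automatically; concretely, an intersection of $H$-closed languages is $H$-closed by monotonicity of $H$, though we do not need this.

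There is no real obstacle here. The only point needing care is the well-definedness of $H^*$ as a least element: we must not assume an arbitrary intersection of $H$-closed languages is $H$-closed without using monotonicity of $H$. The Tarski argument sidesteps this, and alternatively monotonicity gives $H(\bigcap_j K_j)\subseteq\bigcap_j H(K_j)\subseteq\bigcap_j K_j$ directly.
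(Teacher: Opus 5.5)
Your proof is correct and follows essentially the same route as the paper's. The join is an upper bound because it contains $\bigcup_i L_i$, and it is least because any $H$-closed upper bound $M$ contains the union and hence its closure. The paper phrases the latter step as $H^*\bigcup_i L_i\subseteq H^*M=M$ by monotonicity. Your extra Tarski-based justification that $H^*$ exists is sound and matches the appendix's definition of $f^*$ as $x\mapsto\mu y.\,x+f(y)$.
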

  \begin{proof}
    For all subsets $\set{L_i \mid i \in I} \subseteq \clang X$, $\bigvee_{i \in I} L_i$ contains $\bigcup_{i \in I} L_i$ and is thus an upper bound of $\set{L_i \mid i \in I}$; it remains to show that this is the least one in $\clang X$.
    Let $M \in \clang X$ such that $L_i \subseteq M$ for all $i \in I$;
    we have $\bigcup_{i \in I}L_i \subseteq M$, hence $H^*\bigcup_{i \in I}L_i \subseteq H^*M = M$ since $H^*$ is monotone and $M$ is $H$-closed.
    % Thus $\bigvee_{i \in I}L_i = H^*( \bigcup_{i \in I}L_i )$ is indeed a least upper bound.
  \end{proof}

  We have defined two families of operations on languages: $s_{\lang X}$ on arbitrary languages, and
  $s_{H^*\lang X}=H^*\circ s_{\lang X}$ on $H$-closed languages. For the sake of readability, we often omit the subscripts as before, but we never use $s$ to mean $s_{H^*\lang X}$, writing $H^*s$ instead in that case.
  
  In the next two proofs, we use some tools from~\cite[Appendix~A]{pous2024tools}, which we adapted to our needs in Appendix~\ref{app:lattice}. 
  \begin{lemma}\label{lem:closure commutes with distributive symbol}
    For all symbols $s$ of arity $n$, for all languages $L_i \in \lang X$, we have: 
    \begin{align*}
      s(L_1, \dots, H^*L_i, \dots, L_n) \subseteq H^*s (L_1,\dots, L_n)\enspace.
    \end{align*}
  \end{lemma}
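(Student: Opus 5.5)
Fix $s$, an index $i$, and the languages $L_j$ for $j\neq i$. Consider the set
\[
  K \eqdef \set{a \mid s(L_1,\dots,\set a,\dots,L_n)\subseteq H^*s(L_1,\dots,L_n)}\enspace,
\]
where $\set a$ sits in position $i$. Since $s_{\lang X}$ is continuous in its $i$-th argument, the statement is equivalent to $H^*L_i\subseteq K$. Because $H^*L_i$ is the least $H$-closed language containing $L_i$, it suffices to show two things: that $L_i\subseteq K$, and that $K$ is $H$-closed.

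The first point is immediate. For $a\in L_i$, monotonicity of $s$ gives $s(\dots,\set a,\dots)\subseteq s(L_1,\dots,L_n)$, and this is contained in its $H^*$-closure.

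The $H$-closedness of $K$ is the main step. Take a context $C$ and $(e\leq f)\in H$ with $C\interp f\subseteq K$, and let $a\in C\interp e$. We must show $a\in K$, i.e., that for all atoms $b_j\in L_j$ ($j\ne i$), the atom $s(b_1,\dots,a,\dots,b_n)$ lies in $M\eqdef H^*s(L_1,\dots,L_n)$. To do so, form the term context $\underline{C'}\eqdef s(\underline{b_1},\dots,\underline C,\dots,\underline{b_n})$. It has exactly one hole, since the $b_j$ are hole-free; this requires choosing the hole variable $\hole$ fresh with respect to the atoms involved, or else noting that occurrences of $\hole$ in the $b_j$ are just ordinary letters. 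With this context, $s(b_1,\dots,Cc,\dots,b_n)=C'c$ for every atom $c$, and so $s(\dots,a,\dots)\in C'\interp e$. Moreover $C'\interp f=\set{s(b_1,\dots,Cc,\dots,b_n)\mid c\in\interp f}$, which is contained in $M$ because each $Cc\in K$. Since $M$ is $H$-closed, we get $C'\interp e\subseteq H(M)\subseteq M$, and hence $a\in K$.

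The obstacle I expect is the bookkeeping around contexts modulo $E$. One has to justify that $C'$ is a well-defined context and that $C'c$ equals $s(b_1,\dots,Cc,\dots,b_n)$ as atoms. Both follow from $\equiv_E$ being a congruence together with linearity and regularity of $E$, which are the same facts used to make contexts well-defined. Alternatively, this whole fixpoint argument can be packaged using the compatibility lemmas adapted from~\cite[Appendix~A]{pous2024tools} in Appendix~\ref{app:lattice}: show that $H\circ s_i\subseteq s_i\circ H^*$-style inclusions hold, namely $s(\dots,H L,\dots)\subseteq H(s(\dots,L,\dots))$ via the same context construction, and conclude by the standard compatible-function argument.
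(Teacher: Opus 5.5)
Your proof is correct, and its key step is the paper's: wrapping $C$ into $C'=s(b_1,\dots,C,\dots,b_n)$ so that the condition on $C\interp{f}$ transfers to $C'\interp{f}$. The difference is in how the argument is packaged.

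\textbf{The paper's route.} It proves only the one-step inclusion $s(L_1,\dots,HL_i,\dots,L_n)\subseteq Hs(L_1,\dots,L_n)$, which holds for arbitrary languages. It then lifts this to $H^*$ with the generic \Cref{lem:contextual implies fixed point contextual}, which rests on \Cref{lem:fixed point respects leq} and continuity of $s$ in each argument. This is the alternative you sketch in your last sentence.

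\textbf{Your route.} Your main argument inlines that lifting as a direct Knaster--Tarski induction. Your $K$ is the right adjoint of the union-preserving map $L\mapsto s(L_1,\dots,L,\dots,L_n)$, evaluated at $M=H^*s(L_1,\dots,L_n)$. Continuity is used exactly in reducing the claim to $H^*L_i\subseteq K$. Your inductive step uses that $M$ is $H$-closed, whereas the paper's one-step inclusion needs no closedness.

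\textbf{What each buys.} Your version is self-contained and makes the role of continuity explicit. The paper's isolates a reusable one-step property and relies on a lattice toolkit it reuses elsewhere, for instance in the proof of \Cref{thm:homomorphic reduction}.

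\textbf{On the hole.} The paper's proof tacitly makes the same assumption you flag, namely that the atoms $a_j\in L_j$ do not contain the hole. Of your two fixes, only the first, taking the hole outside the alphabet of the languages, fits the paper's definition of contexts. If the hole is a variable of $X$, an occurrence of it inside some $b_j$ cannot be told apart from the hole of $C$, and it would be substituted too.
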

  \begin{proof}
    % Using the terminology from \Cref{app:lattice}, we have to show that $H^*$ is contextual w.r.t. $s_{\lang X}$. 
    By \Cref{lem:contextual implies fixed point contextual}, since $s$ is continuous, it suffices to show % that $H$ is contextual w.r.t.\ $s$, i.e.,
    \begin{align*}
      s(L_1, \dots, HL_i, \dots, L_n) \subseteq Hs (L_1,\dots, L_n)\enspace.
    \end{align*}
    Let $a$ be an atom in the left-hand side. By definition $a=s(a_1,\dots,Cb,\dots,a_n)$ with each $a_j$ in $L_j$, $C$ a context, and $b\in\interp e$ for some inequation $(e\leq f)\in H$ with $C\interp f\subseteq L_i$.
    For $C'=s(a_1,\dots,C,\dots,a_n)$, we check easily that $C'\interp f\subseteq s(L_1,\dots,L_n)$, so that $C'\interp e \subseteq Hs(L_1,\dots,L_n)$. As $a=C'b\in C'\interp e$, we can conclude that $a$ belongs to the right-hand side.
  \end{proof}

  \begin{corollary}
    \label{lem:closurefullycontextual}
    For all contexts $C$ and languages $L$, we have $CH^*L\subseteq H^*CL$. 
  \end{corollary}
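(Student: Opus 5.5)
We prove the statement by induction on a term representative $\underline C$ of the context $C$, using \Cref{lem:closure commutes with distributive symbol} at each layer. As in the proof of \Cref{lem:ctxt:std}, there are two cases: either $\underline C$ is the hole, or $\underline C = s(t_1,\dots,\underline{C}',\dots,t_n)$ for some context $C'$ and terms $t_j$, where the hole occurs only in the $i$-th argument.

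If $\underline C=\hole$, then $CL=L$ for every language $L$, and the claim $H^*L\subseteq H^*L$ is trivial.

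In the inductive case, we first record the decomposition $CL = s(\set{t_1},\dots,C'L,\dots,\set{t_n})$, writing $\set{t_j}$ for the singleton containing the class of $t_j$. This holds because $Ca$ is the class of $s(t_1,\dots,\underline{C}'\,\underline a,\dots,t_n)$, which equals $s(t_1,\dots,C'a,\dots,t_n)$ computed in atoms. Then we compute
\begin{align*}
  C H^*L &= s(\set{t_1},\dots,C'H^*L,\dots,\set{t_n})\\
  &\subseteq s(\set{t_1},\dots,H^*C'L,\dots,\set{t_n})\\
  &\subseteq H^*s(\set{t_1},\dots,C'L,\dots,\set{t_n}) = H^*CL\enspace.
\end{align*}
The first inclusion uses the induction hypothesis together with monotonicity of $s$ on languages. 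The second is exactly \Cref{lem:closure commutes with distributive symbol}, applied in the $i$-th argument with $L_i=C'L$ and $L_j=\set{t_j}$ for $j\neq i$.

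The only point requiring care is that the induction is on a term representative while $C$ is an atom. This is harmless: the definition of $Ca$ does not depend on the choice of representative, and any representative $\underline C$ of a context is a term context, so it decomposes as above. We therefore fix one representative and argue on it; the result is independent of that choice. Everything else is immediate from the preceding lemma.
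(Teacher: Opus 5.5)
Your proof is correct and follows the paper's own argument: the paper proves this corollary by generalising \Cref{lem:closure commutes with distributive symbol} through induction on the context $C$, which is exactly your argument. Your explicit decomposition $CL = s(\set{t_1},\dots,C'L,\dots,\set{t_n})$ and your remark on the independence from the chosen representative fill in details that the paper leaves implicit.
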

  \begin{proof}
    Generalising \Cref{lem:closure commutes with distributive symbol} by induction on the context $C$.
  \end{proof}

  \begin{corollary}
    \label{lem:closureishomomorphism}
    The function $H^*$ is a homomorphism from $\lang X$ to $\clang X$:
    for all symbols $s$ of arity $n$ and for all languages $L_i \in \lang X$, we have
    \begin{align*}
      H^*s(L_1, \dots, L_n) = H^*s (H^*L_1,\dots, H^*L_n)\enspace.
    \end{align*}    
  \end{corollary}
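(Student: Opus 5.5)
We prove the two inclusions separately, both following from monotonicity and idempotence of $H^*$ together with \Cref{lem:closure commutes with distributive symbol}.

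For the inclusion from left to right, we use $L_i\subseteq H^*L_i$ for every $i$. Since $s_{\lang X}$ is monotone in each argument (it is a powerset lifting), we get $s(L_1,\dots,L_n)\subseteq s(H^*L_1,\dots,H^*L_n)$. Applying the monotone function $H^*$ to both sides gives the desired inclusion.

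For the converse inclusion, we replace the arguments by their closures one at a time, using \Cref{lem:closure commutes with distributive symbol} at each step. More precisely, we show by induction on $k\in\set{0,\dots,n}$ that
\begin{align*}
  s(H^*L_1,\dots,H^*L_k,L_{k+1},\dots,L_n)\subseteq H^*s(L_1,\dots,L_n)\enspace.
\end{align*}
The case $k=0$ is just $L\subseteq H^*L$.

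For the inductive step, apply \Cref{lem:closure commutes with distributive symbol} with the languages $H^*L_1,\dots,H^*L_k,L_{k+1},\dots,L_n$ at position $k+1$. This yields that $s(H^*L_1,\dots,H^*L_{k+1},L_{k+2},\dots,L_n)$ is contained in $H^*$ of $s(H^*L_1,\dots,H^*L_k,L_{k+1},\dots,L_n)$. By the induction hypothesis, monotonicity of $H^*$ and idempotence $H^*H^*=H^*$, the latter is contained in $H^*s(L_1,\dots,L_n)$.

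Taking $k=n$ and applying $H^*$ once more, again using monotonicity and idempotence, we obtain $H^*s(H^*L_1,\dots,H^*L_n)\subseteq H^*s(L_1,\dots,L_n)$.

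Finally, this equation says exactly that $H^*$ is a homomorphism from $\lang X$ to $\clang X$, since the operations of $\clang X$ are $H^*\circ s_{\lang X}$. There is no real obstacle: \Cref{lem:closure commutes with distributive symbol}, which handles the closure of a single argument, carries the weight. The only care needed is to chain it through all $n$ positions while collapsing the accumulated closures by idempotence.
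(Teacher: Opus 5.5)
Your proof is correct and follows the paper's argument: the left-to-right inclusion by monotonicity and $L\subseteq H^*L$, and the converse from \Cref{lem:closure commutes with distributive symbol}. The paper packages your position-by-position chaining with idempotence into the general lattice lemma \Cref{lem:contextual characterization for closures} (a contextual closure $c$ satisfies $f(c(x_1),\dots,c(x_n))\leq c(f(x_1,\dots,x_n))$); your induction is that lemma's proof written out directly.
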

  \begin{proof}
    The left-to-right inclusion holds by monotonicity and $L\subseteq H^*L$.
    The right-to-left inclusion follows from \Cref{lem:closure commutes with distributive symbol} by \Cref{lem:contextual characterization for closures}.
  \end{proof}
\end{textAtEnd}%
\begin{propositionE}
  \label{lem:closed:continuous:model}
  \prooflink
  The set $\clang X$ of $H$-closed languages, equipped with the above operations, is a continuous model of $E$.
\end{propositionE}
\begin{proofE}
  By \Cref{lem:closureishomomorphism}, $\clang X$ is the homomorphic image of $\lang X$ under $H^*$ (this function is surjective). Since $\lang X$ satisfies $E$ (\Cref{thm:linreg:pset}), so does $\clang X$.

  It remains to show that its operations preserve arbitrary joins.
  Let $s$ be a symbol of arity $n+1$, let $L_1\dots L_n \in \clang X$, and let $\set{K_i \mid i \in I} \subseteq \clang X$; we have:
  \begin{align*}
    &H^*s(L_1, \dots, \bigvee_{i\in I} K_i, \dots, L_n)\\
    =~&H^*s(L_1, \dots, H^*(\bigcup_{i\in I} K_i), \dots, L_n)\\
    \tag{\Cref{lem:closure commutes with distributive symbol}}        
    \subseteq~&H^*s(L_1, \dots, \bigcup_{i\in I} K_i, \dots, L_n)\\
    \tag{$s$ is continuous on $\lang X$}
    =~& H^* \bigcup_{i \in I} s(L_1, \dots, K_i, \dots, L_n)\\
    \tag{\Cref{lem:closure commutes with join}}
    =~& H^* \bigcup_{i \in I} H^*s(L_1, \dots, K_i, \dots, L_n)\\
    =~& \bigvee_{i \in I} H^*s(L_1, \dots, K_i, \dots, L_n)\\
    \tag*{(monotonicity)~\qedhere}
      \subseteq~&H^*s(L_1, \dots, \bigvee_{i\in I} K_i, \dots, L_n)\enspace.
  \end{align*}
\end{proofE}
It remains to show that $\cinterp{\_}$ actually is an interpretation into that model:

\begin{propositionE}
  \label{lem:interpretation agrees on expressions with closure}
  \prooflink
  Let $\eta_H \colon X \to \clang X$ be the valuation defined by $\eta_H(x)\eqdef H^*\set x$.
  For all expressions $e$ we have $\cinterp{e} = \eval{\eta_H}{e}$.
\end{propositionE}
\begin{proofE}
  We first observe that for all atoms $a$, we have
  $\eval{\eta_H}{a}= H^*\set a$.
  Indeed, both $\ext{\eta_H}$ and $H^*\set\_$ are homomorphisms (\Cref{lem:closureishomomorphism}), and they agree on variables.
  Then we compute:
  \begin{align*}
    \tag{\Cref{thm:interpretation factors}}
    \eval{\eta_H}{e}
    = \bigvee_{a \in \interp{e}} \eval{\eta_H}{a}
    &= H^*\bigcup_{a \in \interp{e}} \eval{\eta_H}{a}\\
    \tag{preliminary observation}
    &= H^*\bigcup_{a \in \interp{e}} H^*\set{a}\\
    \tag{\Cref{lem:closure commutes with join}}
    &= H^*\bigcup_{a \in \interp{e}} \set{a}\\
    &= \cinterp{e}\enspace. \tag*{\qedhere}
  \end{align*}
\end{proofE}
We can finally prove our main theorem.
\begin{proof}[Proof of \Cref{thm:sound:complete}]
  The left to right implication follows directly from \Cref{{lem:interpretation agrees on expressions with closure}} and the fact that our interpretation satisfies all hypotheses in $H$ (\Cref{lem:cinterp:hyps}).

  For the converse implication, assume $\cinterp{e} \subseteq \cinterp{f}$ and pick a continuous model $M$; we have to show $M\models H \rightarrow e \leq f$.
  Define the following language:
  \begin{align*}
    F \eqdef \{ a \in \atom{X} \mid M \models H \rightarrow \underline a \leq f \}\enspace.
  \end{align*}
  By \Cref{cor:compare:expr:term1}, we have that for all expressions $g$,
  \begin{align*}
    \tag{$\star$}
    \interp g \subseteq F \iff M\models H\rightarrow g \leq f\enspace.
  \end{align*}
  Thus it suffices to show $\interp e \subseteq F$.
  By assumption, this follows from $\cinterp f \subseteq F$, which
  in turn follows from $\interp{f} \subseteq F$ and $H(F) \subseteq F$, by definition of $H^*$.
  We get $\interp{f} \subseteq F$ directly from $(\star)$, and it
  only remains to show $H(F) \subseteq F$.

  Fix a context $C$, a hypothesis $(e_0\leq f_0)\in H$, and suppose $C\interp{f_0}\subseteq F~(\dagger)$.
  We have to show $C\interp{e_0}\subseteq F$.
  From \Cref{lem:ctxt:std} and $(\star)$, we have that for all expressions $g$,
  \begin{align*}
    \tag{$\star\star$}
    C\interp g\subseteq F \iff (\forall\sigma%\colon X\to M,~\eval\sigma{H}\implies
    \text{ satisfying $H$ in }M,~ \eval\sigma{\underline C\,g}\leq\eval\sigma{f})\enspace.
  \end{align*}
  Accordingly, fix a valuation $\sigma\colon X\to M$ satisfying $H$.
  Since we have $(e_0\leq f_0)\in H$, we know $\eval\sigma{e_0}\leq\eval\sigma{f_0}$.
  We derive
    $\eval\sigma{\underline C e_0}
    \leq~\eval\sigma{\underline C f_0}
    \leq~\eval\sigma{f}$ first by monotonicity and then using $(\dagger)$ with $(\star\star)$. 
  We conclude using $(\star\star)$ again.
\end{proof}

\section{Axiomatisations}
\label{sec:axiomatisation}
\pratendSetLocal{category=axiomatisation}

In this section we study axiomatisations. We restrict to quasi-equational ones, which consist only of Horn sentences. In order to deal with standard examples from the literature, we study those axiomatisations on arbitrary fragments of our general syntax of expressions.

\begin{definition}
  \label{def:fragment}
  A \emph{fragment} is a subset $\F$ of expressions containing $0$ and the variables, closed under $+$, the operations from $\Sigma$, application of substitutions $\theta\colon X\pto \F$, (closed) sub-expressions, and such that $\sub e x f$ belongs to $\F$ whenever $\mu x.e$ and $f$ belong to $\F$.
\end{definition}
\begin{example}
  The set of all (closed) expressions and the set of fixpoint-free expressions is are fragments.
  On monoids, the set of expressions where all fixpoint sub-expressions are of the form $\mu x.1+e\cdot x$ is a fragment, which corresponds to regular expressions.
  On bimonoids, the set of expressions where all fixpoint sub-expressions are either of the form $\mu x.1+e\cdot x$ or of the form $\mu x.1+e\parallel x$ is a fragment, which corresponds precisely to bi-KA expressions~\cite{struth2014bikleene}. 
\end{example}

\begin{definition}
  A \emph{(quasi-equational) axiomatisation over a fragment} is a set of Horn sentences expressed in that fragment. Given such an axiomatisation $Q$, its \emph{Horn theory} is the judgement $\derive Q H e\leq f$ inductively defined by the rules in \Cref{fig:rules}, where $H$ is a set of hypotheses in the fragment. 
\end{definition}

\begin{figure}
  \begin{mathpar}
    \inferrule
    {\forall (e'\leq f')\in H',~\derive Q H e'\theta\leq f'\theta}
    {\derive Q H e\theta\leq f\theta}
    {(H'\to e\leq f)\in Q,~\theta\colon X\pto\F}
    \\ %
    \inferrule{ }
    {\derive Q H e\leq f}
    {(e\leq f)\in H}
    \\ %
    \inferrule{ }
    {\derive Q H e\leq e}
    {e\in\F}
    \and %
    \inferrule{\derive Q H e\leq f \and \derive Q H f\leq g}
    {\derive Q H e\leq g}
    \\ %
    \inferrule{\derive Q H e_1\leq f_1 \and \derive Q H e_2\leq f_2}
    {\derive Q H e_1+e_2\leq f_1+f_2}
    \and %
    \inferrule{\forall i,~\derive Q H e_i\leq f_i}
    {\derive Q H s(e_1,\dots,e_n)\leq s(f_1,\dots,f_n)}
    {s^{(n)}\in \Sigma}    
  \end{mathpar}
  \caption{Deduction rules for the Horn theory of an axiomatisation $Q$ over a fragment $\F$.}
  \label{fig:rules}
\end{figure}

Regarding variables and quantifications, such a judgement should intuitively be understood as the following implication:
\begin{center}
  ``$(\forall \vec x,~Q) \Rightarrow \forall \vec x,~(H \Rightarrow e\leq f)$''\enspace.
\end{center}
This is why we use a substitution in the first rule of \Cref{fig:rules}, which makes it possible to use any instance of some axiom in $Q$, but not in in the second one, which only makes it possible to use an hypothesis from $H$, \emph{as is}. The next two rules enforce reflexivity and transitivity, while the last two rules ensure monotonicity of $+$ and the operations from $\Sigma$.

We establish the basic properties of these deduction rules in Appendix~\ref{app:axiomatisation}.
\begin{textAtEnd}%
  Before proving \Cref{prop:rules:sound},
  we observe that the interpretation of a (closed) expression does not depend on the values assigned to recursion variables, and we show how our interpretation function (\Cref{def:interpretation}) distributes over substitutions (page~\pageref{page:subst}).
  \longnotations
  \begin{fact}
    \label{fact:norecdep}
    For all valuations $\rho,\rho' \colon X\cup R \to A$ into a complete algebra $A$, if $\rho$ and $\rho'$ agree on $X$, then for all expressions $e\in \expr X$, we have $\evalb{\rho}e=\evalb{\rho'}e$.
  \end{fact}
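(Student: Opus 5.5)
The plan is to generalise the claim to open expressions and prove it by structural induction on $e$. A closed expression only becomes open once we pass under a binder, so the statement to prove by induction is the following: for every open expression $e$ and all valuations $\rho,\rho'\colon X\cup R\to A$ that agree on $X$ and on the free recursion variables of $e$, we have $\evalb\rho e=\evalb{\rho'}e$. For a closed expression $e\in\expr X$ there are no free recursion variables, so the fact as stated is the special case where $\rho$ and $\rho'$ agree on $X$.

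The induction follows the clauses of \Cref{def:interpretation}.
\begin{itemize}
\item If $e$ is a variable in $X$, then $\rho$ and $\rho'$ agree on it by assumption. If $e$ is a recursion variable, it is free in $e$, so $\rho$ and $\rho'$ again agree on it.
\item If $e=0$, both sides equal $\perp$.
\item If $e=e_1+e_2$ or $e=s(e_1,\dots,e_n)$, the free recursion variables of each $e_i$ are among those of $e$. The induction hypothesis therefore applies to every $e_i$, and we conclude because both sides apply the same operation ($\vee$ or $s_A$) to equal arguments.
\end{itemize}

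The only case requiring care is $e=\mu x.e'$. By definition, $\evalb\rho e$ is the least fixpoint of $a\mapsto\evalb{\override\rho x a}{e'}$, and similarly for $\rho'$. For each $a\in A$, the valuations $\override\rho x a$ and $\override{\rho'} x a$ agree on $X$ and on $x$. They also agree on every other free recursion variable of $e'$, since such a variable is free in $e$. The induction hypothesis applied to $e'$ shows that the two monotone functions coincide pointwise, so they are the same function and have the same least fixpoint. This is exactly why the statement has to be generalised to open expressions with a condition on free recursion variables: the body $e'$ of a binder is not closed, so the closed version of the claim cannot be applied to it directly.
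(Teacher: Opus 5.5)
Your proof is correct, and it matches the paper's approach. The paper states this fact without proof, and your structural induction is the argument it leaves implicit. Your generalisation to open expressions, requiring agreement on the free recursion variables, amounts to the paper's convention from Appendix~\ref{app:variables}: let $X$ vary, and view the body of $\mu x.e'$ as a closed expression over $X\cup\set x$.
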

  \begin{lemma}
    \label{lem:substitution commutes with interpretation}
    For all sets of variables $X$, possibly open expressions $e$, substitutions $\rho\colon X\cup R\pto\expr{X}$, and valuations $\rho' \colon X\cup R \to A$ into a complete algebra $A$, we have:
    \begin{align*}
      \evalb{\rho'}{e\rho} = \evalb{\rho''}{e}\text{ where }
      \rho''(x) =
      \begin{cases}
        \evalb{\rho'}{\rho(x)} &\text{if $\rho(x)$ is defined,}\\
        \rho'(x) &\text{otherwise.}
      \end{cases}
   \end{align*}
  \end{lemma}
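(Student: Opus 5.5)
We will prove the statement by structural induction on the (possibly open) expression $e$, generalising over both the substitution $\rho$ and the valuation $\rho'$. Generalising over $\rho'$ is needed because the fixpoint case modifies the valuation.

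\emph{Variables.} If $e=x$ and $\rho(x)$ is defined, then
$\evalb{\rho'}{x\rho}=\evalb{\rho'}{\rho(x)}=\rho''(x)=\evalb{\rho''}{x}$.
Otherwise $x\rho=x$, and both sides equal $\rho'(x)$.

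\emph{Structural cases.} The cases $0$, $e_1+e_2$ and $s(e_1,\dots,e_n)$ are immediate. The substitution commutes with these constructors, and \Cref{def:interpretation} interprets them compositionally via ${\perp}$, $\vee$ and $s_A$. So the induction hypotheses for the subexpressions (with the same $\rho$ and $\rho'$) close each case.

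\emph{Fixpoints.} This is the only case requiring care. Let $e=\mu x.e'$. By definition,
\[
\evalb{\rho'}{(\mu x.e')\rho}=\evalb{\rho'}{\mu x.e'(\rho\backslash x)}=\mu a.\evalb{\override{\rho'}{x}{a}}{e'(\rho\backslash x)}.
\]
For each fixed $a$, apply the induction hypothesis to $e'$ with substitution $\rho\backslash x$ and valuation $\override{\rho'}{x}{a}$. This rewrites the inner term as $\evalb{\sigma_a}{e'}$, where:
\begin{itemize}
\item $\sigma_a(x)=a$, since $\rho\backslash x$ is undefined at $x$;
\item for $y\neq x$ with $\rho(y)$ defined, $\sigma_a(y)=\evalb{\override{\rho'}{x}{a}}{\rho(y)}$;
\item $\sigma_a(y)=\rho'(y)$ otherwise.
\end{itemize}

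The key point is that $\rho(y)$ is a \emph{closed} expression over $X$. By \Cref{fact:norecdep}, its interpretation does not depend on the value assigned to the recursion variable $x$, so $\evalb{\override{\rho'}{x}{a}}{\rho(y)}=\evalb{\rho'}{\rho(y)}$. This is where closedness of substitutions is essential, since it prevents variable capture. Hence $\sigma_a=\override{\rho''}{x}{a}$. The two functions of $a$ therefore coincide pointwise, so their least fixpoints coincide, giving $\mu a.\evalb{\override{\rho''}{x}{a}}{e'}=\evalb{\rho''}{\mu x.e'}$.

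The main obstacle is this fixpoint case. One must check that removing $x$ from $\rho$ matches overriding $\rho''$ at $x$, and invoke \Cref{fact:norecdep} to discard the dependence of $\rho(y)$ on $x$. If $x$ happens to lie in $X$ rather than $R$, the same argument applies: $x\in R$ by the grammar of \Cref{def:expression}, so this case does not actually arise.
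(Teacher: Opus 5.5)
Your proof is correct and follows essentially the same route as the paper. You use structural induction on $e$, and in the fixpoint case you apply the induction hypothesis to $e'$ with substitution $\rho\backslash x$ and valuation $\override{\rho'}{x}{a}$, then use \Cref{fact:norecdep} (closedness of each $\rho(y)$) to identify the resulting valuation with $\override{\rho''}{x}{a}$.
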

  \begin{proof}
    We proceed by induction on the expression $e$. In the case $s(e_1, \dots, e_n)$, we have
    \begin{align*}
      \evalb{\rho'}{(s(e_1, \dots, e_n))\rho}
      &= \evalb{\rho'}{s(e_1\rho , \dots, e_n\rho)} \\% & (\text{\Cref{def:expressions substitution}})\\
      &= s(\evalb{\rho'}{e_1\rho}, \dots, \evalb{\rho'}{e_n\rho})\\%  & (\text{\Cref{def:interpretation}})\\
      \tag{by induction}
      &= s(\evalb{\rho''}{e_1}, \dots, \evalb{\rho''}{e_n})\\%  & (\text{I.H.})\\
      &= \evalb{\rho''}{s(e_1, \dots,e_n)}\enspace.%  & (\text{\Cref{def:interpretation}}).
    \end{align*}
    The case $0$ is trivial and the case $e_1+e_2$ is similar to the previous one.
    For the variable case $x$ with $x \in X \cup R$, we have
    $\evalb{\rho'}{x\rho} = \evalb{\rho'}{\rho(x)} = \rho''(x) =
    \evalb{\rho''}{x}$ if $\rho(x)$ is defined, and $\evalb{\rho'}{x\rho} = \evalb{\rho'}{x} = \rho'(x) =
    \evalb{\rho''}{x}$ otherwise.
    It remains to deal with fixpoint expressions $\mu x.e$, for which we compute:
    \begin{align*}
      \evalb{\rho'}{(\mu x.e)\rho}
      &= \evalb{\rho'}{\mu x.(e (\rho\backslash x))} \\% & (\text{\Cref{def:expressions substitution}}, x \neq x')\\
      &= \mu a.\evalb{\override{\rho'}{x}{a}}{e (\rho\backslash x)}\\%  & (\text{\Cref{def:interpretation}})\\
      &= \mu a.\evalb{\override{\rho''}{x}{a}}{e}\\% & (\text{I.H.})\\
      &= \evalb{\rho''}{\mu x.e}\enspace.
    \end{align*}
    The last-but-one step deserves further explanation.    
    We use the induction hypothesis on $e$, using the substitution $\rho\backslash x$ and the valuation $\override{\rho'}{x}{a}$, which gives $\evalb{\override{\rho'}{x}{a}}{e (\rho\backslash x)}=\evalb{\rho'''}{e}$ where $\rho'''(y)=\evalb{\override{\rho'}{x}{a}}{(\rho\backslash x)(y)}$ if $(\rho\backslash x)(y)$ is defined, and $\rho'''(y)=(\override{\rho'}{x}{a})(y)$ otherwise. We observe that 
    $\rho'''=\override{\rho''}{x}{a}$ with $\rho''$ defined as in the statement (using \Cref{fact:norecdep} when the argument is in the domain of $\rho\backslash x$).
  \end{proof}

  In particular, for  all valuations $\sigma\colon X\to A$, we get that:
  \begin{itemize}
  \item for all expressions $e$ and all substitutions $\theta\colon X\pto\expr{X}$,
    \begin{align*}
      \eval\sigma{e\theta} = \eval{\sigma_\theta}{e}\text{ where }
      \sigma_\theta(x) = 
      \begin{cases}
        \eval\sigma{\theta(x)} &\text{if $\theta(x)$ is defined,}\\
        \sigma(x) &\text{otherwise;}
      \end{cases}
    \end{align*}
  \item and for all $x$-expressions $e$, $\eval\sigma{\mu x.e} = \ext{\sigma}_{x \mapsto \mu x.e}(e) = \eval\sigma{\sub e x {\mu x.e}}$.
  \end{itemize}
  \shortnotations
  Moreover, combined with~\ref{thm:interpretation factors}, the first point yields:
  \begin{corollary}
    \label{cor:interp:subst}
    Let $e$ be an expression and let $\theta\colon X\pto\expr X$ be a substitution. We have
    $\interp{e\theta}=\lext{\sigma}\interp e$ with $\sigma(x)\eqdef\interp{\theta(x)}$ if $\theta(x)$ is defined and $\sigma(x)\eqdef\set x$ otherwise.
  \end{corollary}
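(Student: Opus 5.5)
We instantiate the first point above (the specialisation of \Cref{lem:substitution commutes with interpretation} to substitutions $\theta$ of variables from $X$) in the standard language model, and then rewrite the result using \Cref{thm:interpretation factors}.

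First, take the valuation $\eta\colon x\mapsto\set x$ into $\lang X$, so that $\interp{e\theta}=\eval\eta{e\theta}$ by \Cref{def:standard language interpretation}. The first point then gives $\eval\eta{e\theta}=\eval{\eta_\theta}{e}$. Here $\eta_\theta(x)=\eval\eta{\theta(x)}=\interp{\theta(x)}$ when $\theta(x)$ is defined, and $\eta_\theta(x)=\set x$ otherwise. Thus $\eta_\theta$ is exactly the valuation $\sigma$ of the statement.

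Second, $\lang X$ is a continuous model of $E$, as established above as the standard language model. So \Cref{thm:interpretation factors} applies to $\sigma\colon X\to\lang X$ and yields $\eval\sigma e=\lext\sigma\interp e$. Chaining the two equalities gives $\interp{e\theta}=\lext\sigma\interp e$.

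The only point requiring care is checking that the valuation produced by the substitution lemma coincides with $\sigma$ on variables outside the domain of $\theta$. This holds because the lemma falls back to the original valuation $\eta$ there, which gives $\set x$. Recursion variables cause no issue, since $e$ is closed (\Cref{fact:norecdep}). Beyond that, the proof is routine.
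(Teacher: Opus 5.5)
Your proof is correct and follows the paper's route exactly. The paper also specialises \Cref{lem:substitution commutes with interpretation} to the singleton valuation into the standard language model, which yields $\interp{e\theta}=\eval\sigma e$. It then applies \Cref{thm:interpretation factors} in the continuous model $\lang X$ to rewrite $\eval\sigma e$ as $\lext\sigma\interp e$.
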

\end{textAtEnd}%
Since they preserve validity in all complete models, we get:
\begin{theoremE}
  \label{prop:rules:sound}
  \prooflink
  Let $Q$ be an axiomatisation and let $M$ be a complete model.
  If $M\models Q$ and $\derive Q H e \leq f$ then $M\models H \rightarrow e \leq f$.
\end{theoremE}
\begin{proofE}
  By induction on the derivation; the only difficulty is with the first rule from \Cref{fig:rules}, which involves substitutions and for which we use \Cref{lem:substitution commutes with interpretation}.
\end{proofE}

We provide a generic axiomatisation in \Cref{fig:naive}, which we call \emph{naive}.
Given a fragment $\F$, this axiomatisation $N(E,\F)$ comprises axioms ensuring that we get a semilattice with $+$ and $0$, that all operations from $\Sigma$ distribute over finite joins, that the equations from $E$ are satisfied, and that fixpoint expressions that are allowed in $\F$ are indeed least (pre)fixpoints.

\begin{figure}
  \begin{mathpar}
    \begin{array}{r@{~}l}
      0&\leq x\\
      x+x&\leq x\\
      x&\leq x+y\\
      y&\leq x+y\\
    \end{array}
    \and
    \begin{array}{r@{~}lc}
      s(\vec x,0,\vec z)&\leq 0& s^{(n+1)}\in\Sigma\\
      s(\vec x,y+y',\vec z)&\leq s(\vec x,y,\vec z)+s(\vec x,y',\vec z)& s^{(n+1)}\in\Sigma\\
      u&\leq v &(u=v) \in E\\
      v&\leq u &(u=v) \in E\\
    \end{array}\\
    \begin{array}{c@{\qquad}c}
      \sub e x {\mu x.e}\leq \mu x.e& \mu x.e\in\F\\
      \sub e x f\leq f\rightarrow \mu x.e\leq f& \mu x.e, f\in\F\\
    \end{array}
  \end{mathpar}
  \caption{Naive axiomatisation $N(E,\F)$ on equations $E$ over a fragment $\F$.}
  \label{fig:naive}
\end{figure}

As expected, this naive axiomatisation is sound for all continuous models:
\begin{propositionE}
  \label{lem:naive:sound}
  \prooflink
  For all fragments $\F$, we have $\CL\models N(E,\F)$.
\end{propositionE}
\begin{proofE}
  Again, the only difficulty comes from the substitutions involved in the last two axioms, about fixpoints, for which we use \Cref{lem:substitution commutes with interpretation}.
\end{proofE}
Combined with \Cref{prop:rules:sound}, we get that all Horn sentences derivable from $N(E,\F)$ are valid in all continuous models of $E$.

In the case of trees, i.e., when $E$ is empty, and when considering all expressions $(\F=\expr{X})$, Ésik has proved that the naive axiomatisation is complete for the empty set of hypotheses:
\begin{theorem}[{\cite[Theorem~21 with Remark~27]{esik2010tree}}]
  \label{thm:completeness for no equations}
  For all expressions $e,f$ we have
  \begin{align*}
    N(\emptyset,\expr{X}) \vdash e \leq f \iff \interp{e}_\emptyset \subseteq \interp{f}_\emptyset \iff \C\emptyset\models e\leq f\enspace.
  \end{align*}
\end{theorem}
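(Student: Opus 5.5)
The statement is a chain of two equivalences. The right-hand one, $\interp e_\emptyset\subseteq\interp f_\emptyset \iff \C\emptyset\models e\leq f$, is \Cref{cor:sound:complete} instantiated with $E=\emptyset$, so nothing new is needed there. In the left-hand equivalence, the forward direction (soundness) follows from \Cref{lem:naive:sound} and \Cref{prop:rules:sound}. A derivation $N(\emptyset,\expr X)\vdash e\leq f$ gives validity in every continuous model, in particular in $\lang X_\emptyset$ under $\eta$, which is exactly $\interp e_\emptyset\subseteq\interp f_\emptyset$. The real content is therefore completeness: from $\interp e_\emptyset\subseteq\interp f_\emptyset$, derive $e\leq f$ using only the semilattice axioms, distributivity of symbols over finite joins, and the least-prefixpoint axioms.

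For completeness I would follow Ésik's automata-theoretic route and treat expressions as regular tree grammars or tree automata. The first step is to show, inside $N(\emptyset,\expr X)$, that every expression is provably equal to the $\mu$-term solution of a system of guarded equations. Each such equation has the form $x_i = \sum_j s_j(x_{i_1},\dots,x_{i_k})$, a finite sum of symbols applied to recursion variables, with variables of $X$ treated as nullary symbols.

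This normalisation uses:
\begin{itemize}
\item distributivity, to push $+$ and $0$ out of arguments of symbols;
\item the Bekić/pairing identities and fixpoint unfolding, to flatten nested $\mu$'s into a single simultaneous system;
\item elimination of unguarded occurrences ($\mu x.x+g = \mu x.g$), which the least-prefixpoint rule makes provable.
\end{itemize}
Because the fragment is all of $\expr X$, the intermediate $\mu$-expressions stay within the fragment. The Bekić identity itself has to be derived from the two fixpoint axioms by the usual induction argument.

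Next, with $e$ and $f$ in the form of nondeterministic top-down tree automata $A$ and $B$, I would determinise $B$ bottom-up via the subset construction. I would then prove inside the axiomatisation that $f$ equals the expression for the determinised automaton, by showing each direction through the induction rule: each side is a prefixpoint of the other's system. Using the inclusion of languages, I would build a simulation from $A$ into the product with the deterministic automaton. This is a relation $R$ with $q\mathrel R S$ whenever every tree accepted from state $q$ is accepted from some state set reachable in $S$. For the states in $R$, the family of expressions $\sum_{S: q R S} f_S$ forms a prefixpoint of $A$'s system, provably by distributivity and the unfolding axiom. The induction axiom $\sub e x g\leq g\rightarrow \mu x.e\leq g$, in its vector form, then yields $e\leq f$.

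I expect the main obstacle to be the determinisation step. Top-down tree automata cannot be determinised, so the simulation must run bottom-up while the fixpoint equations read top-down. The equality between the original expression and the subset-automaton expression has to be derived using only distributivity over finite sums, without any antichain or complement reasoning. I would first try to prove the vector induction rule and Bekić in full generality, and then handle the subset construction by proving $x_q\leq\sum_{S\ni q} y_S$ and $y_S\leq\sum_{q\in S}x_q$ separately, each by one application of the prefixpoint rule. Since this is precisely Ésik's Theorem 21 and Remark 27, I would ultimately appeal to that argument for the technical core rather than redo it.
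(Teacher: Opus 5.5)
Your handling of the easy parts matches the paper. The right-hand equivalence is \Cref{cor:sound:complete} at $E=\emptyset$, and the implication from derivability to language inclusion follows from \Cref{prop:rules:sound} and \Cref{lem:naive:sound}. For the completeness direction the paper does nothing beyond citing \'Esik, which is what you ultimately do. Relying on that citation, your proof is fine.

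The self-contained route you sketch, however, fails at the step you single out.

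\textbf{The candidate is not a prefixpoint.} The family $g_S=\sum_{q\in S}x_q$ is not a prefixpoint of the subset system. Take transitions $p\to s(p_1)$, $p_1\to a$, $p_2\to a$, $p_2\to b$. Then $\delta_s(\{p_1,p_2\})=\{p\}$, but $s(g_{\{p_1,p_2\}})\geq s(x_{p_2})$ denotes a language containing $s(b)$, which $p$ rejects.

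\textbf{The target inequality is too weak.} Even if derived, $y_S\le\sum_{q\in S}x_q$ would not suffice. The simulation step $e\le\sum_{S\in\mathcal S}y_S$ does go through if $\mathcal S$ consists of the sets of $B$-states accepting some tree of $e$. Every such set contains the initial state $p_0$ of $B$, so what you then need is $y_S\le x_{p_0}$. That is, $y_S$ must lie below \emph{every} $x_p$ with $p\in S$: this is a meet, which has no syntactic counterpart and cannot serve as a Park candidate directly.

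\textbf{A repair.} Let $N$ be the number of states of $B$, and take the product system over tuples $\vec r=(r_1,\dots,r_N)$ of states of $B$. In it, $z_{\vec r}$ has a summand $s(z_{\vec r^1},\dots,z_{\vec r^k})$, where $\vec r^i=(r^i_1,\dots,r^i_N)$, for each choice of transitions $r_j\to s(r^1_j,\dots,r^k_j)$, one per coordinate $j$. Single Park steps then give two facts:
\begin{itemize}
\item $z_{\vec r}\le x_{r_j}$, using candidate $x_{r_j}$;
\item $z_{\vec r}\le z_{\vec r\circ\phi}$ for every $\phi\colon\{1,\dots,N\}\to\{1,\dots,N\}$, using candidate $z_{\vec r\circ\phi}$.
\end{itemize}
For each nonempty $S$, fix a tuple $\vec S$ whose set of components is $S$. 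Using the second fact, the family $(z_{\vec S})_{S\neq\emptyset}$ is a prefixpoint of the subset system restricted to nonempty sets, which is a closed subsystem. Hence $y_S\le z_{\vec S}\le x_p$ for all $p\in S$.

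The normal-form step (Beki\'c, vector Park induction, removal of unguarded recursion) would still have to be carried out in full.
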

Unfortunately, this result does not generalise to arbitrary equations $E$ (and even less to arbitrary hypotheses $H$, as those can be used to emulate equations).

\begin{example}\label{example:countermodel}
  Consider for instance monoids and regular expressions, hence Kleene algebra.
  A counter-model can be used to show that the naive axiomatisation cannot derive $a^*\cdot a \leq a^*$ (see~\Cref{countermodel}, where we provide a complete model of the naive axiomatisation which is not continuous and refutes this law, although $\interp{a^*\cdot a}\subseteq \interp{a^*}$).
  Instead, several axiomatisations have been proposed by Conway~\cite{conway1971regular}, which were later proved complete by Krob~\cite{Krob91a}, Kozen~\cite{Kozen91} and Boffa~\cite{boffa1990remarque,Boffa95}.
  Amongst them we find \emph{left-handed Kleene algebra}~\cite{KozenS12,ddp:lpar18:lefthanded,Kappe23}, which is close to the naive axiomatisation, except that the least fixpoint induction
  axiom from the naive axiomatisation, which can be reformulated here as $1+e\cdot f\leq f \rightarrow e^*\leq f$,
  % \begin{align*}
  %   1+a\cdot f\leq f \rightarrow a^*\leq f\enspace,
  % \end{align*}
  is strengthened into $h+e\cdot f\leq f \rightarrow e^*\cdot h\leq f$.
  % \begin{align*}
  %   b+a\cdot f\leq f \rightarrow a^*\cdot b\leq f\enspace. 
  % \end{align*}
  The resulting ability to perform inductions on stars ``under a context $h$ on the right'' is crucial for completeness.
\end{example}

We conclude this section by showing that there are nevertheless situations where the naive axiomatisation is enough in the presence of both equations and hypotheses: when some of the expressions are fixpoint-free. We fix a fragment $\F$ in the remainder of this section.

\begin{textAtEnd}%

  \begin{counterexample}
    \label{countermodel}
    We show here that the naive axiomatisation is incomplete on monoids.
    We do so by providing a model which is complete so that we have all least fixpoints, including Kleene star, but where $a^*\cdot a\leq a^*$ does not hold although we have $\interp{a^*\cdot a}\subseteq\interp{a^*}$.
    
    As elements for the model we take $0$, $a^i$ for all $i\in\mathbb N$, $a^*$ and $\top$, with the following complete linear order:
    \begin{align*}
      0< 1 < a < a^2 < \dots < a^* < \top
    \end{align*}
    The bottom element is $0$, $+$ is given by max, $1$ is $a^0$ and we define product as follows:
    \begin{itemize}
    \item $0\cdot x= x\cdot 0 =0$,
    \item $1\cdot x = x\cdot 1 =x$,
    \item $a^i\cdot a^j = a^{i+j}$,
    \item $a^i\cdot a^*=a^*$,
    \item and $x\cdot y = \top$ in the remaining cases (note in particular that $a^*\cdot a^i=\top$ whenever $i>0$).
    \end{itemize}
    This is a partially ordered monoid, and all least fixpoints of monotone functions are available since the order is complete.
    The naive axioms from \Cref{fig:naive} are all satisfied, including distributivity of product over finite suprema.
    
    We may moreover compute Kleene star $(x^*=\mu y.1+x\cdot y)$ and observe that we have:
    \begin{itemize}
    \item $0^*=1^*=1$,
    \item $(a^i)^*=a^*$ for $i>0$,
    \item $(a^*)^* = \top^* =\top$.
    \end{itemize}
    Thus we have $a^*\cdot a = \top \not\leq a^*$. 
    The trick is that product, while distributing over finite suprema, is not continuous in its second argument: this is not a continuous model.        
  \end{counterexample}
  
  We continue this appendix by stating four basic properties of derivations; their proofs are routine and thus omitted.
  
  In the first rule and in the reflexivity rule, we make sure that we only construct expressions in the fragment $\F$ so that we have:
  \begin{lemma}
    If $Q$ is an axiomatisation over a fragment $\F$ and $\derive Q H e\leq f$, then both $e$ and $f$ belong to $\F$.
  \end{lemma}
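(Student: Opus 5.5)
We argue by induction on the derivation of $\derive Q H e\leq f$, following the rules of \Cref{fig:rules}. Throughout we use the closure properties of fragments from \Cref{def:fragment}: $\F$ contains $0$ and the variables, and is closed under $+$, the symbols of $\Sigma$, and application of substitutions $\theta\colon X\pto\F$.

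The first rule concludes $\derive Q H e\theta\leq f\theta$ from an axiom $(H'\to e\leq f)\in Q$ and a substitution $\theta\colon X\pto\F$. Since $Q$ is an axiomatisation over $\F$, the expressions $e$ and $f$ lie in $\F$. Closure of $\F$ under such substitutions then gives $e\theta,f\theta\in\F$. The induction hypotheses on the premises are not even needed here.

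The second rule concludes $e\leq f$ for $(e\leq f)\in H$. Since $H$ is a set of hypotheses in the fragment, both sides are in $\F$.

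The reflexivity rule carries the side condition $e\in\F$, which is exactly what we need.

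The transitivity rule concludes $e\leq g$ from premises $e\leq f$ and $f\leq g$. By induction, the first premise gives $e\in\F$ and the second gives $g\in\F$.

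The two monotonicity rules conclude $e_1+e_2\leq f_1+f_2$ and $s(\vec e)\leq s(\vec f)$. The induction hypotheses give all $e_i,f_i\in\F$, and closure of $\F$ under $+$ and under the operations of $\Sigma$ finishes these cases.

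There is no real obstacle. The only point needing care is the first rule, where the side condition $\theta\colon X\pto\F$ is exactly what makes closure under substitution applicable. Note that closure under substitution must cover partial maps; the definition of fragments provides this.
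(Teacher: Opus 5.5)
Your proof is correct and is the routine induction on derivations that the paper leaves implicit; the paper notes that the axiom rule and the reflexivity rule were designed so that only expressions of $\F$ are produced. Those are exactly the two rules where you invoke closure of $\F$ under substitutions $\theta\colon X\pto\F$ and the side condition $e\in\F$, while the remaining rules follow from $H$ lying in $\F$ and from closure of $\F$ under $+$ and the symbols of $\Sigma$.
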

  
  Derivations are stable under substitutions, as long as substituted variables are not involved in the hypotheses:
  \begin{lemma}
    If $Q$ is an axiomatisation over a fragment $\F$ and $\derive Q H e\leq f$, then $\derive Q H e\theta\leq f\theta$ for all substitutions $\theta\colon X\pto\F$ whose domain does not contain variables appearing in $H$.
  \end{lemma}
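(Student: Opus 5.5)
The statement is the substitution lemma for derivations: if $\derive Q H e\leq f$, then $\derive Q H e\theta\leq f\theta$ for every substitution $\theta\colon X\pto\F$ whose domain avoids the variables of $H$. I would prove it by induction on the derivation of $\derive Q H e\leq f$, one case per rule of \Cref{fig:rules}. Throughout, $\theta$ stays fixed.

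\emph{Structural rules.} For reflexivity, $e\theta$ belongs to $\F$ because fragments are closed under substitutions $\theta\colon X\pto\F$ (\Cref{def:fragment}), so the reflexivity rule applies directly. Transitivity, monotonicity of $+$, and monotonicity of symbols $s\in\Sigma$ follow from the induction hypotheses. Here I use that substitution commutes with these constructors syntactically: $(e_1+e_2)\theta=e_1\theta+e_2\theta$ and $s(\vec e)\theta=s(\vec e\theta)$, by the definition on page~\pageref{page:subst}.

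\emph{Hypothesis rule.} The conclusion is some $(e\leq f)\in H$. Since the domain of $\theta$ contains no variable occurring in $H$, we get $e\theta=e$ and $f\theta=f$. This uses a small auxiliary fact, proved by induction on expressions: if $\theta$ is undefined on every variable of $X$ occurring in $g$, then $g\theta=g$. The fact is immediate because $\theta$ only acts on $X$ and never interacts with binders over $R$. So the same hypothesis rule concludes.

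\emph{Axiom rule.} The derivation ends with $e=e_0\theta'$ and $f=f_0\theta'$ for some $(H'\to e_0\leq f_0)\in Q$, and the premises are $\derive Q H e'\theta'\leq f'\theta'$ for each $(e'\leq f')\in H'$. I would define the composite substitution $\theta''\colon X\pto\F$ by
\[\theta''(x)=\theta'(x)\theta\ \text{ if } \theta'(x) \text{ is defined, and } \theta''(x)=\theta(x) \text{ otherwise.}\]
It takes values in $\F$, again by closure of $\F$ under substitutions.

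The key step, and the main obstacle, is the composition lemma $(g\theta')\theta=g\theta''$ for every expression $g$. I would prove it by induction on $g$. The variable, $0$, $+$ and $s$ cases are direct. In the $\mu x.g$ case, $\theta$ and $\theta'$ only act on $X$, so removing $x\in R$ leaves them unchanged, and the substitutions are closed, so there is no variable capture. With this lemma, the induction hypothesis turns each premise into $\derive Q H e'\theta''\leq f'\theta''$. Applying the axiom rule with $\theta''$ then yields $\derive Q H e_0\theta''\leq f_0\theta''$, which is exactly $e\theta\leq f\theta$.
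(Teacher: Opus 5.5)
Your proposal is correct and matches the routine argument the paper has in mind; the paper omits this proof as routine. The argument is induction on the derivation, and the only substantive case is the axiom rule, which you handle by composing the instantiating substitution $\theta'$ with $\theta$ and proving $(g\theta')\theta=g\theta''$ by induction on $g$ (taken over possibly open expressions so that the $\mu$ case goes through).
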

    
  The following lemma makes it possible to transport proofs from one axiomatisation to another.
  There we write $\derive Q H H'$ when all inequations in $H'$ can be derived from $Q$ and $H$, and $Q \proves Q'$ when all sentences from $Q'$ can be derived from $Q$.
  \begin{lemma}
    \label{lem:derivation strengthening}
    Let $Q, Q'$ be axiomatisations over a given fragment and
    let $H, H'$ be sets of inequations in that same fragment.
    \begin{align*}
      \text{If }
      \begin{cases}
        Q \proves Q' \\
        \derive Q H H'
      \end{cases}
      \!\!
      \text{then for all expressions }e,f,~
      \derive {Q'} {H'} e \leq f
      \text{ implies }
      \derive Q H e \leq f\,.
    \end{align*}
  \end{lemma}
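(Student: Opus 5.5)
We argue by induction on the derivation of $\derive {Q'} {H'} e\leq f$, proving for each of its nodes that the same conclusion is derivable as $\derive Q H e\leq f$. The expressions involved all lie in the common fragment $\F$, so the side conditions on fragment membership are preserved when we rebuild the derivation.

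The structural rules are immediate. For reflexivity we use the reflexivity rule for $Q$. For transitivity, for monotonicity of $+$, and for monotonicity of the symbols $s\in\Sigma$, we apply the induction hypothesis to the premises and then the same rule in the $Q$-system. For the hypothesis rule, the conclusion is some $(e\leq f)\in H'$, and the assumption $\derive Q H H'$ gives exactly $\derive Q H e\leq f$.

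The only substantial case is the first rule, which uses an axiom $(H''\to e\leq f)\in Q'$ under a substitution $\theta\colon X\pto\F$. Its premises are $\derive{Q'}{H'} e'\theta\leq f'\theta$ for all $(e'\leq f')\in H''$, and by induction we get $\derive Q H e'\theta\leq f'\theta$ for each of them. From $Q\proves Q'$ we know $\derive Q {H''} e\leq f$. This is the main obstacle: we need to instantiate that derivation by $\theta$ while simultaneously discharging its hypotheses $H''$, which are now instantiated as $H''\theta$.

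To handle it, we prove an auxiliary \emph{cut/substitution} lemma by a separate induction on derivations. If $\derive Q {K} g\leq h$, and $\theta\colon X\pto\F$ is such that $\derive Q H g'\theta\leq h'\theta$ for every $(g'\leq h')\in K$, then $\derive Q H g\theta\leq h\theta$. In the hypothesis case we use the assumption. Reflexivity, transitivity and monotonicity commute with substitution, since $\theta$ distributes over $+$ and the symbols $s$, and $\F$ is closed under substitutions. In the axiom case, an instance $g_0\theta_0\leq h_0\theta_0$ becomes $g_0(\theta_0;\theta)\leq h_0(\theta_0;\theta)$, where $\theta_0;\theta$ is the composite substitution, mapping $x$ to $\theta_0(x)\theta$ when defined and to $\theta(x)$ otherwise. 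This requires checking that $(g\theta_0)\theta = g(\theta_0;\theta)$; substitutions of $X$-variables do not interact with binders since all substituted expressions are closed, so this holds. The composite again lands in $\F$ by closure of the fragment. Applying the lemma with $K=H''$ then closes the first-rule case, which completes the induction.
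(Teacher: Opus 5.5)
Your proof is correct and is the routine induction on derivations that the paper has in mind (the paper omits the proof as routine). You also correctly identify that the axiom case needs a combined cut-and-substitution lemma. That lemma is strictly stronger than the paper's stated stability-under-substitution lemma, which only allows substituting variables that do not occur in the hypotheses, whereas here the instantiated variables may occur in the premises $H''$ of the axiom.
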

  Note that the first two requirements are always satisfied when $Q'\subseteq Q$ and $H'\subseteq H$, which is a special case we frequently face.
  
  \begin{lemma}
    \label{lem:extend:id}
    Let $Q$ be an axiomatisation over a fragment $\F$ deriving all the sentences in $\Naivegen$ and
    let $H$ be a set of hypotheses over $\F$.
    If $\theta\colon X\pto \F$ is a substitution such that for all variables $x$ in its domain we have $\derive Q H \theta(x) = x$, then
    for all expressions $e\in\F$, we have $\derive Q H e\theta = e$.
  \end{lemma}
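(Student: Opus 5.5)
The plan is structural induction on $e\in\F$. The target is an equation, so at each stage I will derive two inequations, $e\theta\leq e$ and $e\leq e\theta$. Two closure facts will be used throughout. First, $\F$ is closed under application of substitutions $\theta\colon X\pto\F$ and under closed sub-expressions, so all intermediate expressions stay in $\F$. Second, the rules of \Cref{fig:rules} give reflexivity, transitivity, and congruence for $+$ and for every $s\in\Sigma$.

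\textbf{Base and congruence cases.}
\begin{itemize}
\item Variable $x\in X$: if $x$ lies in the domain of $\theta$, then $x\theta=\theta(x)$ and the hypothesis gives $\derive Q H \theta(x)=x$. Otherwise $x\theta=x$ and reflexivity suffices.
\item $e=0$: immediate by reflexivity.
\item $e=e_1+e_2$ and $e=s(e_1,\dots,e_n)$: the substitution commutes with the constructor. The sub-expressions are closed and in $\F$, so the induction hypothesis applies to them, and the monotonicity rules for $+$ and $s$, applied in both directions, conclude.
\end{itemize}

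\textbf{Fixpoint case.} Let $e=\mu x.e'$. Since $\theta$ only substitutes variables of $X$, which are never bound, we have $(\mu x.e')\theta=\mu x.(e'\theta)$. I will use the least-fixpoint induction axiom from $\Naivegen$, which $Q$ derives.

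To show $\mu x.e'\leq\mu x.(e'\theta)$, it suffices by induction to prove $\sub{e'}{x}{\mu x.(e'\theta)}\leq\mu x.(e'\theta)$. I will obtain this by combining two facts:
\begin{itemize}
\item the unfolding axiom gives $\sub{(e'\theta)}{x}{\mu x.(e'\theta)}\leq \mu x.(e'\theta)$;
\item the \emph{induction hypothesis}, applied to the closed expression $f=\sub{e'}{x}{\mu x.(e'\theta)}$, gives $f\theta = f$, that is $\sub{(e'\theta)}{x}{\mu x.(e'\theta)}$ equals $\sub{e'}{x}{\mu x.(e'\theta)}$.
\end{itemize}
The second fact uses that $\mu x.(e'\theta)$ is closed and already fully substituted, so substituting it for $x$ commutes with $\theta$. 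The converse inequation is symmetric: I apply fixpoint induction for $\mu x.(e'\theta)$ with bound $\mu x.e'$, and use the expression $\sub{e'}{x}{\mu x.e'}$ together with the unfolding axiom for $\mu x.e'$.

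\textbf{Main obstacle: well-foundedness.} The expressions $\sub{e'}{x}{g}$ are not structurally smaller than $\mu x.e'$, so naive structural induction fails. I would repair this by generalising the statement to open expressions. For every possibly open $e'$ (closed up to recursion variables) and every substitution $\rho$ of its free recursion variables by pairs of closed expressions $g_y\in\F$ and $h_y\in\F$ with $\derive Q H g_y\theta=h_y$, I would prove $\derive Q H (e'\rho_g)\theta = e'\rho_h$.

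The induction is then on the open expression $e'$ itself, which is structural. In the $\mu$ case, I extend $\rho$ with $x\mapsto(\mu x.e')\rho_g$ and $x\mapsto(\mu x.e')\rho_h$, and fixpoint induction closes the argument. One caveat: the pair assigned to the new variable $x$ is exactly the pair whose equality is being proved, so the extension is only legitimate inside the fixpoint induction, where the inequation for $x$ is the one being assumed. The original statement is the instance with empty $\rho$.

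I will also need to check that the required membership conditions hold, namely $\sub{e'}{x}{f}\in\F$ whenever $\mu x.e'$ and $f$ belong to $\F$.
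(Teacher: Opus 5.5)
The base and congruence cases are fine, and generalising to open expressions is the right instinct. However, the fixpoint case has a genuine gap.

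First, the commutation in your direct attempt is false. With $g=\mu x.(e'\theta)$ and $f=e'[g/x]$, you have $f\theta=(e'\theta)[g\theta/x]$ and $g\theta=\mu x.(e'\theta\theta)$. This differs from $g$ whenever $\theta$ is not idempotent, e.g.\ for $a\mapsto a\cdot a^*$. That is exactly the substitution of \Cref{ex:KA:transitive}, for which this lemma is used via \Cref{thm:homomorphic reduction}.

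Second, and decisively, your repaired induction is circular. The pair you attach to $x$ has side condition $((\mu x.e')\rho_g)\theta=(\mu x.e')\rho_h$, which is the very goal of the $\mu$ case. The fixpoint induction axiom of $N(E,\F)$ never lets you assume it: its only premise is the pre-fixpoint inequation for the chosen bound. There is no ``inside the fixpoint induction'' where the goal becomes available.

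The underlying problem is the shape of your invariant. Because $\theta$ is applied after $\rho_g$, recursion-variable slots can only be filled by expressions of the form $g\theta$. Fixpoint induction, however, needs to fill the $x$ slot with the bound $B=\mu x.(e'\rho_h)$ itself, which is generally not of that form.

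The repair is to substitute for recursion variables \emph{after} $\theta$, with arbitrary closed arguments. Write $e[\vec g/\vec y]$ for simultaneous substitution of closed expressions for the free recursion variables of an open $e$ occurring below binders of some expression of $\F$. Then prove, by structural induction on $e$, that for all closed $\vec g,\vec h\in\F$ with $\derive Q H g_i\leq h_i$ for all $i$, both $(e\theta)[\vec g/\vec y]\leq e[\vec h/\vec y]$ and $e[\vec g/\vec y]\leq(e\theta)[\vec h/\vec y]$ are derivable.
\begin{itemize}
\item Recursion variables give $g_i\leq h_i$.
\item Variables of $X$ use the hypothesis on $\theta$; the $\theta(x)$ are closed, hence untouched by the substitution.
\item The cases $+$ and $s$ are congruences.
\item For $e=\mu x.e'$ and the first inequation, let $B=\mu x.(e'[\vec h/\vec y])$. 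By fixpoint induction it suffices to derive $(e'\theta)[\vec g/\vec y,B/x]\leq B$. This follows from the induction hypothesis on $e'$ with the extra \emph{reflexive} pair $(B,B)$, followed by the unfolding axiom for $B$.
\item The second inequation is symmetric, using $A=\mu x.((e'\theta)[\vec h/\vec y])$ and the pair $(A,A)$.
\end{itemize}
Membership in $\F$ follows from the last closure condition of \Cref{def:fragment}, iterated along the enclosing binders, together with closure under $\theta$ and sub-expressions. The lemma is the instance with no free recursion variables.
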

  
  \medskip

  Now we move to the proof of \Cref{lem:expressions can be compared using terms}.
  First we provide a concrete description of language substitution on term languages, that is when $E$ is empty.
  \longnotations%
  \begin{lemma}
    \label{lem:explicit:term:lsubst}
    Let $X,Y$ be sets of variables and let $x$ be a variable not in $X$.
    Let $K\subseteq\term{X\cup\set x}$ and $L\subseteq\term{Y}$ be two term languages.
    We have
    \begin{align*}
      \sub K x L = \set{t\sigma \mid Z\text{ a set },~t\in \term{X\uplus Z},~\sigma\colon Z \to L,~t(Z \mapsto x)\in K}\enspace.
    \end{align*}
    (There we write $\sigma \colon Z \to L$ to mean that $\sigma$ is a substitution mapping each variable in $Z$ to a term in $L$, seen as an expression.)
  \end{lemma}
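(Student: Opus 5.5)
We prove the two inclusions separately, unfolding the definition of language substitution. By definition, $\sub K x L=\lext{\tau}(K)=\bigcup_{u\in K}\ext\tau(u)$, where $\tau=(\eta_X, x\mapsto L)$ and $\ext\tau\colon\term{X\cup\set x}\to\lang{X\cup Y}$ is the homomorphic extension into the powerset lifting of terms; here $E$ is empty, so atoms are terms. The whole statement therefore reduces to an explicit description of $\ext\tau(u)$ for a single term $u$, namely
\begin{align*}
  \ext\tau(u)=\set{t\sigma\mid Z\text{ a set},~t\in\term{X\uplus Z},~\sigma\colon Z\to L,~t(Z\mapsto x)=u}\enspace.
\end{align*}
The lemma then follows by taking the union over $u\in K$, since the condition $t(Z\mapsto x)\in K$ says exactly that $t(Z\mapsto x)=u$ for some $u\in K$.

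We prove this pointwise description by structural induction on $u$, with three cases.

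If $u=y$ with $y\in X$, then $\ext\tau(u)=\set y$. On the right-hand side, $t(Z\mapsto x)=y$ forces $t=y$, so $t$ contains no variable of $Z$ and $t\sigma=y$. The set of such terms is therefore $\set y$.

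If $u=x$, then $\ext\tau(u)=L$. On the right-hand side, $t$ must be a single variable $z\in Z$, and $t\sigma=\sigma(z)$ ranges over all of $L$, since $\sigma$ can be chosen freely.

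If $u=s(u_1,\dots,u_n)$, then $\ext\tau(u)=\set{s(v_1,\dots,v_n)\mid v_i\in\ext\tau(u_i)}$ by definition of the powerset lifting. For the inclusion of the left-hand side in the right-hand side, we apply the induction hypothesis to each $u_i$ to obtain witnesses $(Z_i,t_i,\sigma_i)$. We rename the $Z_i$ apart so they are pairwise disjoint, and then take $Z=\biguplus Z_i$, $t=s(t_1,\dots,t_n)$ and $\sigma=\bigcup\sigma_i$. For the converse inclusion, we decompose $t=s(t_1,\dots,t_n)$, since $t$ cannot be a variable of $Z$ when $u$ is not $x$. We restrict $\sigma$ to the variables of $Z$ that occur in each $t_i$, and apply the induction hypothesis to each component.

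The main obstacle is the bookkeeping of the fresh-variable sets $Z$. The statement lets $Z$ be arbitrary, and unused variables of $Z$ are harmless because they do not occur in $t$. What must be handled carefully is that distinct occurrences of $x$ in $u$ are replaced by independently chosen elements of $L$, and this is exactly why $Z$ consists of one fresh variable per occurrence rather than a single variable. This independence is what makes the inclusions above exact and not merely one-sided. The renaming step in the $s$ case needs care too: we must check that renaming the variables of $Z_i$ changes neither $t_i(Z_i\mapsto x)$ nor $t_i\sigma_i$, which holds because both are invariant under bijective renamings of $Z_i$, adjusted accordingly in $\sigma_i$.
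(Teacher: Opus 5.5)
Your proof is correct and takes the same route as the paper, which only gives an informal sketch: each occurrence of $x$ in a term of $K$ is replaced independently by a term of $L$, with $Z$ tagging the occurrences. Your reduction to a single term via $\lext\tau(K)=\bigcup_{u\in K}\ext\tau(u)$, followed by structural induction on $u$, is exactly the rigorous version of that sketch (and in the converse direction of the $s$ case you need not even restrict $\sigma$, since the induction hypothesis applies to each $t_i$ with the same $Z$ and $\sigma$).
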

  \begin{proof}[Proof sketch]
    Observe that $\sub K x L$ consists of terms in $\term{X\cup Y}$, which are obtained from terms in $K$ by replacing each occurrence of $x$ with a term in $L$.
    The set $Z$ makes it possible to distinguish the occurrences of $x$, the substitution $\sigma$ sends those occurrences to terms in $L$, and the substitution $Z\mapsto x$ maps all of them to $x$ to recover the term from $K$.
  \end{proof}
  \shortnotations%     
\end{textAtEnd}%

First we show that the naive axiomatisation proves memberships: %
\begin{propositionE}
  \label{lem:expressions can be compared using terms}
  \prooflink
  For all expressions $e\in \F$ and for all $a\in\interp e_E$, we have
  $\Naivegen\vdash \underline a \leq e$.
\end{propositionE}
\begin{proofE}
  \longnotations
  Without loss of generality, we may assume that $E$ is empty.
  Indeed, if an atom $a$ belongs to $\interp e_E$ then $\underline a$ is equivalent via $E$ to a term $t\in \interp e_\emptyset$ via~\Cref{cor:equivclass}; since $\Naivegen$ contains $E$, the statement with $E=\emptyset$ is sufficient to conclude by transitivity.

  In order to be able to proceed by induction while staying in the fragment $\F$ (whose variables are in a fixed $X$), we prove the following generalisation:
  \renewcommand\Naivegen{N(\emptyset,\F)}%
  \begin{quote}
    For all sets $Y$, for all expressions $e\in \expr Y$, for all substitutions $\theta\colon Y\pto \F$ such that $e\theta\in \F$, and for all $t\in \interp e_\emptyset$, we have 
    $\Naivegen\vdash t\theta \leq e\theta$.
  \end{quote}
  In the remainder of this proof we just write $\interp e$ for $\interp e_\emptyset$ and $N$ for $\Naivegen$.

  We proceed by induction on $e$
  \begin{itemize}    
  \item if $e$ is a variable $x$ then we conclude by reflexivity;
  \item if $e$ is $0$ then $\interp e = \emptyset$ and the statement vacuously holds.
  \item if $e=e_1+e_2$, then $e\theta\in\F$ implies that both $e_1\theta$ and $e_2\theta$ belong to $\F$ since a fragment must be closed under subexpressions, and we can easily conclude using the induction hypothesis and the axioms about $+$;
  \item if $e=s(e_1,\dots,e_n)$, then $e\theta\in\F$ similarly implies that $e_i\theta$ belongs to $\F$ for all $i$, and we can easily conclude using the induction hypothesis and the monotonicity rule for $s$;
  \item if $e=\mu x.e'$ for some $x$-expression $e'$ then consider the language
    $L\eqdef \set{t \mid N\vdash t\theta \leq e\theta}$, so that we have to prove  $\interp e\subseteq L$. 
    By~\Cref{cor:hop} it suffices to show $\sub{\interp{e'}}{x}{L} \subseteq L$.
    
    Let $t' \in \sub{\interp{e'}}{x}{L}$. By \Cref{lem:explicit:term:lsubst}, we have $t'=t\sigma$ for some set $Z$,
    term $t\in \term{Y\uplus Z}$, and substitution $\sigma\colon Z \to L$ such that $t(Z \mapsto x)\in \interp{e'}$.
    We derive:
    \begin{align*}
      \tag{with $\rho(z)\eqdef \sigma(z)\theta$}
      N \vdash      
      t'\theta = t\sigma\theta
      &=t\theta\rho\\
      \tag{monotonicity with $\forall z,~\sigma(z)\in L$}
      &\leq t\theta(Z \mapsto e\theta)\\
      \tag{with $\theta'\eqdef\theta,x\mapsto e\theta$}
      &=t(Z \mapsto x)\theta'\\
      \tag{induction hypothesis on $e'$}
      &\leq e'\theta'\\
      &=\sub{e'}xe\theta\\
      \tag{fixpoint unfolding axiom}
      &\leq e\theta
    \end{align*}
    thus $t'\in L$, as required.
    (When we use the induction hypothesis, we have to check that $e'\theta'$ belongs to $\F$, which is the case because $e\theta$ does, thanks to our last requirement on fragments---\Cref{def:fragment}).
    \qedhere
  \end{itemize}
\end{proofE}
We deduce that fixpoint-free expressions can be rewritten as finite sums of terms:
\begin{lemmaE}
  \label{lem:fixpointfree:finite}
  \prooflink
  For all fixpoint-free expressions $e$, $\interp e_E$ is finite and 
  $\Naivegen\vdash e=\sum_{a\in\interp e}\underline a$.
\end{lemmaE}%
\begin{proofE}%
  By induction on $e$, using continuity in the case of function symbols from $\Sigma$.
\end{proofE}%
\begin{textAtEnd}%
  Thus we may reason pointwise to prove upper-bounds of fixpoint-free expressions:
  \begin{corollary}
    \label{cor:fixpointfree:prove}
    For all hypotheses $H$, for all fixpoint-free expressions $e$, and for all expressions $f$,
    $\derive\Naivegen H e\leq f$ iff $\forall a\in\interp e_E,~\derive\Naivegen H\underline a\leq f$.
  \end{corollary}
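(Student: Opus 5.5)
The plan is to rewrite $e$ as a finite sum of terms and then use the join rules in both directions. By \Cref{lem:fixpointfree:finite}, $\interp e_E$ is finite and $\Naivegen\vdash e=\sum_{a\in\interp e}\underline a$. Since $\Naivegen$ has no hypotheses attached, this derivation transfers to $\derive\Naivegen H$ for any $H$, by \Cref{lem:derivation strengthening} with $H'=\emptyset\subseteq H$. The question therefore reduces to comparing a finite sum of terms with $f$.

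\emph{Left-to-right.} Fix $a\in\interp e_E$ and assume $\derive\Naivegen H e\leq f$.
\begin{itemize}
\item By \Cref{lem:expressions can be compared using terms}, we have $\Naivegen\vdash\underline a\leq e$ (this needs $e\in\F$, which holds since the conclusion of a derivation lies in $\F$).
\item Lifting this to the judgement under $H$ as above and applying transitivity gives $\derive\Naivegen H\underline a\leq f$.
\end{itemize}

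\emph{Right-to-left.} Assume $\derive\Naivegen H\underline a\leq f$ for every $a\in\interp e_E$. We show $\derive\Naivegen H \sum_{a\in\interp e}\underline a\leq f$ by induction on the finite sum.
\begin{itemize}
\item \emph{Empty sum.} The sum is $0$, and the axiom $0\leq x$, instantiated with $x\mapsto f$, gives $0\leq f$.
\item \emph{Inductive step.} From $g\leq f$ and $\underline a\leq f$, the monotonicity rule for $+$ gives $g+\underline a\leq f+f$. The axiom $x+x\leq x$ instantiated at $f$ gives $f+f\leq f$, and transitivity concludes.
\end{itemize}
Composing with $e\leq\sum_a\underline a$ by transitivity yields $\derive\Naivegen H e\leq f$.

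The only delicate point is fragment membership: every expression appearing in these derivations must lie in $\F$. Terms $\underline a$ are built from variables and $\Sigma$, and finite sums are built with $+$, so all of them belong to any fragment. Otherwise the argument is bookkeeping, and I expect no real obstacle beyond checking that instantiating axioms with $f$ is legitimate, since $f\in\F$.
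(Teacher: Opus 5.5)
Your proof is correct and takes the same route the paper intends: the corollary is stated there without proof, right after \Cref{lem:fixpointfree:finite}, and the intended argument is to rewrite $e$ as $\sum_{a\in\interp{e}_E}\underline a$ and conclude with the join axioms and transitivity. Your explicit assumption $f\in\F$ is genuinely needed, since for $\interp{e}_E=\emptyset$ the right-hand side is vacuous while $0\leq f$ is only derivable when $f\in\F$; this matches the paper's convention that expressions range over the fixed fragment.
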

\end{textAtEnd}%
It follows that the naive axiomatisation is complete for inequations whose left-hand side is fixpoint-free, when the right-hand sides of the hypotheses are also fixpoint-free.
\begin{theoremE}
  \label{thm:partial completeness for axioms}
  \prooflink
  Let $H = \set{e_i \leq f_i \mid i\in I}$ be a set of hypotheses and let $e,f$ be expressions.
  If $e$ and all $f_i$ are fixpoint-free, then we have
  \begin{align*}
    \derive\Naivegen H e \leq f \iff \cinterp{e}_E \subseteq \cinterp{f}_E\enspace.
  \end{align*}
\end{theoremE}
\begin{proofE}
    The left-to-right implication follows from \Cref{prop:rules:sound,lem:naive:sound}.
    For the converse implication, abbreviate $\Naivegen$ as $N$ and consider the following language:
    \begin{align*}
      L \eqdef \{ a \in \atom X \mid \derive N H \underline a \leq f \}\enspace.
    \end{align*}
    Below we show $\interp{f} \subseteq L$ and $HL \subseteq L$.
    From this we deduce $\cinterp{f} \subseteq L$, whence $\interp{e} \subseteq L$ by assumption, and finally $\derive N H e \leq f$ by \Cref{cor:fixpointfree:prove}, since $e$ is fixpoint-free.

    \Cref{lem:expressions can be compared using terms} immediately yields $\interp{f} \subseteq L$.  
    To show $HL \subseteq L$, let $a \in HL$.
    We have $a \in \interp{Ce_i}$ for some $i$ with $\interp{Cf_i} \subseteq L$.
    Since $f_i$ is fixpoint-free, so is $Cf_i$, and \Cref{cor:fixpointfree:prove} yields $\derive N H Cf_i \leq f$.
    We also have $N \vdash \underline a \leq Ce_i$ by \Cref{lem:expressions can be compared using terms}, thus we get
    \begin{align*}
      \derive N H \underline a \leq Ce_i \leq Cf_i \leq f\enspace.
    \end{align*}
    This shows $a \in L$, hence that $HL \subseteq L$.
\end{proofE}

\section{Reductions}
\label{sec:reductions}
\pratendSetLocal{category=reductions}

We have seen that the naive axiomatisation (\Cref{fig:rules,fig:naive}) is sound with respect to the interpretation $H^*\interp{\_}$ but not always complete.
Nevertheless, there are complete axiomatisations in the literature. That is, for specific signatures $\Sigma$, sets of equations $E$, and sets of hypotheses $H$, there are axiomatisations $Q$, such that for all expressions $e,f$, we have
\begin{equation}
  \label{eq:completefor}
  \tag{$\dagger$}
  \derive{Q}{H} e \leq f \quad\implied\quad \cinterp{e}_E \subseteq \cinterp{f}_E\enspace.
\end{equation}
Following \cite{pous2024tools}, we introduce \emph{reductions} in this section to derive new completeness results from existing ones---typically, to derive that $Q$ is complete for a given set $H$ of hypotheses from the fact that $Q$ is complete for the empty set of hypotheses.

In full generality, reductions are best presented at the fairly abstract level of \emph{representations}~\cite{brunet26:representations}, which we show how to instantiate in the context of the present paper.

\begin{definition}
  A \emph{representation} is tuple $(\F,M,[\_],\equiv)$ where $\F$ and $M$ are two sets, $\equiv$ is an equivalence relation on $\F$, and $[\_]$ is a function from $\F$ to $M$. Such a representation is \emph{complete} if $[e]=[f]$ implies $e\equiv f$ for all $e,f\in \F$.
\end{definition}
In our case, representations will always consist of a fragment $\F$, the set of $H$-closed languages, the interpretation $\cinterp{\_}$, and the equivalence induced by an axiomatisation $Q$ and the hypotheses $H$: $e\equiv f$ if $\derive Q H e \leq f$ and $\derive Q H f \leq e$.
% We assume all such $Q$ are sound.
% : $\derive Q H e \leq f$ implies $\cinterp{e} \subseteq \cinterp{f}$.
For these representations we write $(Q,H)$; they are complete precisely when $Q$ is complete for $H$ in the sense of \eqref{eq:completefor}.

\vspace{\baselineskip}\noindent
\begin{minipage}{.8\linewidth}
  \begin{definition}
    \label{def:reduction}
  A representation $(\F,M,[\_],\equiv)$ reduces to a representation $(\F',M',[\_]',\equiv')$ if there exists three functions
    $(r,i,\transformation)$ typed as on the right
    % $r \colon \F \to \F'$,
    % $i \colon \F' \to \F$,
    % $\transformation \colon M \to M'$,
    such that for all expressions $e \in \F$ and $e',f' \in \F'$,
  \end{definition}
\end{minipage}\hfill
\begin{minipage}{.16\linewidth}
  \vspace{-.5\baselineskip}
  \begin{tikzcd}[cramped]
      {\F} & M \\
      {\F'} & {M'}
      \arrow["{[\_]}", from=1-1, to=1-2]
      \arrow["r", shift left=2, from=1-1, to=2-1]
      \arrow["\transformation", from=1-2, to=2-2]
      \arrow["i", shift left=2, from=2-1, to=1-1]
      \arrow["{[\_]'}", from=2-1, to=2-2]
    \end{tikzcd}
\end{minipage}
\begin{align*}
  \text{ 1) } e' \equiv' f' \,\text{ implies }\, i(e') \equiv i(f');\qquad
  \text{ 2) } i(r(e)) \equiv e;\qquad
  \text{ 3) } \transformation([e]) = [r(e)]'.
\end{align*}
As expected, reductions can be composed, and we have:
\begin{textAtEnd}%
  Before proving the results from \Cref{sec:reductions}, we provide a few more observations about representations and reductions.

  A representation $(\F,M,[\_],\equiv)$ is:
  \begin{itemize}    
  \item \emph{sound} if $e\equiv f$ implies $[e]=[f]$ for all $e,f\in\F$;
  \item \emph{complete} if $[e]=[f]$ implies $e\equiv f$ for all $e,f\in\F$.
  \end{itemize}

  \begin{fact}    
    Let $Q$ and $H$ be an axiomatisation and a set of hypotheses over a fragment $\F$.
    \begin{itemize}
    \item $(Q,H)$ is sound iff $\derive Q H e\leq f$ implies $\cinterp e\subseteq \cinterp f$  for all $e,f\in\F$;
    \item $(Q,H)$ is complete iff $\cinterp e\subseteq \cinterp f$ implies $\derive Q H e\leq f$ implies for all $e,f\in\F$.
    \end{itemize}
  \end{fact}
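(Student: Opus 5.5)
The plan is to unfold the definitions of soundness and completeness for the representation $(Q,H)$ and to use binary joins to move between equations and inequations, as in the remark of \Cref{ssec:inequations}. Recall that in $(Q,H)$ the map is $e\mapsto\cinterp e$ into $H$-closed languages, and $e\equiv f$ means $\derive Q H e\leq f$ and $\derive Q H f\leq e$.

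The fact does not hold as literally stated, because the two directions marked ``only if'' below need a way to turn an inequation into an equivalence. I therefore add one assumption: $Q$ derives the semilattice axioms of $\Naivegen$, namely $0\leq x$, $x+x\leq x$, $x\leq x+y$ and $y\leq x+y$. This holds for every axiomatisation considered in the paper. Without it, $\equiv$ cannot witness a one-sided inequation $e\leq f$, so I will state the assumption explicitly. Under it, the key observation is:
\begin{center}
$\derive Q H e\leq f$ \quad iff \quad $e+f\equiv f$.
\end{center}
From $e\leq f$ and $f\leq f$, the monotonicity rule for $+$ gives $e+f\leq f+f$. An instance of $x+x\leq x$ then gives $f+f\leq f$, and transitivity yields $e+f\leq f$. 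The reverse inequation $f\leq e+f$ is an instance of $y\leq x+y$. Conversely, from $e+f\equiv f$ we get $e\leq e+f\leq f$ using the axiom $x\leq x+y$ and transitivity.

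On the semantic side, I use $\cinterp{e+f}=H^*(\interp e\cup\interp f)=H^*(\cinterp e\cup\cinterp f)$, by \Cref{lem:closure commutes with join}. Hence $\cinterp e\subseteq\cinterp f$ iff $\cinterp{e+f}=\cinterp f$. For the left-to-right implication, $\cinterp e\cup\cinterp f=\cinterp f$ is already closed. For the right-to-left implication, $\cinterp e\subseteq\cinterp{e+f}$.

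The four implications then follow directly.

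Soundness, ``if'': assume derivability implies inclusion. If $e\equiv f$, both inclusions hold, so $\cinterp e=\cinterp f$ by antisymmetry.

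Soundness, ``only if'': assume $(Q,H)$ is sound and $\derive Q H e\leq f$. Then $e+f\equiv f$, so $\cinterp{e+f}=\cinterp f$, which gives $\cinterp e\subseteq\cinterp f$.

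Completeness, ``if'': assume inclusion implies derivability. If $\cinterp e=\cinterp f$, both inequations are derivable, so $e\equiv f$.

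Completeness, ``only if'': assume $(Q,H)$ is complete and $\cinterp e\subseteq\cinterp f$. Then $\cinterp{e+f}=\cinterp f$, so $e+f\equiv f$, and hence $\derive Q H e\leq f$.

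The only genuinely non-trivial point is the join encoding, which requires the extra assumption on $Q$. Everything else is routine unfolding of the definitions.
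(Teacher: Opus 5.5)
Your proof is correct. The paper states this Fact without proof. Your argument is the natural one, and surely the intended one: the encoding of $e\leq f$ as $e+f=f$ from the Remark in \Cref{ssec:inequations}, together with $\cinterp{e+f}=H^*(\cinterp e\cup\cinterp f)$ from \Cref{lem:closure commutes with join}.

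Your caveat is also justified. The Fact quantifies over arbitrary axiomatisations, and both ``only if'' directions genuinely fail without something like the semilattice axioms; concrete counterexamples make this precise.

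\emph{Soundness.} Take $Q=\{x\leq 0\}$ and $H=\emptyset$. Let $\|\cdot\|$ count the nodes of an expression other than $0$. By induction on derivations, every derivable $e\leq f$ satisfies $\|f\|\leq\|e\|$, with equality only when $e=f$. Hence $\equiv$ is syntactic identity and $(Q,\emptyset)$ is sound. Yet $\derive{Q}{\emptyset} x\leq 0$ holds while $\{x\}\not\subseteq\emptyset$.

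\emph{Completeness.} Take for $Q$ all sentences $e\leq f$ with $\interp e=\interp f$, and $H=\emptyset$. Every derivable inequation is then an equality of languages, using \Cref{cor:interp:subst} for substitution instances. So $(Q,\emptyset)$ is complete, but $0\leq x$ is not derivable although $\emptyset\subseteq\{x\}$.

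Three small points on your write-up:
\begin{itemize}
\item The axiom $0\leq x$ is never used. You only need $x+x\leq x$ and $y\leq x+y$ for the soundness direction, and $x\leq x+y$ for the completeness direction.
\item You should say that $e+f\in\F$ because fragments are closed under $+$.
\item If $Q$ merely \emph{derives} these axioms rather than containing them, you need their instances at $e,f$ under $H$. These come from stability of derivations under substitution (the axioms have no premises) and from \Cref{lem:derivation strengthening} to add the hypotheses $H$.
\end{itemize}
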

 
  \begin{theorem}
    Let $Q$ be an axiomatisation over a fragment $\F$. Then $Q$ is sound for continuous models of $E$ (i.e., $\CL \models Q$) if and only if $(Q,H)$ is a sound representation for all $H$. 
  \end{theorem}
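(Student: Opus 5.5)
The two directions are handled separately, and both go through the closed language interpretation. Throughout, the representation $(Q,H)$ is the one with $\F$ as set of expressions, $\clang X$ as target, $\cinterp{\_}$ as map and $\equiv$ as the equivalence generated by $\derive Q H \_\leq\_$. By the preceding fact, $(Q,H)$ is sound exactly when $\derive Q H e\leq f$ implies $\cinterp e\subseteq\cinterp f$ for all $e,f\in\F$. I will work with this inequational formulation. The passage from $\equiv$ to $\leq$ is harmless: if $Q$ derives the semilattice axioms, an inequation is an equation with a join; otherwise one can argue directly with $\leq$.

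\textbf{Left to right.} Assume $\CL\models Q$ and fix a set $H$ of hypotheses over $\F$, together with a derivation $\derive Q H e\leq f$.

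By \Cref{lem:closed:continuous:model}, $\clang X$ is a continuous model of $E$, so it satisfies $Q$. Applying \Cref{prop:rules:sound} to $M=\clang X$ gives $\clang X\models H\rightarrow e\leq f$.

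Next, instantiate this with the valuation $\eta_H$. This valuation satisfies $H$: by \Cref{lem:interpretation agrees on expressions with closure}, $\eval{\eta_H}{g}=\cinterp g$ for every expression $g$, and \Cref{lem:cinterp:hyps} gives $\cinterp{e_0}\subseteq\cinterp{f_0}$ for each $(e_0\leq f_0)\in H$. We therefore obtain $\eval{\eta_H}e\leq\eval{\eta_H}f$, which is $\cinterp e\subseteq\cinterp f$. Alternatively, one could invoke \Cref{thm:sound:complete} directly, since \Cref{prop:rules:sound} already gives $M\models H\rightarrow e\leq f$ for every continuous model $M$.

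\textbf{Right to left.} Assume $(Q,H)$ is sound for every $H$, and let $(H'\to e\leq f)\in Q$. I need $M\models H'\rightarrow e\leq f$ for every continuous model $M$ of $E$.

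Take $H\eqdef H'$. Using the first rule of \Cref{fig:rules} with the empty substitution, each premise $e'\leq f'$ is itself in $H'$, so the hypothesis rule derives it. This yields $\derive Q {H'} e\leq f$.

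Soundness of $(Q,H')$ then gives $H'^*\interp e\subseteq H'^*\interp f$. By \Cref{thm:sound:complete} this is equivalent to $\CL\models H'\rightarrow e\leq f$, which is exactly the validity of the axiom in every continuous model.

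\textbf{Main obstacle.} The subtle point is matching quantifiers. The sentences of $Q$ are implicitly universally quantified, and they are used with substitutions in derivations. Validity of $H'\to e\leq f$ in $M$, as defined in \Cref{def:lattice models}, quantifies over all valuations at once for hypotheses and conclusion. This is precisely what \Cref{thm:sound:complete} characterises, so no substitution issue arises here. The only real care is that $H'$ must lie in the fragment $\F$, which holds because $Q$ is an axiomatisation over $\F$.
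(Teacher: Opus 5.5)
Your proof is correct and follows the paper's argument. Left-to-right goes via \Cref{prop:rules:sound}; your instantiation on $\clang X$ with $\eta_H$ just unfolds the soundness half of \Cref{thm:sound:complete}. Right-to-left derives each axiom $H'\to e\leq f$ of $Q$ from its own premises and concludes with \Cref{thm:sound:complete}.

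One caveat concerns your aside that without semilattice axioms "one can argue directly with $\leq$". For a fixed $H$, equational soundness of $(Q,H)$ then need not give the inequational form: with $Q=\{x\leq 0\}$ and $H=\emptyset$, the relation $\equiv$ is syntactic identity, yet $x\leq 0$ is derivable. So, like the paper, you are genuinely relying on the preceding fact. If you want to avoid it, take $H\eqdef H'\cup\{f\leq e+f,\ f+f\leq f\}$: this yields $e+f\equiv f$, and it leaves the closure unchanged, i.e.\ $H^*=H'^*$.
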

  \begin{proof}
    The left-to-right implication is an immediate consequence of \Cref{prop:rules:sound}.
    For the right-to-left implication, pick a sentence $H\rightarrow e\leq f$ from $Q$. We immediately have $\derive Q H e \leq f$, hence $\cinterp e\subseteq \cinterp f$ since $(Q,H)$ is a sound representation by assumption, and thus $\CL\models H\rightarrow e\leq f$ by \Cref{thm:sound:complete}. 
  \end{proof}
  Call \emph{reduction} a tuple $(r,i,\transformation)$ as in \Cref{def:reduction}.  
  \begin{fact}
    Let $R,R',R''$ be representations.
    If $(r,i,\transformation)$ is a reduction from $R$ to $R'$ and
    $(r',i',\transformation')$ is a reduction from $R$ to $R''$, then
    $(r'\circ r,i\circ i',\transformation'\circ \transformation)$ is a reduction from $R$ to $R''$.
  \end{fact}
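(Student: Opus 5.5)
The plan is to verify the three conditions of \Cref{def:reduction} directly for the composite triple, reading the second reduction as going from $R'$ to $R''$ (as the types of $r',i',\transformation'$ force, since $r'$ must be applied to elements of $\F'$). Write $R=(\F,M,[\_],\equiv)$, $R'=(\F',M',[\_]',\equiv')$ and $R''=(\F'',M'',[\_]'',\equiv'')$. By assumption we then have $r'\colon\F'\to\F''$, $i'\colon\F''\to\F'$ and $\transformation'\colon M'\to M''$. So $r'\circ r\colon\F\to\F''$, $i\circ i'\colon\F''\to\F$ and $\transformation'\circ\transformation\colon M\to M''$ have the required types. Each condition should follow by chaining the corresponding conditions of the two given reductions.

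For condition 1), take $e'',f''\in\F''$ with $e''\equiv''f''$. Condition 1) of the second reduction gives $i'(e'')\equiv' i'(f'')$. Applying condition 1) of the first reduction to these two elements of $\F'$ gives $i(i'(e''))\equiv i(i'(f''))$, as required.

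For condition 3), take $e\in\F$. Condition 3) of the first reduction gives $\transformation([e])=[r(e)]'$. Applying $\transformation'$ and then condition 3) of the second reduction at $r(e)\in\F'$ yields
\begin{align*}
  \transformation'(\transformation([e]))=\transformation'([r(e)]')=[r'(r(e))]''\enspace.
\end{align*}

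Condition 2) is the only step that uses anything beyond plain composition, namely that $\equiv$ and $\equiv'$ are equivalence relations. Take $e\in\F$. Condition 2) of the second reduction at $r(e)\in\F'$ gives $i'(r'(r(e)))\equiv' r(e)$. Condition 1) of the first reduction transports this to $i(i'(r'(r(e))))\equiv i(r(e))$. Condition 2) of the first reduction gives $i(r(e))\equiv e$, and transitivity of $\equiv$ concludes. I expect no real obstacle: the only care needed is the order in which the conditions are applied in this step.
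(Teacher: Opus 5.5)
Your proof is correct and is the direct verification the paper has in mind: the paper states this fact without proof. You were also right to read the second reduction as going from $R'$ to $R''$, since the types of $r'$ and $i'$ require it; condition 2) is indeed the only step needing more than composition, via condition 1) of the first reduction and transitivity of $\equiv$.
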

\end{textAtEnd}%
\begin{theoremE}
  \prooflink
  \label{thm:reductions transfer completeness}
  If a representation reduces to a complete one, then the former is complete.
\end{theoremE}
\begin{proofE}
  Name the components of the representations and the reduction as in \Cref{def:reduction}.
  For all expressions $e,f\in \F$, we have:
  \begin{align*}
    & [e] = [f]\\
    \implies\quad& \transformation([e]) = \transformation([f]) \tag{functions preserve equality}\\
    \implies\quad& [r(e)]' = [r(f)]' \tag{third item of \Cref{def:reduction}}\\
    \implies\quad&r(e) \equiv' r(f) \tag{completeness of target representation}\\
    \implies\quad&i(r(e)) \equiv i(r(f)) \tag{first item of \Cref{def:reduction}}\\
    \implies\quad&e \equiv f \tag*{(second item of \Cref{def:reduction})~\qedhere}
  \end{align*}
\end{proofE}
Most of the tools provided in \cite{pous2024tools} for creating reductions and combining existing ones can be adapted to our general framework, which we do in Appendix~\ref{app:reductions}.
We highlight one of them here, which we need for one of our examples: reductions generated by substitutions, which generalise the \emph{homomorphic reductions} from~\cite[Proposition~3.2]{pous2024tools}.
\begin{textAtEnd}%
  \begin{lemma}%
    \label{lem:red:clos:leq}
    Let $Q,Q'$ be axiomatisations and let $H,H'$ be sets of hypotheses, all over a common fragment.
    If $Q$ is sound for $\CL$ and for all expressions $e,f\in\F$, $\derive{Q'}{H'} e \leq f$ implies $\derive{Q}{H} e \leq f$, then every $H$-closed language is also $H'$-closed, and $H'^*\leq H^*$.
  \end{lemma}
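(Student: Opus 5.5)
The argument goes through a language-theoretic reformulation of $H'$-closedness. By \Cref{def:hypothesis closure}, a language $L$ is $H'$-closed iff for every context $C$ and every $(e'\leq f')\in H'$ with $C\interp{f'}\subseteq L$, we have $C\interp{e'}\subseteq L$.

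So fix an $H$-closed language $L$, a context $C$, and a hypothesis $(e'\leq f')\in H'$ with $C\interp{f'}\subseteq L$. The goal is $C\interp{e'}\subseteq L$. Since $e',f'$ lie in the fragment, the axiom rule for hypotheses gives $\derive{Q'}{H'}e'\leq f'$. By assumption, $\derive{Q}{H}e'\leq f'$. Soundness of $Q$ for $\CL$ together with \Cref{prop:rules:sound} yields $\CL\models H\rightarrow e'\leq f'$, and \Cref{thm:sound:complete} then gives $\cinterp{e'}\subseteq\cinterp{f'}$.

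Next we move this inclusion under the context $C$:
\begin{align*}
C\interp{e'}\subseteq C H^*\interp{e'}\subseteq CH^*\interp{f'}\subseteq H^*C\interp{f'}\subseteq H^*L = L\enspace.
\end{align*}
The justifications are as follows.
\begin{itemize}
\item The first inclusion holds because $L\subseteq H^*L$ and applying a context is monotone.
\item The second uses $\cinterp{e'}\subseteq\cinterp{f'}$ and monotonicity of $C$.
\item The third is \Cref{lem:closurefullycontextual}.
\item The fourth uses monotonicity of $H^*$ and $C\interp{f'}\subseteq L$.
\item The final equality holds since $L$ is $H$-closed.
\end{itemize}
The only delicate point is the third step, pushing the closure outside the context. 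It is exactly what \Cref{lem:closurefullycontextual} provides, so we expect no real obstacle there.

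For $H'^*\leq H^*$, let $L$ be any language. Then $H^*L$ is $H$-closed, hence $H'$-closed by the first part, and it contains $L$. Since $H'^*L$ is the least $H'$-closed language containing $L$, we get $H'^*L\subseteq H^*L$.
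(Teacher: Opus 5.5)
Your proof is correct and follows the paper's argument: derive $e'\leq f'$ from $(Q,H)$, use soundness to get the inclusion between $H$-closed interpretations, push the closure out of the context with \Cref{lem:closurefullycontextual}, and conclude by $H$-closedness of $L$. The only difference is that you split the paper's first step into two inclusions. The second part, using minimality of $H'^*L$, is also exactly as in the paper.
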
%
  \begin{proof}%
    Let $L$ be an $H$-closed language and suppose that $C\interp f\subseteq L$ for some hypothesis $e\leq f$ from $H'$; we have to show that $C\interp e \subseteq L$. First observe that $\derive{Q'}{H'} e\leq f$, hence $\derive{Q}{H} e\leq f$ by assumption, so that $\interp e\subseteq \cinterp f$ since $Q$ is assumed to be sound (using \Cref{prop:rules:sound} and \Cref{thm:sound:complete}).
    We then deduce:
    \begin{align*}    
      C\interp e
      \tag{above observation}
      &\subseteq C\cinterp f\\
      \tag{\Cref{lem:closurefullycontextual}}    
      &\subseteq H^*C\interp f\\
      \tag{by assumption}    
      &\subseteq H^*L\\
      \tag{$L$ is $H$-closed}    
      &= L\enspace.
    \end{align*}
    Thus every $H$-closed language is also $H'$-closed.

    Then, for any language $L$, $H^*L$ is $H$-closed and hence $H'$ closed by the previous point. Since $H^*L$ contains $L$, we deduce that $H'^*L\subseteq H^*L$. We have thus proved $H'^*\leq H^*$.
  \end{proof}%
\end{textAtEnd}%
\begin{propositionE}\label{thm:homomorphic reduction}
  \prooflink
  Let $(Q,H)$ and $(Q',H')$ be representations as above over $\F$, such that $Q$ is sound for $\CL$ and derives all axioms in $\Naivegen$.
  If $\theta \colon X \pto \F$ is a substitution such that:
  \begin{enumerate}
    \item for all expressions $e,f\in\F$, $\derive{Q'}{H'} e \leq f$ implies $\derive{Q}{H} e \leq f$,
    \item for all variables $x$ in the domain of $\theta$, $\derive Q H x = \theta(x)$,
    \item for all variables $x$ in the domain of $\theta$, $x \in \interp{\theta(x)}$,
    \item and for all $(e \leq f) \in H$, $\interp{e\theta} \subseteq H'^*\interp{f\theta}$,
  \end{enumerate}
  then $(Q,H)$ reduces to $(Q',H')$.
\end{propositionE}
\begin{proof}[Proof Sketch]
  We take $(\_)\theta$ for $r$, the identity for $i$ and an inclusion for $\transformation$.
\end{proof}
\begin{proofE}
  We take the substitution application $(\_)\theta$ for $r$, the identity map for $i$, and an inclusion map for $\transformation$. For the latter map, we first have to check that every $H$-closed language is also $H'$-closed, which is the case by \Cref{lem:red:clos:leq} using assumption \textbf{1.}
  
  The first requirement of a reduction (\Cref{def:reduction}) is just assumption \textbf{1.}
  The second one ($\derive Q H \theta(e)=e$) follows from assumption \textbf{2.} via \Cref{lem:extend:id}.
  It remains to show the third one: $H^*\interp{e} = H'^*\interp{e\theta}$.

  For the right-to-left inclusion, we have $\cinterp[H']{e\theta}\subseteq \cinterp{e\theta} = \cinterp e$, using \Cref{lem:red:clos:leq} first, and then the soundness of $Q$ applied to the above derivation $\derive Q H \theta(e)=e$.

  The left-to-right inclusion is more involved. Let $\dot\theta\eqdef\lext\sigma$, with $\sigma$ defined as in \Cref{cor:interp:subst}: $\sigma(x)=\interp{\theta(x)}$ if $\theta(x)$ is defined, and $\sigma(x)=\set x$ otherwise.
  This function $\dot\theta$ is continuous;
  % , and for all expressions $e$, we have $\interp{e\theta}=\dot\theta\interp e$.
  %
  also observe that for all languages $L$, $L \subseteq \dot{\theta}(L)$ by assumption \textbf{3.} We have 
  \begin{align*}
    \cinterp e 
    \tag{last observation}
    &\subseteq \dot\theta \cinterp e \\
    \tag{$\dagger$}
    &\subseteq H'^*\dot\theta\interp e\\
    \tag{\Cref{cor:interp:subst}}
    &= \cinterp[H']{e\theta}\enspace,
  \end{align*}
  thus it remains to show $(\dagger)$, or more generally, $\dot\theta H^*\leq  H'^* \dot\theta$.
  We use \Cref{lem:fixed point respects leq 2} to this end, so it suffices to show $\dot\theta H\leq  H'^* \dot\theta$. 
  Let $L$ be a language. Since $\dot{\theta}$ is continuous, and by definition of the function $H$, we have to show $\dot\theta C\!\interp e \subseteq H'^*\dot\theta L$ for all
  contexts $C$ and hypotheses $(e \leq f)$ from $H$ such that $C\!\interp{f} \subseteq L$.
  We derive
  \begin{align*}
    \dot\theta C\!\interp e
    \tag{\Cref{lem:ctxt:std}}
     &=\dot\theta \interp{\underline C e}\\
    \tag{\Cref{cor:interp:subst}}
     &=\interp{(\underline C e)\theta}\\
    \tag{with $C'$ the context associated to $\underline C\,\theta$}
     &=C'\!\interp{e\theta}\\
    \tag{assumption \textbf{4.}}
     &\subseteq C'\cinterp[H']{f\theta}\\
    \tag{\Cref{lem:closurefullycontextual}}    
     &\subseteq H'^*C'\!\interp{f\theta}\\
     &= \cinterp[H']{(\underline C f)\theta}\\
    \tag{\Cref{cor:interp:subst}}
     &= H'^*\dot\theta\interp{\underline C f}\\
    \tag{\Cref{lem:ctxt:std}}
     &= H'^*\dot\theta C\!\interp{f}\\
    \tag*{(assumption on L)~\qedhere}
     &\subseteq H'^*\dot\theta CL
     \enspace.
  \end{align*}
\end{proofE}
\begin{textAtEnd}
We conclude this appendix on reductions by providing tools to combine them.
Those are essentially the same tools as in~\cite{pous2024tools}, which we generalise from the case of Kleene algebra (hence monoids) to that of axiomatisations over fragments over arbitrary signatures $\Sigma$ and equations $E$.

\begin{proposition}
  \label{lem:combining two reductions}
  Let $Q_1,Q_2,Q_1',Q_2'$ be axiomatisations and 
  let $H_1,H_2,H_1',H_2'$ be sets of hypotheses, all over a common fragment.
  If
  \begin{enumerate}
  \item $(Q_1, H_1)$ reduces to $(Q_1', H_1')$ and $(Q_2, H_2)$ reduces to $(Q_2', H_2')$,
  \item $(H_1 \cup H_2)^* \subseteq H_2^* \circ H_1^*$,
  \item $\derive {Q_1'} {H_1'} e \leq f \implies \derive {Q_2'} {H_2'} e \leq f$ for all expressions $e,f$,
  \end{enumerate}
  then $(Q_1 \cup Q_2, H_1 \cup H_2)$ reduces to $(Q_1' \cup Q_2', H_1' \cup H_2')$
\end{proposition}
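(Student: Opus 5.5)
We take $(r_1,i_1,\transformation_1)$ and $(r_2,i_2,\transformation_2)$ to be the two given reductions. For the combined reduction we propose $r\eqdef r_2\circ r_1$ and $i\eqdef i_1\circ i_2$, with $\transformation\eqdef \transformation_2\circ\transformation_1$, but with $\transformation$ restricted to $(H_1\cup H_2)$-closed languages. This is the same pattern as the combination lemma of~\cite{pous2024tools}. The type-checking needs care. $r_1$ maps $\F$ to $\F$, and so does $r_2$, since all representations share the fragment. $\transformation_1$ goes from $H_1$-closed to $H_1'$-closed languages, but $\transformation$ must go from $(H_1\cup H_2)$-closed to $(H_1'\cup H_2')$-closed languages. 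We therefore plan to define $\transformation(L)\eqdef (H_1'\cup H_2')^*\,\transformation_2(H_2^*\transformation_1(H_1^* L))$, or more simply $\transformation(L)\eqdef(H_1'\cup H_2')^*\interp{r_2 r_1 e}$ for $L=\cinterp[H_1\cup H_2]{e}$. For the latter we must check that $\transformation$ is well defined, which is where hypothesis~2 enters.

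The first requirement (preservation of equivalence) goes as follows. Suppose $\derive{Q_1'\cup Q_2'}{H_1'\cup H_2'} e'\leq f'$. By hypothesis~3, every derivation in $(Q_1',H_1')$ is a derivation in $(Q_2',H_2')$. Using \Cref{lem:derivation strengthening}, we get that $Q_2'$ derives $Q_1'$ under $H_2'$ and that $H_2'$ derives $H_1'$, so the union collapses: $\derive{Q_2'}{H_2'}e'\leq f'$. Applying $i_2$ gives $\derive{Q_2}{H_2} i_2e'\leq i_2f'$, hence the same judgement in $(Q_1\cup Q_2,H_1\cup H_2)$. We then need $i_1$ to preserve it. Since $i_1$ only preserves $(Q_1',H_1')$-equivalence, we plan to choose the order of composition so that this works: use $r=r_1\circ r_2$ and $i=i_2\circ i_1$ where needed, or argue that $i_1$ is the identity in the intended use. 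This is the subtle step, and we will commit to whichever order makes hypothesis~3 applicable.

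The second requirement, $i(r(e))\equiv e$, follows by chaining the two round-trip equivalences and weakening each into the union representation. The third requirement, $\transformation(\cinterp[H_1\cup H_2]e)=\cinterp[H_1'\cup H_2']{r(e)}$, is the main obstacle. Hypothesis~2 gives $(H_1\cup H_2)^*\interp e=H_2^*H_1^*\interp e$, so the closure can be computed in stages. We first apply $\transformation_1$ to the inner stage to get $\cinterp[H_1']{r_1e}$. Then $\transformation_2$ applied to the $H_2$-closure gives $\cinterp[H_2']{r_2r_1e}$. Finally, the closure under $H_1'\cup H_2'$ absorbs the intermediate stages, by \Cref{lem:closure commutes with join} and monotonicity, together with hypothesis~3 via \Cref{lem:red:clos:leq}, which gives $H_1'^*\le H_2'^*$ so that the closures coincide.
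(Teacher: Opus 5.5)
Your skeleton is the paper's: $r=r_2\circ r_1$, staging the closure with hypothesis~2, and $H_1'^*\le H_2'^*$ from hypothesis~3 and \Cref{lem:red:clos:leq}. But the argument does not close, for two reasons.

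\textbf{Arbitrary reductions.} By allowing arbitrary reductions $(r_k,i_k,\transformation_k)$ you create obligations you cannot discharge, and you leave them open.
\begin{itemize}
\item In requirement~1, after going through $i_2$ you hold a $(Q_2,H_2)$-equivalence. But $i_1$ is only known to transport $(Q_1',H_1')$-equivalences, and hypothesis~3 points the wrong way to bridge this. The other order of composition fails for the same reason, so there is no order to ``commit to''.
\item Requirement~2 breaks likewise: relating $i_1(i_2(r_2(r_1(e))))$ to $i_1(r_1(e))$ needs $i_1$ to respect $(Q_2,H_2)$-equivalence.
\item In requirement~3, $\transformation_1$ is only controlled on languages of the form $H_1^*\interp e$. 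The staging needs the equality $H_1^*\interp e=H_1'^*\interp{r_1(e)}$ itself, i.e.\ $\transformation_1$ must be the inclusion. Your first candidate for $\transformation$ applies $\transformation_1$ to $H_1^*L=L$, about which nothing is known. Defining $\transformation$ by $[e]\mapsto[r(e)]'$ only rephrases requirement~3 as a well-definedness obligation.
\end{itemize}
The paper's proof works with reductions whose middle component is the identity and whose last component is an inclusion, as for every reduction built in the paper. It uses $(r_2\circ r_1,\mathrm{id},\text{inclusion})$. Requirements~1 and~2 then follow componentwise via \Cref{lem:derivation strengthening}, e.g.\ $r_2(r_1(e))\equiv_{Q_2,H_2}r_1(e)\equiv_{Q_1,H_1}e$.

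\textbf{Missing step in requirement~3.} Even in that setting, your computation misses the key step. Hypothesis~2 gives one inclusion; for the reverse you need $H_2^*H_1^*\le(H_1\cup H_2)^*(H_1\cup H_2)^*=(H_1\cup H_2)^*$, which you should state. Together with the first reduction, these give $(H_1\cup H_2)^*\interp e=H_2^*\bigl(H_1'^*\interp{r_1(e)}\bigr)$. But the second reduction only speaks about $H_2^*\interp g$ for expressions $g$. You therefore need $H_2^*\circ H_1'^*=H_2^*$, i.e.\ $H_1'^*\le H_2^*$. This requires $H_2'^*\le H_2^*$, which comes from the first condition of the second reduction via \Cref{lem:red:clos:leq} (using soundness of $Q_2$); you never derive it. The absorption you invoke, $(H_1'\cup H_2')^*=H_2'^*$, lives on the target side. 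It only justifies the final step $H_2'^*\interp{r_2(r_1(e))}=(H_1'\cup H_2')^*\interp{r_2(r_1(e))}$, and cannot remove the $H_1'^*$ sitting under $H_2^*$.
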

\begin{proof}
  We adapt the proof of \cite[Lemma 3.14]{pous2024tools}.
  Throughout this proof, we abbreviate $\derive Q H e = f$ as $\equiv_{Q,H}$, for all $Q,H$.
  Moreover, we write $Q = Q_1 \cup Q_2$, $Q' = Q_1' \cup Q_2'$ and $H = H_1 \cup H_2$, $H' = H'_1 \cup H'_2$.

  Let $r_1$ and $r_2$ be the first components of the two reductions from the statement.
  We claim that $r_2 \circ r_1$ yields the expected reduction, for which we need to check that for all $e,f$,
  \begin{enumerate}
  \item $e \equiv_{Q',H'} f \implies e \equiv_{Q,H} f$,
  \item $r_2(r_1(e)) \equiv_{Q,H} e$,
  \item $\cinterp e = \cinterp[H']{r_2(r_1(e))}$.
  \end{enumerate}
  We proceed point by point.
  \begin{enumerate}
  \item By the first assumption, we have $e \equiv_{Q_i', H_i'} f \implies  \equiv_{Q_i, H_i} f$ for $i = 1,2$.
    From this we conclude $e \equiv_{Q', H'} f \implies e \equiv_{Q,H} f$ by~\Cref{lem:derivation strengthening}.
  \item 
    Using~\Cref{lem:derivation strengthening} we get that $e \equiv_{Q_i, H_i} f$ implies $e \equiv_{Q,H} f$ for $i = 1,2$, from which we deduce using the first assumption again that $e \equiv_{Q,H} r_1(e) \equiv_{Q,H} r_2(r_1(e))$.    
  \item 
    First remark that $H_1'^* \subseteq H_2'^* $ by \Cref{lem:red:clos:leq} and our third assumption.
    This in turn implies $(\dagger)$: $H'^* = (H_1' \cup H_2')^* = H_2'^* $.
    Similarly, since $(Q_2,H_2)$ reduces to $(Q'_2,H'_2)$ we have that $H_2'^* \subseteq H^*_2$ which combined with the above implies $(\ddagger)$: $H_2^* \circ H_1'^* = H_2^*$.
    Second, note that $H_2^* \circ H_1^* \subseteq (H_1 \cup H_2)^* \circ (H_1 \cup H_2)^* = (H_1 \cup H_2)^*$, hence by the second assumption $H^* = (H_1 \cup H_2)^* = H_2^* \circ H_1^*$.
    Using this we deduce
    \begin{align*}
      H^* \interp{e}
      \tag{by assumption}
      &= H_2^*\bigl(H_1^* \interp{e}\bigr)\\
      \tag{first reduction}
      &= H_2^*\bigl(H_1'^* \interp{r_1(e)}\bigr)\\
      \tag{by $(\ddagger)$}
      &= H_2^* \interp{r_1(e)}\\
      \tag{second reduction}
      &= H_2'^* \interp{r_2(r_1(e))}\\
      \tag*{(by $(\dagger)$)~\qedhere}
      &= H'^* \interp{r_2(r_1(e))}
    \end{align*}
  \end{enumerate}
\end{proof}
Usually, the second condition of \Cref{lem:combining two reductions} is the harder one to verify, which is eased by the following four lemmas.
\begin{lemma}[\protect{\cite[Proposition~A.4]{pous2024tools}}]
\label{lem:closurecomp}
Let $H_1, H_2$ be two sets of hypotheses.
We have $(H_1 \cup H_2)^* = H_2^* \circ H_1^*$ if and only if
$H_1^* \circ H_2^* \subseteq H_2^* \circ H_1^*$.
\end{lemma}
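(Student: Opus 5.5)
The plan is to argue directly from the definition of $H^*$ as a least closed language and the decomposition of the function associated to a union of hypotheses. Write $H \eqdef H_1\cup H_2$. First I observe that, since the function $H$ is defined as a union indexed by the inequations in the set $H$, we have $H(L) = H_1(L)\cup H_2(L)$ for every language $L$. Hence a language is $H$-closed iff it is both $H_1$-closed and $H_2$-closed. It follows that every $H$-closed language is $H_i$-closed, so $H_i^*\leq H^*$ for $i=1,2$. We also use the standard closure facts already established: $L\subseteq H_i^*L$, monotonicity, idempotence $H^*H^*=H^*$, and that $L$ is $H_i$-closed iff $H_i^*L=L$.

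For the left-to-right implication, assume $H^* = H_2^*\circ H_1^*$. By monotonicity and $H_i^*\leq H^*$ we get
\begin{align*}
  H_1^*\circ H_2^* \subseteq H^*\circ H^* = H^* = H_2^*\circ H_1^*\enspace,
\end{align*}
as required.

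For the converse, assume $H_1^*\circ H_2^*\subseteq H_2^*\circ H_1^*$, fix a language $L$, and set $K\eqdef H_2^*H_1^*L$. The inclusion $K\subseteq H^*L$ always holds, by the same computation: $K \subseteq H^*H^*L = H^*L$. For the reverse inclusion I use the leastness of $H^*L$. It suffices to show that $L\subseteq K$ and that $K$ is $H$-closed, i.e.\ both $H_1$-closed and $H_2$-closed. The containment $L\subseteq K$ follows from extensivity of both closures. $K$ is $H_2$-closed because it lies in the image of $H_2^*$.

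For $H_1$-closedness, the hypothesis together with idempotence of $H_1^*$ gives
\begin{align*}
  H_1^*K = H_1^*H_2^*H_1^*L \subseteq H_2^*H_1^*H_1^*L = H_2^*H_1^*L = K\enspace,
\end{align*}
so $H_1^*K = K$ and $K$ is $H_1$-closed. Thus $H^*L\subseteq K$, and therefore $(H_1\cup H_2)^* = H_2^*\circ H_1^*$.

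The only step needing care is the decomposition $H(L)=H_1(L)\cup H_2(L)$ and the resulting characterisation of $H$-closedness. This is immediate from the definition of the function associated to a set of hypotheses, so I do not expect any real obstacle; everything else is generic closure-operator reasoning.
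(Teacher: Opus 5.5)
Your proof is correct. The paper gives no argument of its own and simply cites \cite[Proposition~A.4]{pous2024tools}. That result is the standard closure-operator fact, proved at the level of functions by showing that $H_2^*\circ H_1^*$ is a closure above the function for $H_1\cup H_2$, with idempotence obtained from the commutation hypothesis. Your element-wise argument, which shows that $K=H_2^*H_1^*L$ is both $H_1$- and $H_2$-closed using $H(L)=H_1(L)\cup H_2(L)$, is essentially that same reasoning specialised to hypothesis closures.
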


\begin{lemma}[\protect{\cite[Proposition A.6]{pous2024tools}}]
Let $H_1, H_2$ be two sets of hypotheses.
If $H_1 \circ H_2^* \subseteq H_2^* \circ H_1^*$ then
$H_1^* \circ H_2^* \subseteq H_2^* \circ H_1^*$.
\end{lemma}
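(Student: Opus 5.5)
The statement to prove is: if $H_1 \circ H_2^* \subseteq H_2^* \circ H_1^*$, then $H_1^* \circ H_2^* \subseteq H_2^* \circ H_1^*$. The plan is a least-fixpoint (Kleene/Tarski) induction on $H_1^*$. Two facts about $H_1^*$ are needed: it is the least closure above $H_1$ (Appendix~\ref{app:lattice}), and for every language $L$, $H_1^*(L)$ is the least language $K$ such that $L\subseteq K$ and $H_1(K)\subseteq K$ (\Cref{def:hypothesis closure}).

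Fix a language $L$ and set $K\eqdef H_2^*(H_1^*(L))$. The goal is $H_1^*(H_2^*(L))\subseteq K$. By the characterisation of $H_1^*$, it suffices to show two things: $H_2^*(L)\subseteq K$, and $K$ is $H_1$-closed.

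The first point is routine. We have $L\subseteq H_1^*(L)$, and $H_2^*$ is monotone, so $H_2^*(L)\subseteq H_2^*(H_1^*(L))=K$.

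The second point is where the hypothesis is used. Applying it to the language $H_1^*(L)$ gives
\[ H_1(K)=H_1\bigl(H_2^*(H_1^*L)\bigr)\subseteq H_2^*\bigl(H_1^*(H_1^*L)\bigr)=H_2^*(H_1^*L)=K, \]
where the middle equality is idempotence of the closure $H_1^*$, i.e.\ $H_1^*\circ H_1^*=H_1^*$. This holds because $H_1^*L$ is $H_1$-closed, hence a fixpoint of $H_1^*$. So $K$ is $H_1$-closed.

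Since $K$ contains $H_2^*(L)$ and is $H_1$-closed, we get $H_1^*(H_2^*L)\subseteq K$, which is the claim. The main subtlety is noticing that instantiating the hypothesis at $H_1^*L$, rather than at $L$, is exactly what makes idempotence finish the argument. The rest needs only monotonicity of $H_2^*$ and extensivity of $H_1^*$, so no continuity assumptions are required. If desired, the argument can instead be phrased using the generic lemma \Cref{lem:fixed point respects leq 2} from Appendix~\ref{app:lattice}, with $H_2^*$ in the role of the commuting map.
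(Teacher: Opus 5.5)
Your proof is correct and follows the standard argument. The paper gives no proof of its own and only cites \cite[Proposition~A.6]{pous2024tools}; your element-wise induction is the first lemma of Appendix~\ref{app:lattice}, $gf\leq fh\implies g^*f\leq fh^*$ for monotone maps, instantiated with $f=H_2^*$, $g=H_1$, $h=H_1^*$ and combined with $(H_1^*)^*=H_1^*$.

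Your closing remark cites the wrong result, though it does not affect the main argument. The appropriate lemma is that first lemma, not \Cref{lem:fixed point respects leq 2}. The latter requires the commuting map to be continuous, which $H_2^*$ generally is not. It also has the mirrored shape $fg\leq cf$, so it does not match the hypothesis $H_1\circ H_2^*\subseteq H_2^*\circ H_1^*$.
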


\begin{lemma}[\protect{\cite[Proposition A.12]{pous2024tools}}]
\label{lem:comm:madeeasy}
Let $H_1, H_2$ be two sets of hypotheses.
If $H_1$ is continuous and either
\begin{itemize}
\item $H_1 \circ H_2 \subseteq H_2^{*} \circ H_1$, or
\item $H_1 \circ H_2 \subseteq H_2 \circ H_1^{*}$,
\end{itemize}
then $H_1^* \circ H_2^* \subseteq H_2^* \circ H_1^*$.
\end{lemma}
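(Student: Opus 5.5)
The plan is to reduce both cases to the sufficient condition of the previous lemma (\cite[Proposition~A.6]{pous2024tools}), namely $H_1\circ H_2^*\subseteq H_2^*\circ H_1^*$, or directly to the target inclusion. Throughout I will use that $H_2^*L$ is the least fixpoint of $X\mapsto L\cup H_2X$, so that it is the least $H_2$-closed language containing $L$. The key tool is an adjoint. Since $H_1$ is continuous, it has a right adjoint $H_1^{\flat}(M)\eqdef\bigcup\set{X\mid H_1X\subseteq M}$, characterised by $H_1X\subseteq M\iff X\subseteq H_1^{\flat}M$. This turns statements of the form ``$H_1(H_2^*L)\subseteq M$'' into ``$H_2^*L\subseteq H_1^\flat M$'', which in turn reduce to showing that $H_1^\flat M$ contains $L$ and is $H_2$-closed.

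\textbf{First case: $H_1\circ H_2\subseteq H_2^*\circ H_1$.} I will show the stronger inclusion $H_1 H_2^*\subseteq H_2^*H_1$, which implies $H_1H_2^*\subseteq H_2^*H_1^*$ since $H_1\leq H_1^*$. Fix $L$ and set $M\eqdef H_2^*H_1L$. It suffices to show $H_2^*L\subseteq H_1^\flat M$, and I will check the two closure conditions.
\begin{itemize}
\item $L\subseteq H_1^\flat M$ holds because $H_1L\subseteq H_2^*H_1L$.
\item For $H_2$-closedness, take $X\subseteq H_1^\flat M$, i.e. $H_1X\subseteq M$. Then $H_1H_2X\subseteq H_2^*H_1X\subseteq H_2^*M=M$, so $H_2X\subseteq H_1^\flat M$.
\end{itemize}
The previous lemma then concludes.

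\textbf{Second case: $H_1\circ H_2\subseteq H_2\circ H_1^*$.} I will first prove $H_1^*H_2\subseteq H_2H_1^*$. The language $H_2H_1^*L$ contains $H_2L$ by monotonicity. It is also $H_1$-closed, since $H_1H_2H_1^*L\subseteq H_2H_1^*H_1^*L=H_2H_1^*L$. By minimality of $H_1^*$ this gives the inclusion, and hence $H_1^*H_2\subseteq H_2^*H_1^*$.

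Next, for fixed $L$, I will prove $H_1^*H_2^*L\subseteq H_2^*H_1^*L$. It suffices to show $H_2^*L\subseteq Y\eqdef H_2^*H_1^*L$ and that $Y$ is $H_1$-closed. The first inclusion is immediate from monotonicity. For $H_1$-closedness I will write $Y$ as the union of its transfinite approximants $Y_\alpha$, starting from $Y_0=H_1^*L$, with $Y_{\alpha+1}=Y_0\cup H_2Y_\alpha$ and unions at limit ordinals. I then show by induction that $H_1Y_\alpha\subseteq Y$.
\begin{itemize}
\item Base case: $H_1Y_0\subseteq Y_0$.
\item Successor case: $H_1Y_{\alpha+1}=H_1Y_0\cup H_1H_2Y_\alpha$ by continuity, and $H_1H_2Y_\alpha\subseteq H_2H_1^*Y_\alpha$.
\item Limit case: this follows from continuity of $H_1$.
\end{itemize}

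The main obstacle is the successor step of this second case. There one needs $H_1^*Y_\alpha\subseteq Y$, not merely $H_1Y_\alpha\subseteq Y$. So I will strengthen the induction invariant to $H_1^*Y_\alpha\subseteq Y$. This strengthened invariant is proved using the already established $H_1^*H_2\subseteq H_2H_1^*$, together with $H_1^*(A\cup B)\subseteq H_1^*(H_1^*A\cup H_1^*B)$. For the latter, \Cref{lem:closure commutes with join} lets me pass the closure through unions. Limit stages still need care: I expect to handle them via the fact that, for continuous $H_1$, $H_1^*$ preserves unions of chains. Continuity of $H_1$ is used exactly there and in the adjoint of the first case. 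If this direct route becomes unwieldy, I would instead fall back on the generic fixpoint-transfer lemma from Appendix~\ref{app:lattice} (\Cref{lem:fixed point respects leq 2}) applied to $H_1^*$ and $H_2$.
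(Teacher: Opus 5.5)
Your first case is correct. The adjoint argument is the fixpoint-transfer lemma \Cref{lem:fixed point respects leq} for the continuous $H_1$ with $h\eqdef H_2^*$, and chaining $H_1H_2^*\subseteq H_2^*H_1\subseteq H_2^*H_1^*$ into the previous lemma is the expected route. In the second case, $H_1^*H_2\subseteq H_2H_1^*$ is also fine.

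The gap is in the successor step of your transfinite induction. You plan to use $H_1^*(A\cup B)\subseteq H_1^*(H_1^*A\cup H_1^*B)$, via \Cref{lem:closure commutes with join}, but that inclusion keeps an outer $H_1^*$. With $A=Y_0$ and $B=H_2Y_\alpha$, together with your invariant, it only gives
\[
H_1^*Y_{\alpha+1}\subseteq H_1^*\bigl(Y_0\cup H_2H_1^*Y_\alpha\bigr)\subseteq H_1^*Y .
\]
Bounding $H_1^*Y$ by $Y$ is exactly the $H_1$-closedness of $Y$ you are trying to prove, so the step is circular as written.

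What you need is the opposite inclusion $H_1^*(A\cup B)\subseteq H_1^*A\cup H_1^*B$, i.e.\ that $H_1^*$ itself is continuous. This is where continuity of $H_1$ has to be used, and the proof is short. For any family $(A_i)_i$, the language $\bigcup_iH_1^*A_i$ contains $\bigcup_iA_i$. It is also $H_1$-closed, since $H_1\bigcup_iH_1^*A_i=\bigcup_iH_1H_1^*A_i\subseteq\bigcup_iH_1^*A_i$. With this fact the successor step becomes
\[
H_1^*Y_{\alpha+1}=Y_0\cup H_1^*H_2Y_\alpha\subseteq Y_0\cup H_2H_1^*Y_\alpha\subseteq Y_0\cup H_2Y\subseteq Y,
\]
and the limit step follows as well.

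Once $H_1^*$ is known to be continuous, the induction is unnecessary. Apply \Cref{lem:fixed point respects leq} with $f\eqdef H_1^*$ and $g\eqdef h\eqdef H_2$, or repeat your first-case adjoint argument with $H_1^*$ in place of $H_1$. Either turns $H_1^*H_2\subseteq H_2H_1^*$ directly into $H_1^*H_2^*\subseteq H_2^*H_1^*$. This is your stated fallback and the natural proof. It too rests on the continuity of $H_1^*$, which your proposal never establishes: you only invoke preservation of unions of chains, and only at limit stages.
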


\begin{lemma}
\label{lem:closcontinuous}
If $H$ is a set of hypotheses whose right-hand sides all denote singleton languages, then the functions $H$ and $H^*$ are continuous.
\end{lemma}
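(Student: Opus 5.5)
Continuity of $H$ is the easier half and follows from the definition; continuity of $H^*$ then comes from an iteration argument in which each step is finite. Write, for each hypothesis $(e\leq f)\in H$, $\interp f=\set{b_f}$ for a single atom $b_f$. For any language $L$, the condition $C\interp f\subseteq L$ then reads $Cb_f\in L$. Hence
\begin{align*}
  H(L)=\bigcup\set{C\interp e\mid C\text{ a context},~(e\leq f)\in H,~Cb_f\in L}\enspace.
\end{align*}
This is a union of contributions indexed by the atoms of $L$: each $Cb_f\in L$ contributes $C\interp e$. Consequently $H(\bigcup_i L_i)=\bigcup_i H(L_i)$, because a witness $Cb_f$ belongs to $\bigcup_i L_i$ iff it belongs to some $L_i$. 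So $H$ preserves arbitrary unions, including the empty one, since $H(\emptyset)=\emptyset$. Note that this step uses the singleton hypothesis essentially: for a two-element $\interp f$, the condition $C\interp f\subseteq L_1\cup L_2$ could hold with the two atoms split between $L_1$ and $L_2$.

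For $H^*$, express the closure as an iterate: $H^*(L)=\bigcup_{n\in\mathbb N}G^n(L)$ where $G(L)=L\cup H(L)$. I would verify this with the standard argument. The union on the right contains $L$, and each $G^n(L)$ lies below $H^*(L)$ because $H^*(L)$ is $H$-closed and contains $L$.

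The right-hand side is $H$-closed, which is the one point needing care: a context witness $Cb_f$ in the union lies in a single $G^n(L)$, so its contribution $C\interp e$ lies in $G^{n+1}(L)$. This is exactly the finitary nature coming from the singleton assumption, and I would rather argue it directly as just described than appeal to a generic Kleene fixpoint theorem. Alternatively, since $G$ is continuous (being a union of the identity and the continuous $H$), the union of the iterates is closed under $G$ by the first paragraph, which yields the same conclusion.

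Finally, $G$ preserves arbitrary unions, so by induction every $G^n$ does too. A union of union-preserving maps again preserves unions, so $H^*=\bigcup_n G^n$ is continuous. The main obstacle is only the identification of $H^*$ with the $\omega$-iterate. In particular, one must check that the least-closure definition from Definition~\ref{def:hypothesis closure} matches $\bigcup_n G^n$, which is exactly what the argument in the previous paragraph establishes.
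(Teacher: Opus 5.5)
Your proof is correct and follows essentially the same route as the argument the paper defers to (Lemma~3.1 of~\cite{pous2024tools}). The singleton assumption turns the side condition $C\interp f\subseteq L$ into the single membership $Cb_f\in L$, so $H$ preserves arbitrary unions; $H^*$ is then the $\omega$-iterate $\bigcup_n G^n$ of the union-preserving map $G$, hence itself union-preserving, and your checks that $H(\emptyset)=\emptyset$ and that this union is $H$-closed supply the details the paper leaves implicit.
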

\begin{proof}
  Same argument as for \cite[Lemma 3.1]{pous2024tools}.
\end{proof}
It is also be possible to combine more than two reductions if they share a common target.
\begin{proposition}
  \label{lem:combining more than two reductions}
  Let $Q_0, \dots, Q_n, Q'$ be axiomatisations and let $H_0, \dots, H_n, H'$ be sets of hypotheses, all over a common fragment.
  Let $H = \bigcup_{i \leq n} H_i$, $Q = \bigcup_{i \leq n} Q_i$ and assume $(Q_i, H_i)$ reduces to $(Q',H')$ for all $i$.
  If
  \prooflink
  \begin{itemize}
  \item $H^* \subseteq H_n^* \dots H_0^*$,
  \item $H'^* \subseteq H^*$,
  \end{itemize}
  then $(Q,H)$ reduces to $(Q',H')$.
\end{proposition}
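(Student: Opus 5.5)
We build the reduction by composing the given reductions $(r_i,i_i,\transformation_i)$ from $(Q_i,H_i)$ to $(Q',H')$ in sequence, as in the proof of \Cref{lem:combining two reductions}. Since all targets coincide, we set
\[
r \eqdef r_n\circ\cdots\circ r_0\colon \F\to\F,\qquad i\eqdef\mathrm{id},
\]
and take $\transformation$ to be the inclusion of $H$-closed languages into $H'$-closed ones. Note that $r_j(e)$ lands in the common fragment, so the composite makes sense.

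We check the three conditions of \Cref{def:reduction} in turn.

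\textbf{Condition 1.} If $e\equiv_{Q',H'} f$, the first condition of the reduction for $(Q_0,H_0)$ gives $e\equiv_{Q_0,H_0} f$. Since $Q_0\subseteq Q$ and $H_0\subseteq H$, \Cref{lem:derivation strengthening} yields $e\equiv_{Q,H} f$.

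\textbf{Condition 2.} For every $j$ we have $r_j(g)\equiv_{Q_j,H_j} g$, hence $r_j(g)\equiv_{Q,H} g$ by \Cref{lem:derivation strengthening}. Chaining these equivalences by transitivity gives $r(e)\equiv_{Q,H} e$.

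\textbf{Condition 3.} We must show $H^*\interp e = H'^*\interp{r(e)}$. Each reduction provides $H_j^*\interp g = H'^*\interp{r_j(g)}$ for all $g$. From the second assumption and $H_j\subseteq H$ we get $H'^*\subseteq H^*$ and $H_j^*\subseteq H^*$.

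For the left-to-right inclusion, we argue by induction on $k$ that
\[
H_k^*\cdots H_0^*\interp e\subseteq H'^*\interp{r_k\cdots r_0(e)}.
\]
For the inductive step, write $g=r_{k-1}\cdots r_0(e)$. Applying $H_k^*$ to the induction hypothesis gives $H_k^*H'^*\interp g$, and we need to collapse this. Here we use $H'^*\interp g = H_{k-1}^*\interp{r_{k-2}\cdots(e)}$, but a cleaner route is to use the identity at stage $k$ itself. Since $H'^*\leq H_k^*$ would make $H_k^*H'^*=H_k^*$, we would get $H_k^*\interp g = H'^*\interp{r_k(g)}$.

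The needed fact $H'^*\leq H_k^*$ follows from \Cref{lem:red:clos:leq}. Indeed, the first reduction condition gives $e'\equiv_{Q',H'}f'\Rightarrow e'\equiv_{Q_k,H_k}f'$, and applying this to inequalities via joins gives the implication on derivations. This step requires $Q_k$ to be sound, which we assume, as is standard for these representations. Combining with the first assumption, $H^*\interp e\subseteq H_n^*\cdots H_0^*\interp e\subseteq H'^*\interp{r(e)}$.

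For the right-to-left inclusion, we use $H'^*\interp{r(e)}\subseteq H^*\interp{r(e)}$. By condition 2 and soundness of $(Q,H)$, we have $H^*\interp{r(e)}=H^*\interp e$.

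The same inclusion $H'^*\subseteq H^*$ shows that $H$-closed languages are $H'$-closed, so $\transformation$ is well typed. \textbf{The main obstacle} is the inductive collapse $H_k^*H'^*=H_k^*$ together with the soundness requirement it relies on. If per-component soundness is unavailable, we would instead derive it from the second assumption by showing that each $H'$-closure step stays within $H_k^*$, using $H_k^*\interp g=H'^*\interp{r_k(g)}\supseteq H'^*\interp g$.
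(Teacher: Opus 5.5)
Your proof is correct and follows essentially the paper's route, which adapts the two-reduction combination: you compose $r=r_n\circ\cdots\circ r_0$, discharge the two syntactic conditions with \Cref{lem:derivation strengthening}, and use \Cref{lem:red:clos:leq} to get $H'^*\leq H_k^*$, hence $H_k^*H'^*=H_k^*$, which collapses $H_n^*\cdots H_0^*\interp e$ into $H'^*\interp{r(e)}$ by induction. Like the paper, you implicitly take the given reductions to have identity second component and inclusion third component, and the axiomatisations to be sound, which is the intended reading of the statement.
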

\begin{proof}[Proof sketch]%
  Like above for \Cref{lem:combining two reductions}, we can easily adapt the proof of \cite[Proposition~5.1]{pous2024tools}, using 
  \Cref{lem:red:clos:leq} and \Cref{lem:derivation strengthening} at appropriate places.
\end{proof}%
Like above, the following lemma can be useful to verify the first item of \Cref{lem:combining more than two reductions}.
\begin{lemma}[\protect{\cite[Proposition~A.5]{pous2024tools}}]
  Let $H_0,\dots,H_n$ be sets of hypotheses and write $H_{<j}$ for $\bigcup_{i<j} H_i$.
  If for all $j \leq n$ we have $H_{<j}^* H_j^* \subseteq H_j^* H_{<j}^*$,
  then $H_{<n+1}^* = H_n^* \dots H_0^*$.
\end{lemma}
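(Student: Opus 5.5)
We unfold the two sides of the claimed identity $H_{<n+1}^* = H_n^* \cdots H_0^*$ and prove it by induction on $n$, reducing each step to \Cref{lem:closurecomp}. Write $K_j \eqdef H_j^* \cdots H_0^*$. The claim to establish for every $j \leq n$ is $H_{<j+1}^* = K_j$.

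For the base case $j=0$ we have $H_{<1} = H_0$, so the claim reads $H_0^* = H_0^*$. The only thing to check is the convention $H_{<0}=\emptyset$, whose closure is the identity; the hypothesis at $j=0$ is then trivially satisfied.

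For the inductive step, assume $H_{<j}^* = K_{j-1}$. Since $H_{<j+1} = H_{<j} \cup H_j$, we apply \Cref{lem:closurecomp} with $H_1 \eqdef H_{<j}$ and $H_2 \eqdef H_j$. It yields $(H_{<j}\cup H_j)^* = H_j^* \circ H_{<j}^*$ as soon as $H_{<j}^* \circ H_j^* \subseteq H_j^* \circ H_{<j}^*$, which is exactly the hypothesis of the statement at index $j$. Hence
\begin{align*}
  H_{<j+1}^* = H_j^* \circ H_{<j}^* = H_j^* \circ K_{j-1} = K_j
\end{align*}
by the induction hypothesis, and instantiating at $j=n$ gives the statement.

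There is no real obstacle here. The only points to watch are the order of composition in \Cref{lem:closurecomp}, which must match the order in which the hypothesis is phrased, and the fact that \Cref{lem:closurecomp} is stated for arbitrary sets of hypotheses, so that it applies to the union $H_{<j}$. The latter holds because the union of sets of hypotheses is again a set of hypotheses, and its closure is the least closure above the (pointwise union) function $H_{<j}$.
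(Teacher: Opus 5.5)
Your proof is correct. The paper gives no proof of its own for this lemma and simply cites \cite[Proposition~A.5]{pous2024tools}. Your induction on $j$ is the intended argument: at each step you apply the two-set result (\Cref{lem:closurecomp}, i.e.\ \cite[Proposition~A.4]{pous2024tools}) to $H_{<j}$ and $H_j$, and its side condition is exactly the hypothesis at index $j$.
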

\end{textAtEnd}

\section{Examples}
\label{sec:examples}
\pratendSetLocal{category=examples}

We finally show how various structures and results from the literature fit our framework, and we prove new completeness results for specific forms of hypotheses.

\subsection{Kleene algebra: monoids}

Our starting point was Kleene algebra with hypotheses~\cite{dkpp:fossacs19:kah,pous2024tools}, which we capture by choosing the signature and equations of monoids for $\Sigma$ and $E$: in that case atoms are just words, and languages are as usual. When it comes to axiomatisations, we chose the fragment where all fixpoints must have the shape $\mu x.1+e\cdot x$, and we may pick any of the various axiomatisations proposed in the literature~\cite{conway1971regular,boffa1990remarque,Krob91a,Kozen91,Boffa95,KozenS12,ddp:lpar18:lefthanded,Kappe23}, which are all sound, and complete for the empty set of hypotheses.

For hypotheses, our notion of language closure coincides with that from~\cite{dkpp:fossacs19:kah,pous2024tools}, and the reductions proposed in the literature fit our definition and make it possible to reach KA with tests~\cite{kozen1997tests}, with converse~\cite{EB95,bloom1995notes,brunet2014converse}, with top~\cite{pous2022top,PousW24}, with observations~\cite{pous2024tools}, or even NetKAT~\cite{AndersonFGJKSW14,pous2024tools} and synchronous KA~\cite{Prisacariu10,WagemakerBKR019}.
(Note that some of these reductions depend on the specific choice of axiomatisation for KA without hypotheses; e.g., the two items of \cite[Proposition~3.8(iv)]{pous2024tools} respectively require left-handed KA and right-handed KA.)

More concretely, we recall here a simple example~\cite[Proposition~3.8(v)]{pous2024tools}, which we will transfer to a new setting in the following section. This example shows that we can eliminate ``transitivity'' hypotheses of the shape $a\cdot a\leq a$, in KA.
\begin{lemma}
  \label{ex:KA:transitive}
   $(\text{KA},\set{a\cdot a\leq a})$ reduces to $(\text{KA},\emptyset)$.
\end{lemma}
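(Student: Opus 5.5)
We would use \Cref{thm:homomorphic reduction} with the substitution $\theta\colon a\mapsto a\cdot a^*$ (where $a^*=\mu x.1+a\cdot x$, which lies in the regular-expression fragment). Here $Q$ is KA, $H=\set{a\cdot a\leq a}$, $Q'$ is KA and $H'=\emptyset$, so that $H'^*$ is the identity. Since KA is sound for continuous models of monoids and derives all of $\Naivegen$ (the naive axioms are theorems of KA on this fragment), \Cref{thm:homomorphic reduction} applies once its four conditions are checked.

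The four conditions are as follows.
\begin{enumerate}
\item A derivation $\derive{\text{KA}}{\emptyset} e\leq f$ is also a derivation $\derive{\text{KA}}{H} e\leq f$. This is immediate, for instance via \Cref{lem:derivation strengthening}.
\item We need $\derive{\text{KA}}{H} a = a\cdot a^*$. The inequality $a\leq a a^*$ is plain KA, from $1\leq a^*$. For the converse, $a\cdot a^*\leq a$, we would use the right-handed induction principle $e+f\cdot b\leq f\to e\cdot b^*\leq f$ with $e=f=b=a$. Its premise $a+a\cdot a\leq a$ follows from the hypothesis $a\cdot a\leq a$. If the chosen KA is left-handed, we would instead use the symmetric substitution $a\mapsto a^*\cdot a$ together with the left-handed rule $h+e\cdot f\leq f\to e^*\cdot h\leq f$ (with $h=e=f=a$). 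This matches the remark that some reductions depend on the handedness of the axiomatisation.
\item We need $a\in\interp{a\cdot a^*}$, which is clear since $a\cdot 1=a$.
\item For the single hypothesis, we need $\interp{(a\cdot a)\theta}\subseteq\interp{a\theta}$, that is $\interp{aa^*aa^*}\subseteq\interp{aa^*}$. Both sides denote the set of nonempty powers of $a$, so this holds.
\end{enumerate}

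Most steps are routine. The main point needing care is condition~2: we must make sure the derivation of $a a^*\leq a$ only uses the hypothesis $a\cdot a\leq a$ as is, without substitution, and uses an induction rule that the chosen KA axiomatisation actually provides. This is why the direction of $\theta$ has to match the handedness of KA. We also need $\theta$ to map into the fragment, which holds because $aa^*$ is a regular expression, and the fragment is closed under substitutions by \Cref{def:fragment}.
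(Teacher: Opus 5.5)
Your proposal is correct and matches the paper's proof: you apply \Cref{thm:homomorphic reduction} with the substitution $a\mapsto a\cdot a^*$ and check the same four conditions. Your handedness caveat in condition~2 is harmless but not needed. Even in left-handed KA one derives $a^*\leq 1+a^*\cdot a$ by left induction, hence $a\cdot a^*\leq a^*\cdot a$. Then $a\cdot a^*\leq a$ follows from $a\cdot a\leq a$ by left induction, so $a\mapsto a\cdot a^*$ works for either choice of axiomatisation.
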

\begin{proof}
  The substitution $a\mapsto a\cdot a^*$ satisfies the conditions of \Cref{thm:homomorphic reduction}:
  the first one vanishes as $Q=Q'=\text{KA}$ and $H'=\emptyset$; we can easily derive $a=a\cdot a^*$ from KA and the hypothesis $a\cdot a\leq a$; we have $a\in \interp{a\cdot a^*}$; and we have
  $\interp{(a\cdot a^*)\cdot(a\cdot a^*)}\subseteq \interp{a\cdot a^*}$.
\end{proof}

\subsection{Bi-Kleene algebra: bimonoids}

We also cover bi-KA with hypotheses~\cite{KappeB0WZ20,kappethesis,wagemakerthesis}:
we chose the signature and equations of bimonoids, atoms become series-parallel pomsets, and our notion of language closure coincides with that from~\cite{KappeB0WZ20,kappethesis,wagemakerthesis}.

Regarding syntax, there is a choice: while Struth and Laurence~\cite{struth2014bikleene} prove completeness for bi-KA with two fixpoint operators (the sequential one from KA, and its parallel counterpart: $e^{\parallel}\eqdef\mu x.1+e\parallel x$), Kappe et al.~\cite{KappeB0WZ20,kappethesis,wagemakerthesis} develop the theory of hypotheses and reductions only with the sequential one. This restriction is necessary for the reduction they exhibit to deal with the exchange law and prove completeness of concurrent KA~\cite{hoare2011concurrent,kappe2018concurrent}; it is unclear how to extend this reduction in the presence of unbounded parallelism.

We cover both choices with the present framework, by selecting the appropriate fragment.
If we choose bi-KA with the two fixpoint operators~\cite{struth2014bikleene}, for which hypotheses were not studied in the literature before, we can redo \Cref{ex:KA:transitive} with both monoid compositions, in a straightforward manner. 
\begin{lemma}
  \label{ex:biKA:transitive}
  $(\text{biKA},\set{a\cdot a\leq a})$ and $(\text{biKA},\set{a\parallel a\leq a})$ both reduce to $(\text{biKA},\emptyset)$.
\end{lemma}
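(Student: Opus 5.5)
The plan mirrors \Cref{ex:KA:transitive} and relies on \Cref{thm:homomorphic reduction}, taking $Q=Q'=\text{biKA}$ over the bi-KA fragment with both fixpoint operators and $H'=\emptyset$. For the sequential hypothesis $\set{a\cdot a\leq a}$ we use the substitution $\theta\colon a\mapsto a\cdot a^*$. For the parallel hypothesis $\set{a\parallel a\leq a}$ we use $\theta\colon a\mapsto a\parallel a^{\parallel}$, where $a^\parallel=\mu x.1+a\parallel x$. Both substituted expressions lie in the fragment. Struth and Laurence's axiomatisation is sound for continuous models, because every axiom holds in pomset languages and hence, via \Cref{cor:sound:complete}-style reasoning, in all continuous models. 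It also derives the naive axioms $\Naivegen$ on this fragment: semilattice laws, distributivity of $\cdot$ and $\parallel$ over finite sums, the bimonoid equations, and the unfolding and induction laws for both stars.

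We then check the four conditions of \Cref{thm:homomorphic reduction}. Condition~1 is trivial, since $Q'=Q$ and $H'=\emptyset\subseteq H$. Condition~3 holds because $a\in\interp{a\cdot a^*}$ (take $a^*\ni 1$) and $a\in\interp{a\parallel a^\parallel}$ (take $1$ in $a^\parallel$, using $a\parallel 1=a$). Condition~4 reduces, with $H'^*$ the identity, to two language inclusions: $\interp{(a a^*)(a a^*)}\subseteq\interp{a a^*}$, since $a^n\cdot a^m=a^{n+m}$ with $n,m\ge1$; and $\interp{(a\parallel a^\parallel)\parallel(a\parallel a^\parallel)}\subseteq\interp{a\parallel a^\parallel}$, since parallel powers of $a$ add up the same way by commutativity and associativity of $\parallel$.

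The main work is condition~2, which asks us to derive $a=a\cdot a^*$ from biKA and $a\cdot a\leq a$, and $a=a\parallel a^\parallel$ from biKA and $a\parallel a\leq a$. For $a\leq a\cdot a^*$ we use $1\leq a^*$ and monotonicity. For $a\cdot a^*\leq a$ we use the KA law $x\cdot y\leq x\rightarrow x\cdot y^*\leq x$, which is derivable from right-handed star induction in KA. In the parallel case the same derivation goes through verbatim, since biKA contains the KA axioms for $(\parallel,1,{}^\parallel)$, with $\parallel$ commutative so that left and right variants coincide.

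I expect the only genuine obstacle to be confirming that the chosen biKA axiomatisation really includes both left- and right-handed induction for each star, so that it derives $\Naivegen$ and the law above. Struth and Laurence's axioms do include both, and since $\parallel$ is commutative, any one-sided form suffices for the parallel star anyway.
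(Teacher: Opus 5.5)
Your proposal is correct and follows the paper's proof, which simply applies \Cref{thm:homomorphic reduction} with the substitutions $a\mapsto a\cdot a^*$ and $a\mapsto a\parallel a^{\parallel}$; your check of the four conditions spells out what that one-line proof leaves implicit. The one imprecision is your justification that biKA is sound for continuous models: \Cref{cor:sound:complete} only transfers plain inequations, not the star-induction quasi-identities, so their validity in continuous models has to be argued directly (using continuity of $\cdot$ and $\parallel$), although the paper itself simply assumes this soundness.
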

\begin{proof}
  We use \Cref{thm:homomorphic reduction} with
  the substitutions $a\mapsto a\cdot a^*$ and $a\mapsto a\parallel a^{\parallel}$\enspace.
\end{proof}

\begin{textAtEnd}
  Throughout this section, we abbreviate $\derive Q H e\leq f$ to $e\leqq_{Q,H} f$ and 
  $\derive Q H e\leq f$ and $\derive Q H f\leq e$ together to $e\equiv_{Q,H}f$. Whenever $Q$ or $H$ are clear from context we omit (part of) the subscript.
\end{textAtEnd}
Modularity tools from~\cite{pous2024tools}, which we have adapted in Appendix~\ref{app:reductions}, make it possible to deduce that $(\text{biKA},\set{a\cdot a\leq a, b\parallel b\leq b})$ combinedly reduces to $(\text{biKA},\emptyset)$, when $a$ and $b$ are distinct variables (see details in Appendix~\ref{bikamodular}).
In contrast, there cannot be a reduction from $(\text{biKA},\set{a\cdot a\leq a, a\parallel a\leq a})$ to $(\text{biKA},\emptyset)$: this would require an expression for all non-empty series-parallel pomsets over $\set a$, which is not available in the considered fragment.
\begin{textAtEnd}
  \stepcounter{subsection}
  \subsection{Bi-Kleene algebra}\label{bikamodular}
  We show here how to use \Cref{lem:combining two reductions} in order to combine the two reductions established in~\Cref{ex:biKA:transitive}.
  \begin{lemma}
  $(\text{biKA},\set{a\cdot a\leq a, b\parallel b\leq b})$ reduces to $(\text{biKA},\emptyset)$.
  \end{lemma}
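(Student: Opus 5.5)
We plan to instantiate \Cref{lem:combining two reductions} with $Q_1=Q_2=Q_1'=Q_2'=\text{biKA}$, $H_1=\set{a\cdot a\leq a}$, $H_2=\set{b\parallel b\leq b}$ and $H_1'=H_2'=\emptyset$. The first condition is \Cref{ex:biKA:transitive}, which gives both reductions. The third condition is trivial, because $H_1'=H_2'=\emptyset$ and the axiomatisations coincide. Since unions of identical axiomatisations and of empty hypothesis sets collapse, the conclusion will be exactly that $(\text{biKA},\set{a\cdot a\leq a,b\parallel b\leq b})$ reduces to $(\text{biKA},\emptyset)$.

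The remaining work is the second condition, $(H_1\cup H_2)^*\subseteq H_2^*\circ H_1^*$. By \Cref{lem:closurecomp} it suffices to show $H_1^*\circ H_2^*\subseteq H_2^*\circ H_1^*$. We then use \Cref{lem:comm:madeeasy}. The right-hand sides $a$ and $b$ of both hypotheses denote singleton languages, so $H_1$ and $H_2$ are continuous by \Cref{lem:closcontinuous}. It therefore suffices to prove the local commutation $H_1\circ H_2\subseteq H_2^*\circ H_1$, or the variant with the roles arranged as the lemma permits.

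For the local commutation we take an atom of $H_1(H_2(L))$. It has the form $C[a\cdot a]$, where $C$ is a series-parallel pomset context with $C[a]\in H_2(L)$. Hence $C[a]=D[b\parallel b]$ for some context $D$ with $D[b]\in L$. We then analyse how the distinguished occurrence of $a$ in $C[a]$ sits relative to the distinguished $b\parallel b$ in $D[b\parallel b]$. Since $a\neq b$, the $a$-node is a leaf of $D$ and lies outside the hole. We can therefore write a two-hole context $K[\hole_1,\hole_2]$ with $C=K[\hole,b\parallel b]$ and $D=K[a,\hole]$. Then $K[a\cdot a,b]\in H_1(L)$ because $K[a,b]=D[b]\in L$. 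Applying $H_2$ with the context $K[a\cdot a,\hole]$ gives $K[a\cdot a,b\parallel b]=C[a\cdot a]\in H_2(H_1(L))$. This shows $H_1\circ H_2\subseteq H_2\circ H_1\subseteq H_2^*\circ H_1$, which is the required inclusion.

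The main obstacle is making the decomposition into a two-hole context rigorous for series-parallel pomsets, which are atoms modulo the bimonoid equations. We expect to handle it by working on term representatives and relying on linearity and regularity of $E$: an occurrence of a variable is well identified up to $\equiv_E$, so the two distinct leaf occurrences can be abstracted simultaneously. A secondary check is the continuity hypothesis of \Cref{lem:comm:madeeasy}, which requires $H_1$ to be continuous; we obtain it from \Cref{lem:closcontinuous} as stated above.
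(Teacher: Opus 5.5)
Your proposal is correct and takes essentially the same route as the paper. You combine the two reductions of \Cref{ex:biKA:transitive} via \Cref{lem:combining two reductions}, reduce its second condition to $H_1^*\circ H_2^*\subseteq H_2^*\circ H_1^*$ by \Cref{lem:closurecomp}, and discharge it with \Cref{lem:comm:madeeasy}, using continuity of $H_1$ (\Cref{lem:closcontinuous}) and the local commutation $H_1\circ H_2\subseteq H_2\circ H_1$ that follows from $a\neq b$. The paper simply asserts this last inclusion as easy, while your two-hole context argument, justified by tracking the distinguished occurrence through the linear, regular equations, supplies the details it omits.
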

  \begin{proof}
    We use~\Cref{lem:combining two reductions}. The first condition follows from~\Cref{ex:biKA:transitive} and the last condition is trivially satisfied. The second condition can be proved using~\Cref{lem:closurecomp}, where it remains to show that for $H_1=\set{a\cdot a\leq a}$ and $H_2=\set{b\parallel b\leq b}$ we have $H_1^*\circ H_2^*\subseteq H_2^*\circ H_1^*$.
    For this we can use \Cref{lem:comm:madeeasy}:
    we easily get $H_1\circ H_2\subseteq H_2\circ H_1$ since $a\neq b$, and $H_1$ is continuous by \Cref{lem:closcontinuous}.
  \end{proof}
\end{textAtEnd}

\subsection{Regular tree languages}

When $E$ is empty, atoms are just terms, or ordered trees with nodes labelled in $\Sigma$ and of appropriate arities. In that case, there is a Kleene theorem~\cite{TATAhubert} ensuring that our expressions and finite tree automata capture the same class of tree languages: the regular ones (note that we need all fixpoint expressions in order to obtain this result, unlike with languages of words).
Moreover, as mentioned in \Cref{sec:axiomatisation}, Ésik~\cite{esik2010tree} has proved that the naive axiomatisation $N(\emptyset,\expr X)$ is complete for the empty set of hypotheses in that case (\Cref{thm:completeness for no equations}).

Hypotheses were not studied in the literature in that setting, and we show below how to eliminate the counterpart to \emph{Hoare hypotheses}, from~\cite{hardin2002elimination} in the context of KAT.

\begin{textAtEnd}
  \subsection{Regular tree languages}
\end{textAtEnd}

\begin{lemmaE}
  \label{ex:e_0tree}
  \prooflink
  Let $e_0$ be an expression. If both the signature $\Sigma$ and the set $X$ of variables are finite, then $(N(\emptyset,\expr X),\set{e_0\leq 0})$ reduces to $(N(\emptyset,\expr X),\emptyset)$.
\end{lemmaE}
\begin{proof}[Proof Sketch]
  Using the aforementioned Kleene theorem, let $\top$ be an expression for the language of all trees over $\Sigma$ and $X$.
  Define the following expression:
  \begin{align*}
    e_0^\top := \mu x.e_0+\sum_{s^{(n)} \in \Sigma,~n\geq 1} s(x, \top, \dots, \top) + s(\top, x, \dots, \top) + \dots + s(\top, \top, \dots, x)\enspace.
  \end{align*}
  The language $\interp{e_0^\top}$ consists of all trees containing a tree from $\interp{e_0}$ as a subtree, and the function $e\mapsto e+e_0^\top$ yields the desired reduction.
  We give more details in Appendix~\ref{app:examples}; the key observations are that
   $e_0^\top$ is provably equivalent to $0$ using the naive axioms and the hypothesis $e_0\leq 0$, and that
  writing $H$ for $\set{e_0\leq 0}$, we have $H^*= L\mapsto L \cup \interp{e_0^\top}$.
\end{proof}
\begin{proofE}
  Let $H=\set{e_0\leq 0}$.
  Let $\top$ be an expression for the language of all trees over $\Sigma$ and $X$.
  Pick a recursion variable $x$ and define the following expressions:
  \begin{align*}
    g &\eqdef \sum_{s^{(n)} \in \Sigma,~n\geq 1} s(x, \top, \dots, \top) + s(\top, x, \dots, \top) + \dots + s(\top, \top, \dots, x)\\
    e_0^\top &\eqdef \mu x.e_0+g\enspace.
  \end{align*}
  (Note that $g$ actually is an $x$-expression.)
  Take $r(e) = e + e_0^\top$, the identity function for $i$, and the inclusion $H^*\lang X\subseteq \lang X$ for $\transformation$.
  We have to check that for all expressions $e,f$,
    \begin{itemize}
      \item $e \equiv f \implies e \equiv_{H} f$\:;
      \item $r(e) \equiv_{H} e$\:;
      \item $H^*\interp{e} = \interp{r(e)}$.
    \end{itemize}
    The first point trivially holds by \Cref{lem:derivation strengthening}.
    For the second point, it suffices to show $e_0^\top\leqq_H 0$. As $e_0^\top$ is a least fixpoint, we can conclude $e_0^\top\leqq_H 0$ from $\sub{e_0+g}{x}{0}\leqq_H 0$.
      We derive
      \begin{align*}
        \sub{e_0 + g}{x}{0}
        &\leqq_H 0 + \sub{g}{x}{0}\\
        &\leqq_H \sum_{s^{(n)} \in \Sigma,~n\geq 1} s(0, \top, \dots, \top) + \dots + s(\top, \top \dots, 0)\\
        &\leqq_H \sum_{s^{(n)} \in \Sigma,~n\geq 1} 0 + \dots + 0\\
        &\leqq_H 0\enspace.
      \end{align*}
      % By the fixed point rule we conclude $\derive{N}{H} e_0^\top \leq 0$.
  For the third point, we observe $H^*L = L \cup \bigcup \{ C\interp{e_0} \mid C \text{ a context}\}$ and $\interp{e_0^\top} = \bigcup \{ C\interp{e_0} \mid C \text{ a context}\}$ by construction.
  Hence $H^*\interp{e} = \interp{e} \cup \interp{e_0^\top} = \interp{e + e_0^\top} = \interp{r(e)}$.
\end{proofE}

\subsection{Commutative Kleene algebra: multisets}
\label{sec:examples:CKA}

Commutative Kleene algebras (cKA) are obtained from KA by postulating that $(\cdot)$ is commutative.
Pilling has proved their completeness w.r.t. the Parikh image interpretation of regular expressions~\cite{pilling1970algebra,conway1971regular,brunet2019commutative}.
It is tempting to deal with cKA by using plain KA with the hypotheses $C=\set{xy\leq yx\mid x,y\in X}$. Indeed those hypotheses imply that any two expressions commute, and $C^*$ is the Parikh image function in that case. However, we have $\cinterp[C]{(ab)^*}=\set{w \in \set{a,b}^* \mid |w|_a = |w|_b}$, so that $C^*$ does not preserve regularity. Thus, there cannot be a reduction from $(\text{KA},C)$ to $(\text{KA},\emptyset)$, and we may hardly find reductions to $(\text{KA},C)$.
Instead, we can impose the commutativity condition via the set $E$ of equations, and work directly with commutative monoids, for which atoms are finite multisets. 
\begin{textAtEnd}
  \subsection{Commutative Kleene algebra}
  Recall for commutative Kleene algebra atoms are finite multisets, or commutative words.
  We will write such atom as words where we are free to reorder the letters as we see fit, e.g., $abc=cba=acb$.
\end{textAtEnd}

Below we provide three new reductions; the first one is about Hoare hypotheses.
\begin{lemmaE}
  \label{ex:e<0reductionCKA}
  \prooflink
  Let $e_0$ be an expression. For finite $X$, $(\ComKA,\set{e_0\leq 0})$ reduces to $(\ComKA,\emptyset)$.
\end{lemmaE}
\begin{proof}[Proof sketch]
  Let $\top$ be an expression for the language of all multisets (for which we need $X$ to be finite).
  The function $e\mapsto e+e_0\cdot \top$ yields the desired reduction.
\end{proof}
\begin{proofE}
  Let $H = \{e_0 \leq 0\}$ for some expression $e_0$.
  We define the expression $\top := (a + b + \dots)^*$, where $a,b,\dots$ are the variables in $X$, whose standard language interpretation is the full language of all multisets.
  Take $r(e) = e + e_0\cdot\top$, the identity function for $i$, and the inclusion $H^*\lang X\subseteq \lang X$ for $\transformation$.
  We need to verify that for all $e,f$:
  \begin{itemize}
  \item $e \equiv f \implies e \equiv_{H} f$
  \item $ r(e) \equiv_{H} e$
  \item $H^*\interp{e} = \interp{r(e)}$,
  \end{itemize}
  The first point holds trivially by \Cref{lem:derivation strengthening}.
  For the second point we derive
  \begin{align*}
    r(e)=e + e_0 \cdot \top \equiv_H e + 0 \cdot \top \equiv e + 0 \equiv e\enspace.
  \end{align*}
  For the third point, we note $H^*L = L \cup \{ C \interp{e_0} \mid C \text{ a context} \} = L \cup \interp{e_0 \top}$.
  Hence we have $H^*\interp{e} = \interp{e} \cup \interp{e_0\cdot \top} = \interp{e + e_0 \cdot \top} = \interp{r(e)}$.
  % We conclude $(r,\id,\id)$ is a reduction from $(\ComKA, H)$ to $(\ComKA, \emptyset)$.\\
\end{proofE}

The second one looks similar but has no counterpart in the literature: while it is relatively easy to obtain in the commutative case, we cannot obtain such a reduction in the non-commutative case (e.g., with $e_1=a\cdot b$, regularity is not preserved).
\begin{lemmaE}
  \label{ex:e<1reductionCKA}
  \prooflink
  Let $e_1$ be an expression. $(\ComKA,\set{e_1\leq 1})$ reduces to $(\ComKA,\emptyset)$.
\end{lemmaE}
\begin{proof}[Proof sketch]
  The function $e\mapsto e\cdot e_1^*$ yields the desired reduction.
\end{proof}
\begin{proofE}
  Let $H = \{e_1 \leq 1\}$ for some expression $e_1$.
  Take $r(e) = e\cdot e_1^*$, the identity function for $i$, and the inclusion $H^*\lang X\subseteq \lang X$ for $\transformation$.
    Again, we check that for all $e,f \in \F$:
    \begin{itemize}
      \item $e \equiv f \implies e \equiv_{H} f$ ;
      \item $ r(e) \equiv_{H} e$ ;
      \item $H^*\interp{e} = \interp{r(e)}$.
    \end{itemize}  
    The first point is again trivial.
    For the second point we derive
    \begin{align*}
      e \equiv e \cdot 1 \equiv e \cdot 1^* \equiv_H e \cdot e_1^* = r(e)\enspace.
    \end{align*}
    For the third point, we first verify $\interp{r(e)} \subseteq H^*\interp{e}$.
    By the second point we have $r(e) \leqq_H e$.
    Using that $\ComKA$ is sound, we can use~\Cref{prop:rules:sound} to conclude that $\CL\models H\rightarrow r(e)\leq e$ and \Cref{thm:sound:complete} to obtain $\cinterp{r(e)} \subseteq \cinterp{e}$ and thus $\interp{r(e)} \subseteq H^*\interp{e}$.
    
    For the other direction we show that $\interp{r(e)}$ is $H$-closed:
    \begin{align*}
      H \interp{r(e)} = \{ C \interp{e_1} \mid C \interp{1} 
      \subseteq \interp{e \cdot e_1^*}, C \text{ a context} \}
      \subseteq\interp{e \cdot e_1^*} 
      = \interp{r(e)}\enspace.
    \end{align*}
    The inclusion above is justified by the fact that the application of a multiset context to a multiset is multiset union; in particular $C \interp{1} = \set{C}$.
    Since $\interp{e} \subseteq \interp{r(e)}$, it follows that $H^*\interp{e} \subseteq \interp{r(e)}$. 
\end{proofE}

The third one is a first step towards a counterpart to \cite[Lemma~3.8(i)]{pous2024tools} for the commutative case, which is surprisingly much harder due to the lack of a Kleene theorem.
% (Note that the order of the inequation is reversed w.r.t. \Cref{ex:KA:transitive})

\begin{textAtEnd}
  Now we move to the proof of \Cref{ex:a<aareduction}, for which we need a few preliminary results.
  \begin{fact}\label{fact:cka}
    For all expressions $e,f \in \F$, we have $e^*f^* \equiv_{\ComKA} (e+f)^*$.
  \end{fact}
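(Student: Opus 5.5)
Since $\ComKA$ is a commutative Kleene algebra, I would derive this identity purely equationally from the axioms of commutative Kleene algebra, with no detour through languages. The two inclusions are handled separately, and both use only the standard KA facts derivable in any of the usual axiomatisations: $1\leq g^*$, $g\leq g^*$, $g^*g^*\leq g^*$, monotonicity of star, and star induction $h+ g x\leq x\rightarrow g^*h\leq x$ (or its left-handed variant, whichever the chosen axiomatisation provides). If $\ComKA$ is taken to be complete for the empty set of hypotheses, there is a shortcut: since $\interp{e^*f^*}=\interp{(e+f)^*}$ holds in the multiset model, completeness alone gives the statement. Still, I would present the syntactic derivation, to avoid depending on which completeness result is assumed.

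For $e^*f^*\leq (e+f)^*$, monotonicity of star gives $e^*\leq (e+f)^*$ and $f^*\leq (e+f)^*$. Hence $e^*f^*\leq (e+f)^*(e+f)^*\leq (e+f)^*$.

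For $(e+f)^*\leq e^*f^*$, I would apply star induction with $g=e+f$ and $h=1$, so it suffices to show $1+(e+f)e^*f^*\leq e^*f^*$. First, $1\leq e^*f^*$ since $1\leq e^*$ and $1\leq f^*$. Second, $e e^*f^*\leq e^*f^*$ by $ee^*\leq e^*$. Third, $fe^*f^*=e^*ff^*\leq e^*f^*$, where the equality uses commutativity of $\cdot$ (an equation in $E$ for commutative monoids, so available in the naive part of the axiomatisation). Summing these with the semilattice axioms and distributivity gives the premise of star induction, and the inclusion follows.

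The only step I would watch is the shape of the induction rule. If the axiomatisation provides only the right-handed form $h+xg\leq x\rightarrow hg^*\leq x$, I would use commutativity to rewrite $(e+f)e^*f^*$ as $e^*f^*(e+f)$ and argue symmetrically. Commutativity makes the two forms interchangeable, so this is a formality rather than a real obstacle.
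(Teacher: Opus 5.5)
Your proof is correct. The paper does not prove this statement at all: it is recorded as a Fact in the appendix with no argument, i.e.\ treated as a standard identity of commutative KA. Either of your two routes therefore legitimately fills that in.

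The completeness shortcut is the natural justification given what the paper cites. Pilling's theorem appears in \Cref{sec:examples:CKA}, and $\interp{(e+f)^*}=\interp{e^*f^*}$ holds because $(K\cup L)^n=\bigcup_{i+j=n}K^iL^j$ for multiset languages. This route does not depend on how $\ComKA$ is presented. Your syntactic derivation is more self-contained and matches how the paper reasons inside $\ComKA$ elsewhere. The proof of \Cref{lem:akstarbelow} uses exactly the right-handed rule $x+y\cdot z\leq y\rightarrow x\cdot z^*\leq y$, and your closing remark covers that case via commutativity.

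Your derivation does need a KA-strength induction rule; the fixpoint induction of the naive axiomatisation $N(E,\F)$ is not enough. To see this, take the commutative variant of \Cref{countermodel}:
\begin{itemize}
\item the chain $0<1<a<a^2<\dots<a^*<\top$;
\item products $a^i\cdot a^*=a^*\cdot a^i=a^*$, $a^*\cdot a^*=\top$, and $\top\cdot x=\top$ for $x\neq 0$.
\end{itemize}
This is a complete commutative model of $N(E,\F)$. In it, the expression $a^*$ evaluates to the element $a^*$, but $a^*\cdot a^*$ evaluates to $\top$. So the step $g^*g^*\leq g^*$ in your first inclusion fails there, and so does the instance $e=f=a$ of the statement. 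Since you explicitly assumed the standard KA facts, this is not a gap. It just marks where the choice of axiomatisation matters.
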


\begin{lemma}
  \label{lem:sumandprodunderstar}
  For all expressions $e_0, \dots, e_n$ we have $e_0^*\dots e_n^* \equiv_{\ComKA} (e_0 + \dots + e_n)^*$.
\end{lemma}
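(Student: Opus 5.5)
The plan is to argue by induction on $n$, with \Cref{fact:cka} as the only real ingredient. Here $\ComKA$ is an axiomatisation containing the Kleene algebra axioms together with commutativity, so $\equiv_{\ComKA}$ is a congruence for $+$, $\cdot$ and ${}^*$ (the latter being built into the fragment). I will use these closure properties freely.

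\emph{Base case.} For $n=0$ both sides are literally $e_0^*$, so reflexivity suffices. For $n=1$ the statement is exactly \Cref{fact:cka} applied to $e_0$ and $e_1$.

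\emph{Inductive step.} Assume $e_0^*\cdots e_{n-1}^* \equiv_{\ComKA} (e_0+\dots+e_{n-1})^*$, and set $g \eqdef e_0+\dots+e_{n-1}$. By associativity of $\cdot$ and monotonicity/congruence of $\cdot$ (the corresponding rule of \Cref{fig:rules} applied in both directions), we get
\begin{align*}
e_0^*\cdots e_n^* \equiv_{\ComKA} (e_0^*\cdots e_{n-1}^*)\cdot e_n^* \equiv_{\ComKA} g^*\cdot e_n^*\enspace.
\end{align*}
Applying \Cref{fact:cka} with $e=g$ and $f=e_n$ then gives $g^*e_n^* \equiv_{\ComKA} (g+e_n)^*$, and associativity of $+$ identifies this with $(e_0+\dots+e_n)^*$. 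Transitivity closes the chain.

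The only points that need a little care are that $g$ lies in the fragment, which holds because fragments are closed under $+$, and that the $n$-ary sums and products are read up to associativity, which is derivable in $\ComKA$. I do not expect a genuine obstacle: all the content sits in \Cref{fact:cka}, whose own proof rests on commutativity via the standard identity $(e+f)^* = e^*(fe^*)^*$ and $fe^* = e^*f$. The present lemma is a routine iteration of that fact.
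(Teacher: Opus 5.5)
Your proposal is correct and follows the paper's approach: the paper's proof just says that \Cref{fact:cka} generalises to the required result, and your induction on $n$ carries out that generalisation. Each step uses associativity, congruence of $\cdot$ (and of ${}^*$), and one application of \Cref{fact:cka} with $e = e_0+\dots+e_{n-1}$ and $f = e_n$.
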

\begin{proof}
\Cref{fact:cka} generalises to the required result.
\end{proof}

In the lemmas that follow, we let $H=\set{a\leq a\cdot a}$;
also recall we have defined the notation $a^{\leq k} = 1 + a + \dots a^k$.
\begin{lemma}
  \label{lem:akbelow}
  For all $k \geq 0$, $aa^{\leq k} \equiv_{\ComKA, H} a^{k + 1}$.
\end{lemma}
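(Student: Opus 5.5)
Writing $a^{\leq k}$ for $1+a+\dots+a^k$, the statement says that in commutative KA with the hypothesis $a\leq a\cdot a$ we have $a\cdot a^{\leq k}\equiv a^{k+1}$ for every $k\geq 0$. We prove it by induction on $k$, deriving the two inequations separately.

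\textbf{Right-to-left inclusion.} The direction $a^{k+1}\leqq_{\ComKA,H} a\cdot a^{\leq k}$ needs no hypothesis. Distributivity gives
\[
a\cdot a^{\leq k}\equiv_{\ComKA} a+a^2+\dots+a^{k+1},
\]
and $a^{k+1}$ is one of the summands. The semilattice axiom $y\leq x+y$ then concludes.

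\textbf{Left-to-right inclusion.} We must show $a+a^2+\dots+a^{k+1}\leqq_{\ComKA,H} a^{k+1}$. Since $+$ is a least upper bound, it suffices to derive $a^j\leqq_H a^{k+1}$ for each $1\leq j\leq k+1$.

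First we show $a^j\leqq_H a^{j+1}$ for all $j\geq 1$. Write $a^j=a\cdot a^{j-1}$ and apply the hypothesis $a\leq a\cdot a$ to the first factor. Monotonicity of $\cdot$ (the $\Sigma$-rule of \Cref{fig:rules}) gives $a\cdot a^{j-1}\leqq_H a\cdot a\cdot a^{j-1}=a^{j+1}$, using associativity from $E$.

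By transitivity, $a^j\leqq_H a^{k+1}$ follows for every $j\leq k+1$. So the inner induction is only on the gap $k+1-j$.

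Alternatively, the outer induction on $k$ can be organised directly. The case $k=0$ is $a\cdot 1\equiv a$. For the step, use $a\cdot a^{\leq k+1}\equiv a\cdot a^{\leq k}+a^{k+2}\equiv_H a^{k+1}+a^{k+2}$, and absorb $a^{k+1}$ into $a^{k+2}$ via the hypothesis and idempotence of $+$.

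\textbf{Expected difficulty.} There is essentially no obstacle. The only care needed is to apply the hypothesis inside a product, which is done through the monotonicity rule for $\cdot$ rather than by substitution, since hypotheses cannot be instantiated. Commutativity is never needed.
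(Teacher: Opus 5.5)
Your proof is correct and follows the paper's argument. Both expand $a\cdot a^{\leq k}$ by distributivity into $a+a^2+\dots+a^{k+1}$, note that $a^{k+1}$ lies below this sum without using the hypothesis, and bound each $a^j$ with $1\leq j\leq k+1$ by $a^{k+1}$ through repeated use of $a\leq a\cdot a$ under the monotonicity rule for $\cdot$.
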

\begin{proof}
  Observe that $a a^{\leq k} \equiv a+aa+\dots + a^{k+1}$. Via repeated applications of $H$, we know that $a^j\leqq a^k$ for all $k\geq 1$ and $1\leq j\leq k$, from which we can conclude $aa^{\leq k} \leqq a^{k + 1}$. It is trivial to prove $a^{k + 1}\leqq aa^{\leq k}$.\qedhere
\end{proof}

\begin{lemma}
  \label{lem:akstarbelow}
  For all $k\geq 0$ and for all terms $u$, we have $a(ua^k)^* \equiv_{\ComKA,H} a(ua^{\leq k})^*$.
\end{lemma}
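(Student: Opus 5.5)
The plan is to prove the two inclusions separately, using the star-induction rule of $\ComKA$ together with \Cref{lem:akbelow} for the nontrivial one.

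\emph{Easy direction.} We have $a^k\leqq_{\ComKA} a^{\leq k}$, since $a^k$ is one of the summands of $a^{\leq k}=1+a+\dots+a^k$. Monotonicity of $\cdot$ and of ${}^*$ then gives $a(ua^k)^*\leqq_{\ComKA} a(ua^{\leq k})^*$. This direction does not use $H$.

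\emph{Hard direction.} We must show $a(ua^{\leq k})^*\leqq_{\ComKA,H} a(ua^k)^*$. By commutativity, the left-hand side equals $(ua^{\leq k})^*\cdot a$. We use the strengthened star-induction axiom of (left-handed) Kleene algebra recalled in \Cref{example:countermodel}, which holds in $\ComKA$: from $h+e\cdot f\leq f$ infer $e^*\cdot h\leq f$. We instantiate it with $e=ua^{\leq k}$, $h=a$ and $f=a(ua^k)^*$. It then suffices to establish two inequations.
\begin{enumerate}
\item $a\leqq a(ua^k)^*$. This follows from $1\leqq (ua^k)^*$.
\item $ua^{\leq k}\cdot a(ua^k)^*\leqq_{\ComKA,H} a(ua^k)^*$. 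Using commutativity, we rewrite the left-hand side as $u\,(a a^{\leq k})\,(ua^k)^*$. By \Cref{lem:akbelow}, $aa^{\leq k}\equiv_{\ComKA,H} a^{k+1}$, so the left-hand side is equivalent to $u a^{k+1}(ua^k)^*$. Commuting again, this is $a\,(ua^k)(ua^k)^*$, and the unfolding $(ua^k)(ua^k)^*\leqq (ua^k)^*$ bounds it by $a(ua^k)^*$.
\end{enumerate}
The two inequations give $a+ua^{\leq k}\cdot a(ua^k)^*\leqq_{\ComKA,H} a(ua^k)^*$. Star induction then yields $(ua^{\leq k})^*a\leqq_{\ComKA,H} a(ua^k)^*$, which is the claim.

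\emph{Main obstacle.} The only delicate point is the choice of the induction invariant. A naive induction on $(ua^{\leq k})^*$ alone, with invariant $(ua^k)^*$, fails: the lower powers $a^j$ for $j<k$ cannot be absorbed without a spare copy of $a$. Carrying the prefix $a$ in the invariant $f=a(ua^k)^*$ is exactly what allows \Cref{lem:akbelow} to lift each $a^{\leq k}$ to $a^{k+1}$. Commutativity is used to place $a$ next to $a^{\leq k}$ and to move the extra $a$ back out afterwards.
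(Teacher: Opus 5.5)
Your proof is correct and follows the paper's argument: the same easy direction via $a^k\leqq a^{\leq k}$, the same invariant $a(ua^k)^*$, and the same use of \Cref{lem:akbelow} followed by unfolding $(ua^k)^*$. The only cosmetic difference is that you first commute $a$ to the right and apply the rule $h+e\cdot f\leq f\rightarrow e^*\cdot h\leq f$, whereas the paper applies $x+y\cdot z\leq y\rightarrow x\cdot z^*\leq y$ directly; under commutativity the two rules are interchangeable.
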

\begin{proof}
  First, we observe $a(ua^k)^* \leqq a(ua^{\leq k})^*$ since $a^k \leqq a^{\leq k}$.
  
  Next, we prove $a(ua^{\leq k})^* \leqq a(ua^k)^*$ using the following implicational axiom from commutative KA:
  $$
    x + y \cdot z \leqq y \implies x \cdot z^* \leqq y.
    $$
  To conclude, we need to derive
    \begin{align*}
      a + a(ua^k)^*(ua^{\leq k})
    &\leqq   a + uaa^{\leq k}(ua^k)^*  \tag{$ \ComKA \vdash e\cdot f = f\cdot e \quad \forall e,f$}\\
    % &\equiv  a + uaa^{\leq k}(ua^k)^*\\
    &\leqq  a + ua^{k+1}(ua^k)^* \tag{\Cref{lem:akbelow}}\\
    &\leqq  a(1 + ua^{k}(ua^k)^*)\\
    &\leqq  a(ua^k)^* \tag{$ \ComKA \vdash e^* = 1+e\cdot e^* \quad \forall e$}
  \end{align*}
\end{proof}

\end{textAtEnd}

\begin{lemmaE}
  \label{ex:a<aareduction}
  \prooflink
   $(\ComKA,\set{a\leq a\cdot a})$ reduces to $(\ComKA,\emptyset)$.  
\end{lemmaE}
\begin{proof}[Proof sketch]
  Let $H \eqdef \{a \leq a\cdot a\}$. For all languages $L$, we have
  $$
  H^*L = L \cup \{ a^ku \mid a^{k'}u \in L, 1 \leq k < k' \}.
  $$
  In cKA, all expressions are provably equivalent to an expression of the form $e_0 + \dots + e_n$, where $e_i$ is of the form $w_0 w_1^* \dots w_n^*$ and all $w_i$ are distinct finite multisets~\cite{pilling1970algebra, conway1971regular}. 
 For $H^*$, $H^*(L \cup L') = H^*(L) \cup H^*(L')$ (cf.\ \cite[Lemma 3.1]{pous2024tools} for a similar argument); hence we construct an $e_i'$ for each $e_i$ and define $r(e)=e_0'+\dots + e_n'$.

  For $e_i$ of the form $w_0 w_1^* \dots w_n^*$, we treat the cases where $a \in w_0$, and $a \not\in w_0$ separately. The latter is in the appendix. 
  If $a \in w_0$, we use commutativity to obtain the equivalent expression
  $
   (u_0a^{k_0})( u_1 a^{k_1} )^* \cdots (u_m a^{k_{m}} )^*v_0^* \cdots v_{m'}^*
  $
  with each $k_i\geq 1$.
  Then we define
  \begin{align*}
    e_i' \eqdef (u_0 aa^{\leq k_0 - 1})( u_1 a^{\leq k_1} )^* \cdots (u_m a^{\leq k_{m}} )^*v_0^* \cdots v_{m'}^*,
  \end{align*}
  where $a^{\leq k} \eqdef 1 + a + \dots + a^k$.

  We show that $r$ yields a reduction in the appendix; here we provide intuition through an example.
  Let $e=aa(aa)^*$, whose language is $\set{ a^k \mid k\text{ even and } k\geq 2 }$ with $H$-closure
  $
    \set{ a^k\mid k\geq 1 }
  $. Using the naive axioms we derive that $e'\eqdef(a(1+a))(1+a+aa)^* =(a+aa)(1+a+aa)^*$, and then it is easy to see that the language interpretation of $e'$ coincides with the $H$-closure. We finally derive from the naive axioms and the hypothesis that $e=e'$:
  \begin{align*}
    aa(aa)^* \leq (a+aa)(1+a+aa)^* \leq (aa+aa)(1+aa+aa)^* =aa (1+aa)^*=aa(aa)^* \tag*\qedhere
  \end{align*}
\end{proof}
\begin{proofE}
  In cKA, all expressions are provably equivalent to an expression of the form $e_0 + \dots + e_n$, where $e_i$ is of the form $w_0 w_1^* \dots w_n^*$ and all $w_i$ are distinct finite multisets~\cite{pilling1970algebra, conway1971regular}. 

  For $e_i$ of the form $w_0 w_1^* \dots w_n^*$, if $a \in w_0$, we use commutativity to obtain the equivalent expression
  $
   (u_0a^{k_0})( u_1 a^{k_1} )^* \cdots (u_m a^{k_{m}} )^*v_0^* \cdots v_{m'}^*
  $
  with each $k_i\geq 1$.
  Then we define
  \begin{align*}
    e_i' \eqdef (u_0 aa^{\leq k_0 - 1})( u_1 a^{\leq k_1} )^* \cdots (u_m a^{\leq k_{m}} )^*v_0^* \cdots v_{m'}^*,
  \end{align*}

  On the other hand, if $a\notin w_0$, then $e_i$ is equivalent to $w_0( u_1 a^{k_1} )^* \dots (u_m a^{k_{m}} )^*v_0^* \dots v_{m'}^*$ and we define
    $$
      e_i'\eqdef w_0 \big(1 + (u_1 aa^{\leq k_1 - 1} + \dots + u_m aa^{\leq k_m - 1})( u_1 a^{\leq k_1} )^* \cdots (u_m a^{\leq k_{m}} )^*\big)v_0^* \cdots v_{m'}^*.
    $$

Take $r(e)= e_0' + \dots + e_n'$, $i$ the identity function and $\transformation$ an inclusion. We must check:
    \begin{itemize}
      \item $e \equiv f \implies e \equiv_{H} f$\:;
      \item $ r(e) \equiv_{H} e$\:;
      \item $H^*\interp{e} = \interp{r(e)}$.
    \end{itemize} 
The first point is satisfied trivially by \Cref{lem:derivation strengthening}. For the second point, it suffices to show that $e_i \equiv_H e_i'$. We first consider the case where $e_i$ is provably equivalent to $w_0 w_1^* \dots w_n^*$ with $a\in w_0$. What remains to show is 
$$
(u_0a^{k_0})( u_1 a^{k_1} )^* \cdots (u_m a^{k_{m}} )^*\equiv_H(u_0 aa^{\leq k_0 - 1})( u_1 a^{\leq k_1} )^* \cdots (u_m a^{\leq k_{m}} )^*
$$

We just show the first step of the derivation:
\begin{align*}
 (u_0a^{k_0})( u_1 a^{k_1} )^* 
      &\equiv_H (u_0aa^{\leq k_0 - 1})( u_1 a^{k_1} )^* \tag{\Cref{lem:akbelow}}\\
      &\equiv_H (u_0a^{\leq k_0 - 1})a( u_1 a^{k_1} )^* \tag{$ \ComKA \vdash e\cdot f = f\cdot e \quad \forall e,f$}\\
      &\equiv_H (u_0a^{\leq k_0 - 1})a( u_1 a^{\leq k_1} )^* \tag{\Cref{lem:akstarbelow}}
\end{align*}

We can then repeatedly move the single $a$ to the right using commutativity and apply~\Cref{lem:akstarbelow}, and eventually move that single $a$ back to $u_0a^{\leq k_0 - 1}$ to obtain the result.

For $e_i$ provably equivalent to $w_0 w_1^* \dots w_n^*$ with $a\notin w_0$, what remains to show is
$$
( u_1 a^{k_1} )^* \cdots (u_m a^{k_{m}} )^*\equiv_H 1 + (u_1 aa^{\leq k_1 - 1} + \dots + u_m aa^{\leq k_m - 1})( u_1 a^{\leq k_1} )^* \dots (u_m a^{\leq k_{m}} )^* 
$$
    We show the derivation for $m=2$, but the derivation is the same for any $m$:
    \begin{align*}
      ( u_1 a^{k_1} )^* \cdot (u_2 a^{k_{2}} )^* 
      &\equiv_H ( u_1 a^{k_1}  + u_2 a^{k_{2}} )^* \tag{\Cref{lem:sumandprodunderstar}}\\
      &\equiv_H 1 + (u_1 a^{k_1}   + u_2 a^{k_{2}}) ( u_1 a^{k_1}  + u_2 a^{k_{2}} )^* \tag{$ \ComKA \vdash e^* = 1+e\cdot e^* \quad \forall e$}\\
      &\equiv_H 1 + (u_1 a^{k_1} + u_2 a^{k_{2}}) ( u_1 a^{k_1} )^* \cdot (u_2 a^{k_{2}} )^* \tag{\Cref{lem:sumandprodunderstar}}\\
      &\equiv_H 1 + a(u_1 a^{k_1 - 1}  + u_2 a^{k_{2} - 1}) ( u_1 a^{k_1} )^* \cdot (u_2 a^{k_{2}} )^* \\
      &\equiv_H 1 + (u_1 a^{k_1 - 1} + u_2 a^{k_{2} - 1}) a( u_1 a^{k_1} )^* \cdot (u_2 a^{k_{2}} )^* \\    
      &\equiv_H 1 + (u_1 a^{k_1 - 1}   + u_2 a^{k_{2} - 1}) a( u_1 a^{\leq k_1} )^* \cdot (u_2 a^{k_{2}} )^* \tag{\Cref{lem:akstarbelow}}\\
      % &\:\:\:\vdots\\
      % &\equiv_H 1 + (u_1 a^{k_1 - 1} + \dots  + u_m a^{k_{m} - 1}) ( u_1 a^{\leq k_1} )^* \dots a(u_m a^{k_{m}} )^* \\
      % &\equiv_H 1 + (u_1 a^{k_1 - 1}   + u_2 a^{k_{2} - 1}) ( u_1 a^{\leq k_1} )^* \cdot a(u_2 a^{\leq k_{2}} )^* \\
      &\equiv_H 1 + (u_1 a^{k_1 - 1}   + u_2 a^{k_{2} - 1}) ( u_1 a^{\leq k_1} )^* \cdot a(u_2 a^{\leq k_{2}} )^* \tag{\Cref{lem:akstarbelow}}\\
      % &\equiv_H 1 + a(u_1 a^{k_1 - 1}   + u_2 a^{k_{2} - 1}) ( u_1 a^{\leq k_1} )^* \cdot (u_2 a^{\leq k_{2}} )^* \\
      &\equiv_H 1 + (u_1 a^{k_1}  + u_2 a^{k_{2}}) ( u_1 a^{\leq k_1} )^* \cdot (u_2 a^{\leq k_{2}} )^* \\
      &\equiv_H 1 + (u_1 aa^{\leq k_1 - 1} + u_2 aa^{\leq k_{2} - 1}) ( u_1 a^{\leq k_1} )^* \cdot (u_2 a^{\leq k_{2}} )^* \tag{\Cref{lem:akbelow}}
    \end{align*}

  For the third point, we first show that $\interp{e_i'} = H^*\interp{e_i}$ for all $i$.
  Observe that for all languages $L$, we have
  $$
  H^*L = L \cup \{ a^ku \mid a^{k'}u \in L, 1 \leq k < k' \}.
  $$
    If $w_0$ contains $a$, then $e_i$ is provably equivalent to $
   (u_0a^{k_0})( u_1 a^{k_1} )^* \cdots (u_m a^{k_{m}} )^*v_0^* \cdots v_{m'}^*
  $, so we compute
    $
    H^* \interp{e_i} \equiv \set{ a^k u_0 u_1^{x_1} \dots u_m^{x_m}v_0^{y_0} \dots v_{m'}^{y_{m'}} \mid 1 \leq k \leq k_0 + k_1x_1 + \dots + k_mx_m, x_i, y_i \geq 0}.
    $
    In this case
    $
   e_i' \eqdef (u_0 aa^{\leq k_0 - 1})( u_1 a^{\leq k_1} )^* \dots (u_m a^{\leq k_{m}} )^*v_0^* \dots v_{m'}^*
    $.

    Observe that 
    $$\interp{(u a^{\leq k})^*} = \set{ u^x a^z \mid 0 \leq x, 0 \leq z \leq xk }$$.
    Hence
    $$
    \interp{e_i'} 
    = \set{ (u_0 a^{z_0})( u_1^{x_1} a^{z_1} ) \dots ( u_m^{x_m} a^{z_m} )v_0^{y_0} \dots 
    v_{m'}^{y_{m'}}
    \mid
    1 \leq z_0 \leq k_0, x_i, y_i \geq 0, 0 \leq z_i \leq k_ix_i
    }.
    $$
    Grouping together the $a$'s we obtain
    $$
    \interp{e_i'} 
    = \set{ a^{ z_0 + z_1 + \dots z_m}u_0u_1^{x_1} \dots u_m^{x_m} v_0^{y_0} \dots v_{m'}^{y_{m'}}
    \mid
    1 \leq z_0 \leq k_0, x_i, y_i \geq 0, 0 \leq z_i \leq k_ix_i
    }.
    $$
    Since $1 \leq z_0 \leq k_0$ and $0 \leq z_i \leq k_ix_i$, we have that $1 \leq z_0 + z_1 + \dots z_m \leq  k_0 + k_1x_1 + \dots + k_mx_m$, so $\interp{e_i'} = H^*\interp{e_i}$.
  
    If $w_0$ does not contain $a$ we have $e_i$ provably equivalent to $ w_0( u_1 a^{k_1} )^* \dots (u_m a^{k_{m}} )^*v_0^* \dots v_{m'}^*$ and we compute
    $$
      H^* \interp{e_i} = \set{ w_0a^k u_1^{x_1} \dots u_m^{x_m}v_0^{y_0} \dots v_{m'}^{y_{m'}} \mid 1 \leq k \leq k_1x_1 + \dots + k_mx_m, x_i, y_i \geq 0}.
    $$
    In this case 
    $$
        e_i' \eqdef w_0 \big(1 + (u_1 aa^{\leq k_1 - 1} + \dots + u_m aa^{\leq k_m - 1})( u_1 a^{\leq k_1} )^* \dots (u_m a^{\leq k_{m}} )^*\big)v_0^* \dots v_{m'}^*,
    $$
    and by similar computation as before we have that $\interp{e_i'} = H^*\interp{e_i}$.
    
    For $H^*$, $H^*(L \cup L') = H^*(L) \cup H^*(L')$~(\Cref{lem:closcontinuous}). Thus we can derive that 
    \begin{align*}
      \interp{r(e)} = \interp{e_0'} \cup \dots \cup \interp{e_n'} 
      % \interp{e_0' + \dots + e_n'} 
      % &= \interp{e_0'} \cup \dots \cup \interp{e_n'} \\
      = H^*\interp{e_0} \cup \dots \cup H^*\interp{e_n} 
      = H^*(\interp{e_0} \cup \dots \cup \interp{e_n}) 
      % = H^*\interp{e_0 + \dots + e_n}
      = H^*\interp{e}\enspace.\tag*\qedhere
    \end{align*}  
\end{proofE}

\noindent Let us finally illustrate how we can assemble the previous reductions in a modular way.
Using the tools from Appendix~\ref{app:reductions}, we can strengthen the first two examples above into:
\begin{lemmaE}\label{lem:combine}
  \prooflink
    For all set of hypotheses $H$ and all expressions $e$, the sets $(\ComKA,H \cup \{e \leq 0\})$ and $(\ComKA,H \cup \{e \leq 1\})$ both reduce to $(\ComKA,H)$.
\end{lemmaE}
\begin{proofE}
    This is an application of \Cref{lem:combining two reductions}. 
    We take $H_1 = \{e \leq 0\}$ or $H_1 = \set{e \leq 1}$ where the reduction is provided in \Cref{ex:e<0reductionCKA} and \Cref{ex:e<1reductionCKA} respectively.
    To apply \Cref{lem:combining two reductions}, the only condition left to verify is $(H_1 \cup H)^* \subseteq H^* \circ H_1^*$, which is equivalent to $H_1^* \circ H^* \subseteq H^* \circ H_1^*$ by \Cref{lem:closurecomp}.
    For $H_1 = \set{e \leq 0}$, we note $H_1^*(L) = L \cup \interp{e \cdot \top}$ and compute
    \begin{align*}
      H_1^* (H^* L) =  H^*(L) \cup \interp{e \cdot \top} \subseteq H^*(L) \cup H^*\interp{e \cdot \top} \subseteq H^*(L \cup \interp{e \cdot \top}) = H^* (H_1^* L)
    \end{align*}
    For $H_1 = \set{e \leq 1}$, we note $H_1^*(L) = L \cdot \interp{e^*}$ and compute via~\Cref{lem:closure commutes with distributive symbol} that
    \begin{align*}
      H_1^* (H^* L) =  H^*(L) \cdot \interp{e^*} \subseteq H^*(L \cdot \interp{e^*}) = H^* (H_1^* L)\enspace.\tag*\qedhere
    \end{align*}
\end{proofE}
Since reductions can be composed, we deduce the following corollary:
\begin{corollary}
  For all $e_0,e_1$, $(\ComKA, \set{e_0 \leq 0, e_1 \leq 1, a \leq a\cdot a})$ reduces to $(\ComKA,\emptyset)$.
\end{corollary}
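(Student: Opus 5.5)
The corollary follows by chaining reductions, using \Cref{lem:combine} twice and \Cref{ex:a<aareduction} once. Reductions compose: if $(r,i,\transformation)$ goes from $R$ to $R'$ and $(r',i',\transformation')$ goes from $R'$ to $R''$, then $(r'\circ r,\, i\circ i',\, \transformation'\circ\transformation)$ goes from $R$ to $R''$, as stated in the appendix fact on composition.

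\textbf{First step.} Instantiate \Cref{lem:combine} with $H\eqdef\set{e_1\leq 1, a\leq a\cdot a}$ and $e\eqdef e_0$. This gives that $(\ComKA,\set{e_0\leq 0,e_1\leq 1,a\leq a\cdot a})$ reduces to $(\ComKA,\set{e_1\leq 1,a\leq a\cdot a})$. One point needs checking: \Cref{ex:e<0reductionCKA} assumes $X$ finite, so that $\top$ is an expression. I would either carry this assumption implicitly, since the statement inherits it from \Cref{lem:combine}, or note that it suffices to take $\top$ over the finitely many variables occurring in the expressions at hand.

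\textbf{Second step.} Instantiate \Cref{lem:combine} again, now with $H\eqdef\set{a\leq a\cdot a}$ and $e\eqdef e_1$. This reduces $(\ComKA,\set{e_1\leq 1,a\leq a\cdot a})$ to $(\ComKA,\set{a\leq a\cdot a})$.

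\textbf{Third step.} \Cref{ex:a<aareduction} reduces $(\ComKA,\set{a\leq a\cdot a})$ to $(\ComKA,\emptyset)$.

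Composing these three reductions gives the claim. No step is a genuine obstacle, because all the work sits in \Cref{lem:combine} and \Cref{ex:a<aareduction}. The only delicate point is the order of elimination. The hypothesis $a\leq a\cdot a$ must be removed last, since \Cref{lem:combine} only removes the hypotheses $e\leq 0$ and $e\leq 1$ in the presence of an arbitrary remaining set $H$. There is no analogous result for removing $a\leq aa$ in the presence of other hypotheses.
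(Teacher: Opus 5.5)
Your proposal is correct and is exactly the paper's argument: the paper obtains the corollary by applying \Cref{lem:combine} twice (discharging $e_0\leq 0$, then $e_1\leq 1$), then \Cref{ex:a<aareduction}, and composing the three reductions. On finiteness of $X$, keep your first option, carrying the assumption inherited from \Cref{ex:e<0reductionCKA} as the paper implicitly does; taking $\top$ only over the variables occurring in the expressions at hand fails, because $H^*$ adds $C\interp{e_0}$ for contexts $C$ over all of $X$, so $\interp{r(e)}$ would no longer be determined by $H^*\interp{e}$ and no single $\transformation$ could meet the third condition of \Cref{def:reduction}.
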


\section{Conclusion}
\label{sec:ccl}
We have developed a framework to study algebras ordered by complete lattices in the presence of hypotheses. This framework captures existing work on Kleene algebra and bi-Kleene algebra with hypotheses, and can be used to study new structures in the presence of hypotheses. In particular, we instantiated our framework to regular tree languages and commutative KA and proved new completeness results for particular hypotheses. These structures had not been studied with additional hypotheses in the literature. 

We conjecture that a completeness result can be obtained for commutative KA and regular tree languages for the hypothesis $a\leq w$, where $a$ is a letter and $w$ a multiset or a tree respectively, for which we can use the developed framework and theory. In addition, we want to instantiate our framework to Kleene algebra with additional continuous symbols to prove a completeness result that captures common preliminary results in the completeness proofs of bi-KA, concurrent KA and identity-free Kleene lattices~\cite{struth2014bikleene,dp:concur18:kl}. Subsequently we can then study these algebras in the presence of hypotheses using the developed theory. 

Finally, we want to investigate the possibility of varying the signature of algebras when providing reductions, where in this paper we have left it fixed.

\bibliography{long,main}

\clearpage
\appendix

\section{Note on variables}
\label{app:variables}

To alleviate notation, we have fixed throughout the main text a set $X$ of variables:
terms, atoms, and languages may only contain variables in $X$, and similarly for closed expressions (beside recursion variables from $R$, which must be bound).

However, to perform proofs by induction on expressions and still be able to restrict ourselves to closed expressions, it is convenient to let this set $X$ vary. Indeed, when we encounter a closed expression of the shape $\mu x.e$, we usually want to use the induction hypothesis on $e$, which is not closed: it is an $x$-expression, and $x$ belongs to $R$. In those cases, we see $e$ as a closed expression with variables in $X\cup \set x$ and recursion variables in $R\backslash \set x$ (up to alpha conversion if $x$ was reused in fixpoint subexpressions of $e$---this is where we need $R$ to be countably infinite).
Formally, we let $X$ vary in such inductions: we prove properties of the shape ``for all sets $X$ and all closed expressions $e$ over $X$, \dots''.

\longnotations

For the sake of precision, we sometimes make this dependency on $X$ explicit in the following appendices, thus writing $\term X$, $\expr X$, $\atom X$ and $\lang X$ for the sets of terms, expressions, atoms, and languages with variables in $X$. We do not track the dependency on the set $R$ of recursion variables: it is irrelevant in most cases.

\shortnotations

\renewcommand\prooflink{}
\renewcommand\depE{}

\section{Proofs for \Cref{sec:framework} (Framework)}
\label{app:framework}
\printProofs[framework]

\section{Proofs for \Cref{sec:closure} (Closed languages)}
\label{app:closure}
\printProofs[closure]

\section{Proofs for \Cref{sec:axiomatisation} (Axiomatisations)}
\label{app:axiomatisation}
\printProofs[axiomatisation]

\section{Proofs for \Cref{sec:reductions} (Reductions)}
\label{app:reductions}
\printProofs[reductions]

\section{Proofs for \Cref{sec:examples} (Examples)}
\label{app:examples}
\printProofs[examples]

\section{General lattice theory}
\label{app:lattice}

In this last appendix, we restate and slightly generalise some of the general theory regarding complete lattices found in~\cite[Appendix~A]{pous2024tools}.

Throughout this section, we let $(\L, \bigvee)$ be a complete lattice.
For $x,y \in \L$, we write $x + y = \bigvee \{x,y\}$ and $x \leq y$ if and only if $x + y = y$.

\begin{theorem}[Knaster-Tarski, \cite{tarski1955lattice}]
  \label{thm:knaster-tarski}
  Every monotone function $f \colon \L \to \L$ admits a least fixpoint $\mu f$.
  This least fixpoint is also the least pre-fixpoint: for all $x\in L$, $fx\leq x$ implies $\mu f\leq x$.
\end{theorem}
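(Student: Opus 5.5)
The statement to prove is Knaster--Tarski (\Cref{thm:knaster-tarski}). I will construct the least fixpoint explicitly as the meet of all pre-fixpoints. Set
\[
P \eqdef \{x\in\L \mid fx\leq x\}
\qquad\text{and}\qquad
m \eqdef \bigwedge P .
\]
Here meets exist in any complete lattice: the meet of a subset is the join of its set of lower bounds, $\bigwedge P = \bigvee\{y \mid \forall x\in P,~y\leq x\}$. Note that $P$ is nonempty, since the top element $\bigvee\L$ belongs to it.

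\textbf{Step 1: $m$ is a pre-fixpoint.} For every $x\in P$ we have $m\leq x$. By monotonicity, $fm\leq fx$, and since $x\in P$, $fx\leq x$. Hence $fm$ is a lower bound of $P$, and therefore $fm\leq m$.

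\textbf{Step 2: $m$ is a fixpoint.} From $fm\leq m$ and monotonicity we get $f(fm)\leq fm$, so $fm\in P$. Since $m$ is a lower bound of $P$, this gives $m\leq fm$. Antisymmetry then yields $fm=m$.

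\textbf{Step 3: $m$ is the least pre-fixpoint, hence the least fixpoint.} Let $x$ satisfy $fx\leq x$. Then $x\in P$, so $m\leq x$ by definition of the meet. This is exactly the induction principle in the statement. Every fixpoint is in particular a pre-fixpoint, so $m$ lies below every fixpoint. We may therefore set $\mu f\eqdef m$, and both claims of the statement follow.

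There is no real obstacle here. The only point requiring care is that the paper presents complete lattices through joins, so the meet must be obtained as the join of lower bounds, as above. If one prefers to avoid meets, there is a dual route: take $\bigvee\{y \mid \forall x\in P,~y\leq x\}$ directly, and check that it is itself a lower bound of $P$ (each element of $P$ is an upper bound of that set). The remaining steps are unchanged.
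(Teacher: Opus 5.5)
Your proof is correct: defining $\mu f$ as the meet of all pre-fixpoints is exactly Tarski's original argument, which the paper does not reproduce and simply cites as \cite{tarski1955lattice}. You also correctly obtain the meet as the join of lower bounds, which fits the paper's join-based definition of complete lattices.
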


Recall that the space of monotone functions $\L \to \L$ again forms a complete lattice. 
We write $1 \colon \L \to \L$ for the identity function on $\L$, and given monotone functions $f,g \colon \L \to \L$, we write $fg \colon \L \to \L$ for their composition.
A \emph{closure} is a monotone function $c \colon \L \to \L$ such that $1 \leq c$ and $cc \leq c$.
Given a monotone function $f \colon \L \to \L$, we write $f^* \colon \L \to \L$ for the least closure above $f$, which can be defined as the function $x\mapsto \mu y.x+fy$. 

For every monotone function $f\colon \L \to \L$, we have $f\bigvee S \geq \bigvee_{x\in S}f(x)$ for all subsets $S\subseteq \L$.
As mentioned in \Cref{sec:preliminaries}, a function $f\colon \L \to \L$ is \emph{continuous}\footnote{In \cite{pous2024tools}, those are called \emph{linear}, a term which we prefer to avoid in the present context.} if it preserves all joins: $f\bigvee Y=\bigvee_{x\in Y}f(x)$ for all subsets $Y\subseteq \L$. Every continuous function is monotone.

The following lemmas from~\cite[Appendix~A]{pous2024tools} make it possible to ease proofs of partial commutation between functions.
\begin{lemma}
  For all monotone functions $f,g,h$, we have
  \begin{align*}
    gf \leq fh \implies g^*f \leq fh^*\enspace.
  \end{align*}
\end{lemma}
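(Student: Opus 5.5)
Since $f^*$ is defined pointwise as $f^*(x)=\mu y.\,x+fy$, and $g^*$ likewise, I will show $g^*(f x)\leq f(h^* x)$ for every $x\in\L$. The left-hand side is the least fixpoint of $y\mapsto fx+gy$, so by \Cref{thm:knaster-tarski} it is also its least pre-fixpoint. It is therefore enough to check that $z\eqdef f(h^*x)$ is a pre-fixpoint, that is, $fx+gz\leq z$.

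I will bound the two summands separately.

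First, $h^*$ is a closure, so $x\leq h^*x$. Since $f$ is monotone, this gives $fx\leq f(h^*x)=z$.

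Second, the hypothesis $gf\leq fh$ gives $g(f(h^*x))\leq f(h(h^*x))$. Next I need $h h^*\leq h^*$. This holds because $h^*x=\mu y.\,x+hy$ is a fixpoint, so $h^*x=x+h(h^*x)$, and hence $h(h^*x)\leq h^*x$. Equivalently, $h\leq h^*$ and $h^*h^*\leq h^*$. Applying the monotone $f$ then yields $gz\leq f(h^*x)=z$.

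Combining both bounds gives $fx+gz\leq z$, and hence $g^*(fx)\leq f(h^*x)$ for all $x$. This is exactly the inequality $g^*f\leq fh^*$ in the pointwise order.

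There is no real obstacle here. The only point needing care is to apply the least-pre-fixpoint principle to the correct function $y\mapsto fx+gy$ with parameter $fx$. The identity $hh^*\leq h^*$ should be obtained from the unfolding of $h^*x$, not from any continuity assumption, since $f$, $g$ and $h$ are only assumed monotone.
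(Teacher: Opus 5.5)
Your proof is correct. Showing that $f(h^*x)$ is a pre-fixpoint of $y\mapsto fx+gy$, using $x\leq h^*x$ and $h(h^*x)\leq h^*x$ from the unfolding of $h^*x$, is the standard Knaster--Tarski argument. The paper does not reprove this lemma but imports it from \cite[Appendix~A]{pous2024tools}, where it rests on this same least-pre-fixpoint reasoning; as you noted, it needs no continuity, unlike the companion lemma where $f$ must be pushed through the join $x+gy$.
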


\begin{lemma}
  \label{lem:fixed point respects leq}
  For all continuous functions $f$ and all monotone functions $g,h$, we have
  \begin{align*}
    fg \leq hf \implies fg^* \leq h^*f\enspace.
  \end{align*}
\end{lemma}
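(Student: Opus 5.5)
We have to show: if $f$ is continuous, $g,h$ monotone, and $fg\leq hf$, then $fg^*\leq h^*f$. We use the explicit description of the least closure, $g^*(x)=\mu y.\,x+gy$, and argue pointwise: fix $x\in\L$ and aim at $f(g^*x)\leq h^*(fx)$.

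Consider the set $P\eqdef\set{y\in\L\mid f(y)\leq h^*(fx)}$. We would like to show $g^*x\in P$. The key observation is that, because $f$ is continuous, $P$ is closed under arbitrary joins: $f(\bigvee S)=\bigvee_{y\in S}f(y)\leq h^*(fx)$ whenever every $y\in S$ is in $P$. Hence $P$ has a greatest element $p\eqdef\bigvee P$, and $p\in P$. It therefore suffices to prove $g^*x\leq p$. By Knaster–Tarski (\Cref{thm:knaster-tarski}), $g^*x$ is the least pre-fixpoint of $y\mapsto x+gy$, so it is enough to check $x+gp\leq p$, i.e., that both $x$ and $gp$ belong to $P$.

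The two membership checks are direct. First, $f(x)\leq h^*(fx)$ since $1\leq h^*$. Second, using the hypothesis, monotonicity of $h$, the fact that $p\in P$, and $hh^*\leq h^*h^*\leq h^*$ (as $h\leq h^*$ and $h^*$ is a closure), we get
\begin{align*}
  f(gp)\leq h(fp)\leq h(h^*(fx))\leq h^*(fx)\enspace.
\end{align*}
Thus $x,gp\in P$, so $x+gp\leq p$, whence $g^*x\leq p$ and, by monotonicity of $f$, $f(g^*x)\leq f(p)\leq h^*(fx)$.

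The only delicate point is the step where continuity is used: for merely monotone $f$ the set $P$ need not be closed under joins, so there would be no largest candidate $p$ and the least pre-fixpoint argument could not be applied. Everything else is routine manipulation of closures.
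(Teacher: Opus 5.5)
Your proof is correct. Continuity is used exactly where it is needed: it makes $P$ closed under all joins (including the empty one), so $P$ has a largest element $p$. The least pre-fixpoint property of $g^*x=\mu y.\,x+gy$, together with $f(gp)\leq h(fp)\leq hh^*(fx)\leq h^*(fx)$, then finishes the argument.

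The paper gives no proof of its own here, since the lemma is imported from Appendix~A of \cite{pous2024tools}. Your argument is the standard adjunction argument written pointwise. Your $p$ is $f^\flat(h^*(fx))$, where $f^\flat(z)=\bigvee\{y\mid f(y)\leq z\}$ is the right adjoint of $f$. Your two membership checks say that $1\leq f^\flat h^*f$ and $g f^\flat h^*f\leq f^\flat h^*f$. This is the pointwise form of the point-free observation that $f^\flat h^*f$ is a closure above $g$ (idempotence coming from $ff^\flat\leq 1$), hence above $g^*$, which transposes to $fg^*\leq h^*f$.
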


\begin{corollary}\label{lem:fixed point respects leq 2}
  For all continuous functions $f$, all monotone functions $g$, and all closures $c$, we have
  \begin{align*}
    fg \leq cf \implies fg^* \leq cf\enspace.
  \end{align*}
\end{corollary}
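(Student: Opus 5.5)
We prove $fg^*\leq cf$ by rewriting $cf$ as a fixpoint-based closure applied after $f$ and using \Cref{lem:fixed point respects leq}.

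\emph{Step 1: transfer the hypothesis.} Since $c$ is a closure we have $cc\leq c$. From $fg\leq cf$, \Cref{lem:fixed point respects leq} (using continuity of $f$ and monotonicity of $g$ and $c$) gives
\begin{align*}
  fg^*\leq c^*f\enspace.
\end{align*}

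\emph{Step 2: show $c^*=c$.} It is then enough to prove $c^*\leq c$, since $c^*$ is the least closure above $c$ and $c$ itself is such a closure. Concretely, using $c^*(x)=\mu y.x+cy$, we check that $y=c(x)$ is a pre-fixpoint of $y\mapsto x+cy$:
\begin{align*}
  x+c(c(x))\leq c(x)+c(x)=c(x)\enspace,
\end{align*}
because $1\leq c$ and $cc\leq c$. By Knaster--Tarski (\Cref{thm:knaster-tarski}) this yields $c^*(x)\leq c(x)$. Combining with Step 1 gives $fg^*\leq c^*f\leq cf$.

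\emph{Main point to check.} The whole argument rests on \Cref{lem:fixed point respects leq}, which is quoted from~\cite{pous2024tools}. If it had to be reproved directly, fix $x$ and show $f(g^*x)\leq c(f(x))$. Since $f$ is continuous, the set $\{y\mid f(y)\leq c(f(x))\}$ is closed under arbitrary joins, so it suffices to show that it contains $x$ and is stable under $g$, i.e.\ that it is a pre-fixpoint of $y\mapsto x+gy$.
\begin{itemize}
\item $f(x)\leq c(f(x))$ because $1\leq c$.
\item If $f(y)\leq c(f(x))$, then $f(gy)\leq c(f(y))\leq c(c(f(x)))\leq c(f(x))$, by monotonicity of $c$ and $cc\leq c$.
\end{itemize}
Hence the least fixpoint $g^*x$ belongs to this set, giving $f(g^*x)\leq c(f(x))$. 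This is the only step where continuity of $f$ is needed.
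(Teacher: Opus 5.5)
Your proposal is correct and matches the paper's intended derivation: instantiate \Cref{lem:fixed point respects leq} with $h := c$ to get $fg^* \leq c^*f$, then use $c^* = c$ because $c$ is already a closure above itself. Your supplementary direct argument is also sound: by continuity and monotonicity of $f$, the set $\{y \mid f(y) \leq c(f(x))\}$ is a principal ideal, and its top element is a pre-fixpoint of $y \mapsto x + gy$.
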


Next we generalise the notion of contextual function from \cite[Appendix~A]{pous2024tools}:
\begin{definition}\label{def:contextual}
  Let $f \colon \L^n \to \L$ be an $n$-ary function.
  A function $h \colon \L \to \L$ is \emph{contextual w.r.t. $f$} if it is monotone and
  for all $1 \leq i \leq n$ and for all $x_1,\dots,x_n\in\L$, we have:
  \begin{align*}
    f(x_1, \dots, h(x_i), \dots, x_n) \leq h(f(x_1,\dots, x_n))\enspace.
  \end{align*}
\end{definition}
When closures are contextual w.r.t. a monotone function, the above defining property can be strengthened as follows:
\begin{lemma}
    \label{lem:contextual characterization for closures}
    Let $f$ be a monotone $n$-ary function.
    A closure $c \colon \L \to \L$ is contextual w.r.t. $f$ if and only if for all $x_1,\dots,x_n\in\L$, we have
    \begin{align*}
      f(c(x_1), \dots, c(x_n)) \leq c(f(x_1,\dots, x_n))\enspace.
    \end{align*}
\end{lemma}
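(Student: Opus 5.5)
We prove the two implications separately; in both directions we only use that $c$ is a closure ($1\leq c$, $cc\leq c$) and that $f$ is monotone.

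For the right-to-left implication, assume $f(c(x_1),\dots,c(x_n))\leq c(f(x_1,\dots,x_n))$ for all $x_1,\dots,x_n$. Fix an index $i$ and elements $x_1,\dots,x_n$. Since $1\leq c$, we have $x_j\leq c(x_j)$ for every $j\neq i$. Monotonicity of $f$ in each argument then gives
\begin{align*}
  f(x_1,\dots,c(x_i),\dots,x_n)\leq f(c(x_1),\dots,c(x_i),\dots,c(x_n))\leq c(f(x_1,\dots,x_n))\enspace.
\end{align*}
A closure is monotone by definition, so $c$ is contextual w.r.t.\ $f$ (\Cref{def:contextual}).

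For the left-to-right implication, assume $c$ is contextual w.r.t.\ $f$. We replace the arguments by their closures one at a time, with an induction on the number $k$ of replaced arguments. The claim is that for all $k\leq n$ and all $x_1,\dots,x_n$,
\begin{align*}
  f(c(x_1),\dots,c(x_k),x_{k+1},\dots,x_n)\leq c(f(x_1,\dots,x_n))\enspace.
\end{align*}
The base case $k=0$ is $1\leq c$. For the step from $k$ to $k+1$, apply contextuality at position $k+1$ to the tuple $(c(x_1),\dots,c(x_k),x_{k+1},\dots,x_n)$. This gives
\begin{align*}
  f(c(x_1),\dots,c(x_{k+1}),x_{k+2},\dots,x_n)\leq c(f(c(x_1),\dots,c(x_k),x_{k+1},\dots,x_n))\enspace.
\end{align*}
By the induction hypothesis and monotonicity of $c$, the right-hand side is at most $cc(f(x_1,\dots,x_n))$. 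Since $cc\leq c$, this is at most $c(f(x_1,\dots,x_n))$. Taking $k=n$ yields the required inequality.

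Nothing here is a real obstacle; the only point needing care is the left-to-right direction. Contextuality only allows one argument to be closed at a time, so the closures must be absorbed stepwise, and it is idempotence ($cc\leq c$) that keeps the repeated applications of $c$ from accumulating.
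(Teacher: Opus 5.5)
Your proof is correct and follows the paper's argument: the converse comes from monotonicity of $f$ together with $1\leq c$, and the forward direction closes the arguments one at a time using contextuality. The only cosmetic difference is that you absorb each extra $c$ immediately via $cc\leq c$ inside an induction, whereas the paper chains the inequalities up to $c^n(f(x_1,\dots,x_n))$ and collapses the iterate at the end.
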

\begin{proof}
    If $c$ is contextual, then we have
    \begin{align*}
        f(c(x_1), c(x_2), c(x_3), \dots, c(x_n)) 
        &\leq
        c(f(x_1, c(x_2), c(x_3), \dots, c(x_n))) \\
        &\leq
        c^2(f(x_1, x_2, c(x_3), \dots, c(x_n))) \\
        &\leq
        \dots \\ 
        &\leq
        c^n(f(x_1, x_2, x_3, \dots, x_n)) \\
        &\leq
        c(f(x_1, x_2, x_3, \dots, x_n))\enspace.
    \end{align*}
    The converse implication comes from monotonicity of $f$, and $x\leq c(x)$.
\end{proof}
Finally, contextuality of $h^*$ w.r.t. continuous functions can be obtained from contextuality of $h$.
\begin{lemma}
    \label{lem:contextual implies fixed point contextual}
    Let $f \colon \L^n \to \L$ be an $n$-ary function which is continuous in each coordinate.
    If a function $h$ is contextual w.r.t. $f$, then so is $h^*$.
\end{lemma}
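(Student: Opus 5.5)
We reduce to the one-dimensional case. Fix a coordinate $i$ and the other arguments, and write $f_i(y) \eqdef f(x_1,\dots,y,\dots,x_n)$. This function is continuous by assumption. Contextuality of $h$ w.r.t.\ $f$ at coordinate $i$ then reads $f_i h \leq h f_i$ in the lattice of monotone functions, after pointwise quantification over the other arguments. Note that this is not exactly $f_i h\leq h f_i$ as functions of $y$ alone, because $h$ on the right is applied to $f_i(y)$. But this is the same thing: $h(f(x_1,\dots,x_i,\dots,x_n)) = h(f_i(x_i))$. So contextuality at coordinate $i$ is precisely $f_i\circ h \leq h\circ f_i$ for every choice of the frozen arguments.

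Next, we invoke \Cref{lem:fixed point respects leq}. Taking its $f$ to be $f_i$ (continuous) and $g=h=h$ (monotone), it turns $f_i h\leq h f_i$ into $f_i h^* \leq h^* f_i$. Unfolding, this gives $f(x_1,\dots,h^*(x_i),\dots,x_n)\leq h^*(f(x_1,\dots,x_n))$ for every $i$ and every tuple. It remains to note that $h^*$ is monotone, being a closure (defined as $x\mapsto\mu y.x+hy$, monotone in $x$ since least fixpoints are monotone). Hence $h^*$ is contextual w.r.t.\ $f$.

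If one prefers not to rely on \Cref{lem:fixed point respects leq} as a black box, the key step is a fixpoint induction. Fix $x_i$ and set $z\eqdef \mu y.\,x_i+h(y)=h^*(x_i)$. Let $P\eqdef\{y\mid f_i(y)\leq h^*(f_i(x_i))\}$; we show $z\in P$ by showing that the set $\{y\mid f_i(y)\leq c\}$, where $c\eqdef h^*(f_i(x_i))$, contains the least fixpoint. The point is that, by continuity of $f_i$, the element $y_0\eqdef\bigvee\{y\mid f_i(y)\leq c\}$ itself satisfies $f_i(y_0)\leq c$. It is therefore the largest element of that set, and it suffices to show $y_0$ is a pre-fixpoint of $y\mapsto x_i+h(y)$. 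First, $x_i\leq y_0$ because $f_i(x_i)\leq h^*(f_i(x_i))$. Second, $h(y_0)\leq y_0$ because $f_i(h(y_0))\leq h(f_i(y_0))\leq h(c)\leq h^*(c)=c$, using contextuality, monotonicity of $h$, and the closure property $hh^*\leq h^*h^*\leq h^*$.

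The main obstacle is this use of continuity. We need the join of all $y$ with $f_i(y)\leq c$ to still satisfy the bound, which is exactly where preservation of arbitrary joins (not mere monotonicity) in each coordinate is required; the rest is routine.
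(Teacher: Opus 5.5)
Your proof is correct and follows the paper's argument: freeze all coordinates but one, note that the resulting unary map $f_i$ is continuous and satisfies $f_i h \leq h f_i$, and apply \Cref{lem:fixed point respects leq} with $g=h$ to get $f_i h^* \leq h^* f_i$. Your direct fixpoint-induction argument is also correct; it inlines the proof of that imported lemma by taking the largest $y$ with $f_i(y)\leq h^*(f_i(x_i))$ (which exists by continuity) and showing it is a pre-fixpoint of $y\mapsto x_i+h(y)$, and your explicit remark that $h^*$ is monotone fills a point the paper leaves implicit.
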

\begin{proof}
    We aim to apply \Cref{lem:fixed point respects leq} by fixing all coordinates of $f$ but one arbitrarily.
    Fix $x_1, \dots, x_{n-1} \in \L$ and define the function
    \begin{align*}
      \tilde{f} \colon x \mapsto f(x_1,\dots, x, \dots, x_{n-1})\enspace.
    \end{align*}
    Since $h$ is contextual w.r.t. $f$, we have that
    $
        \tilde{f}h \leq h \tilde{f}.
    $
    Since $f$ is continuous in each coordinate, $\tilde{f}$ is continuous and we can apply \Cref{lem:fixed point respects leq}.
    This shows $\tilde{f}h^* \leq h^* \tilde{f}$, hence
    \begin{align*}
      f(x_1,\dots, h^*(x), \dots, x_{n-1}) \leq h^*(f(x_1,\dots, x, \dots, x_{n-1}))\enspace.
    \end{align*}
    Since we fixed an arbitrary coordinate, we can conclude this holds for all coordinates, hence that $h^*$ is contextual.
\end{proof}

\end{document}